\renewcommand{\IEEEQED}{\IEEEQEDopen}
\newenvironment{proof}[1][Proof]{\begin{IEEEproof}[#1]}{\end{IEEEproof}}
\newenvironment{claimproof}[1][Proof of claim]{\begin{IEEEproof}[#1]\renewcommand*{\IEEEQED}{\(\boxdot\)}}{\end{IEEEproof}}
\newenvironment{claimproof}[1][Proof of claim]{\begin{proof}[#1]}{\end{proof}}
\newtheorem{theorem}{Theorem}
\newaliascnt{lemma}{theorem}
\newtheorem{lemma}[lemma]{Lemma}
\crefname{lemma}{Lemma}{Lemmas}
\newaliascnt{corollary}{theorem}
\newtheorem{corollary}[corollary]{Corollary}
\crefname{lemma}{Corollary}{Corollaries}
\newaliascnt{proposition}{theorem}
\newtheorem{proposition}[proposition]{Proposition}
\crefname{proposition}{Proposition}{Propositions}
\newtheorem{lemma}[theorem]{Lemma}
\newtheorem{corollary}[theorem]{Corollary}
\newtheorem{proposition}[theorem]{Proposition}
\newtheorem{conjecture}{Conjecture}
\newtheorem{claim}{Claim}
\newtheorem{claimm}{Claim}
\newtheorem*{claimm}{Claim}
\DeclarePairedDelimiter\abs{\lvert}{\rvert}%
\DeclarePairedDelimiter\ceil{\lceil}{\rceil}%
\newcommand{\aaa}{\bar{a}}
\newcommand{\bb}{\bar{b}}
\newcommand{\aaai}{\bar{a}_i}
\newcommand{\bbi}{\bar{b}_i}
\newcommand{\cab}{c_{a,b}}
\newcommand{\cabi}{c_{a_i,b_i}}
\newcommand{\cinterval}{[-\min(a_i\bbi, \aaai b_i), \min(a_ib_i, \aaai\bbi)]}
\newcommand{\sabn}{(a_1, \dots, a_n, b_1, \dots, b_n)}
\newcommand{\san}{a_1, \dots, a_n}
\newcommand{\sbn}{b_1, \allowbreak \dots, b_n}
\newcommand{\spn}{(p_1, \dots, p_n)}
\newcommand{\spabn}{(p_1, \dots, \allowbreak p_n, \allowbreak a_1, \dots, a_n, \allowbreak b_1, \dots, b_n)}
\newcommand{\ph}{\varphi}
\newcommand{\phab}{\ph_{a,b}}
\newcommand{\eps}{\varepsilon}
\newcommand{\la}{\lambda}
\newcommand{\oooo}{\{(0,0),(1,1)\}}
\newcommand{\sumn}{\sum_{i=1}^n}
\newcommand{\sumnpabc}{\sumn p_i(a_i\bbi + \aaai b_i + 2c_i)}
\newcommand{\sumnpabcx}{\sumn p_i(a_i\bbi + \aaai b_i + 2c_{a_i,b_i}(x))}
\newcommand{\sumnpaabb}{\sumn p_i\left(H_2(a_i, \aaai) + H_2(b_i, \bbi)\right)}
\newcommand{\sumnphabcxs}{\sumn p_iS_4(a_i,b_i,c_{a_i,b_i}(x^*))}
\newcommand{\fpp}[2]{\frac{\partial #1}{\partial #2}}
\newcommand{\Of}{\mathcal{O}_f}
\newcommand{\pr}{\mathrm{Pr}}
\newcommand{\set}[1]{\left\{#1 \right\}}
\newcommand{\dset}[2]{\left\{#1 \colon #2\right\}}
\newcommand{\wtdn}{\widetilde{D}_n}
\newcommand{\dd}{\mathrm{d}}
\pgfplotsset{compat=1.17}
\pgfplotsset{general/.style={ thick, axis x line=middle, axis y line=none, ticks=none, width=5cm, height=6cm, ymin=-1.2, ymax=1 }}
\definecolor{red}{rgb}{.859375,.265625,.21484375}
\definecolor{blue}{rgb}{.26171875,.51953125,.95703125}
\definecolor{darkgray}{RGB}{64,64,64}
\tikzstyle{block}=[draw, rectangle, minimum height=1cm, text width=2cm, text centered, draw=darkgray, font=\large]
\tikzstyle{block_medium}=[draw, rectangle, minimum height=1.5cm, text width=2cm, text centered, draw=darkgray, font=\large]
\tikzstyle{circ}=[draw, circle, radius=2cm, text centered, draw=darkgray, font=\LARGE]
\tikzstyle{block}=[draw, rectangle, minimum height=1cm, text width=2cm, text centered, draw=darkgray, font=\small]
\tikzstyle{block_medium}=[draw, rectangle, minimum height=1.5cm, text width=2cm, text centered, draw=darkgray, font=\small]
\tikzstyle{circ}=[draw, circle, radius=2cm, text centered, draw=darkgray, font=\large]
\tikzstyle{line} = [draw, -latex]
\title{On the binary adder channel with complete feedback,\\with an application to quantitative group testing}
\author{Samuel~H.~Florin,
        Matthew~H.~Ho,
        and Zilin~Jiang%
\thanks{Manuscript received Xxxxx xx, xxxx; revised Xxxxx xx, xxxx; accepted Xxxxx xx, xxxx. Z.~Jiang was supported in part by an AMS Simons Travel Grant, and by U.S. taxpayers through NSF grant DMS-1953946.}
\thanks{S.~H.~Florin was with Greenwich High School, Greenwich, CT 06830. He is now with the Department of Mathematics, Massachusetts Institute of Technology, Cambridge, MA 02139 (email: \href{mailto:sflorin@mit.edu}{sflorin@mit.edu}).}
\thanks{M.~H.~Ho was with Palo Alto High School, Palo Alto, CA 94301. He is now with the the Department of Mathematics, Massachusetts Institute of Technology, Cambridge, MA 02139 (email: \href{mailto:mattho@mit.edu}{mattho@mit.edu}).}
\thanks{Z.~Jiang was with Massachusetts Insitute of Technology. He is now with the School of Mathematical and Statistical Sciences, and the School of Computing and Augmented Intelligence, Arizona State University, Tempe, AZ 85281 (email: \href{mailto:zilinj@asu.edu}{zilinj@asu.edu}).}
\thanks{Copyright (c) 2017 IEEE. Personal use of this material is permitted.  However, permission to use this material for any other purposes must be obtained from the IEEE by sending a request to \href{mailto:pubs-permissions@ieee.org}{pubs-permissions@ieee.org}.}}
\author{
  Samuel H.~Florin\thanks{Department of Mathematics, Massachusetts Institute of Technology, Cambridge, MA 02139.}${\ }^{,}$\footnote{Email: {\tt sflorin@mit.edu}. The work was done when S.~H.~Florin was a student at Greenwich High School.}
  \and Matthew H.~Ho\footnotemark[1]${\ }^{,}$\footnote{Email: {\tt mattho@mit.edu}. The work was done when M.~H.~Ho was a student at Palo Alto High School.}
  \and Zilin Jiang\thanks{School of Mathematical and Statistical Sciences, and School of Computing and Augmented Intelligence, Arizona State University, Tempe, AZ 85281. Email: {\tt zilinj@asu.edu}. The work was done when Z.~Jiang was an Applied Mathematics Instructor at Massachusetts Institute of Technology.}
}
\date{}
\begin{document}

\ifieee
\markboth{IEEE Transactions on Information Theory,~Vol.~xx, No.~x, Xxxx~xxxx}{Florin \MakeLowercase{\textit{et al.}}: On the binary adder channel with complete feedback}
\fi

\maketitle

\begin{abstract}
  We determine the exact value of the optimal symmetric rate point $(r, r)$ in the Dueck zero-error capacity region of the binary adder channel with complete feedback. We proved that the average zero-error capacity $r = h(1/2-\delta) \approx 0.78974$, where $h(\cdot)$ is the binary entropy function and $\delta = 1/(2\log_2(2+\sqrt3))$. Our motivation is a problem in quantitative group testing. Given a set of $n$ elements two of which are defective, the quantitative group testing problem asks for the identification of these two defectives through a series of tests. Each test gives the number of defectives contained in the tested subset, and the outcomes of previous tests are assumed known at the time of designing the current test. We establish that the minimum number of tests is asymptotic to $(\log_2 n) / r$ as $n \to \infty$.
\end{abstract}

\ifieee
\begin{IEEEkeywords}
  Binary adder channel, zero-error capacity region, quantitative group testing.
\end{IEEEkeywords}
\fi

\section{Introduction}

\ifieee \IEEEPARstart{T}{he} \else The \fi
(two-user) binary adder channel is a discrete memoryless multiple-access channel. Let the message sets specified for the senders be of size $M_1, M_2$, and let $w_1 \in [M_1]$, and $w_2 \in [M_2]$ be two messages chosen by the two senders beforehand. During the $k$th use of the channel, two functions $e_{1k}$ and $e_{2k}$ respectively encode $w_1$ and $w_2$ to two channel inputs $x_{1k}, x_{2k} \in \set{0,1}$. The binary adder channel then takes $x_{1k}, x_{2k}$ and outputs $y_k = x_{1k} + x_{2k} \in \set{0,1,2}$. The sequence of outputs $y(w_1, w_2) := (y_k)_{k=1}^n$ is decoded by the receiver. We say that a channel is the \emph{binary adder channel with complete feedback} when the encoders know all the previous outputs of the channel, namely every $e_{1k}$ and $e_{2k}$ depend not only on $w_1$ and $w_2$ respectively but also on $(y_i)_{i=1}^{k-1}$ (see \cref{fig:bac} for a schematic diagram).

\begin{figure}[t]
  \centering
  \ifieee
  \begin{tikzpicture}[scale=0.64, every node/.style={scale=0.64}]
  \else
  \begin{tikzpicture}
  \fi
    \node[circ] (c) at (0,0) {+};
    \node[block] (d) at (4.5,0) {Receiver};
    \node[block] (e1) at (-3,0.75) {Encoder 1};
    \node[block] (e2) at (-3,-0.75) {Encoder 2};
    \node[block] (u1) at (-7,0.75) {Sender 1};
    \node[block] (u2) at (-7,-0.75) {Sender 2};
    \node[draw, circle, minimum size=1mm, inner sep=0pt, outer sep=0pt, fill=darkgray] (dot) at (1.6,0) {};
    \path[line] (u1) -- node[near start, above] {$w_1$} (e1);
    \path[line] (u2) -- node[near start, below] {$w_2$} (e2);
    \path[line] (e1.east) -- node[above] {$x_{1k}$} ([yshift=0.75cm]c.west) -- (c);
    \path[line] (e2.east) -- node[below] {$x_{2k}$} ([yshift=-0.75cm]c.west) -- (c);
    \path[line] (c) -- node[above] {$y_k$} (dot) -- (d);
    \path[line] (dot) -- (1.6,2) -- (-3.5,2) -- ([xshift=-0.5cm]e1.north);
    \path[line] (dot) -- (1.6,-2) -- (-3.5,-2) -- ([xshift=-0.5cm]e2.south);
    \path[draw, dashed, rounded corners] (-5,3) -- (-5,-3) -- (2.5,-3) -- (2.5,3) -- cycle;
  \end{tikzpicture}
  \caption{Two-user binary adder channel with complete feedback.}
  \label{fig:bac}
\end{figure}

An $(M_1, M_2, n)$ \emph{uniquely decodable code} for the binary adder channel with complete feedback consists of a collection encoding functions such that the output sequences satisfy that
\ifieee
\begin{multline*}
  y(w_1, w_2) = y(w_1', w_2') \Leftrightarrow (w_1, w_2) = (w_1', w_2'), \\
  \text{for all }(w_1, w_2), (w_1', w_2') \in [M_1] \times [M_2].
\end{multline*}
\else
\[
  y(w_1, w_2) = y(w_1', w_2') \Leftrightarrow (w_1, w_2) = (w_1', w_2'), \text{ for all }(w_1, w_2), (w_1', w_2') \in [M_1] \times [M_2].
\]
\fi
The \emph{zero-error capacity region} captures the rates at which the information can be transmitted over the channel without error:
\ifieee
\begin{multline*}
  \Of := \text{closure of }\{(R_1, R_2) \colon \exists n_0\,\forall n \ge n_0\,\\
  \exists (\ceil{2^{nR_1}}, \ceil{2^{nR_2}}, n) \text{ uniquely decodable code}\}.
\end{multline*}
\else
\[
  \Of := \text{closure of }\dset{(R_1, R_2)}{\exists n_0\,\forall n \ge n_0\,\exists (\ceil{2^{nR_1}}, \ceil{2^{nR_2}}, n) \text{ uniquely decodable code}}.
\]
\fi
The \emph{average zero-error capacity} is defined by
\[
  R(\Of) := \sup\dset{\tfrac{1}{2}(R_1 + R_2)}{(R_1, R_2) \in \Of}.
\]
Because $\Of$ is convex and symmetric with respect to the line $R_1 = R_2$, the average zero-error capacity can also be defined as
\[
  R(\Of) := \sup\dset{R}{(R,R)\in\Of}.
\]
The point $(R(\Of), R(\Of))$ is known as the equal-rate point or the symmetric rate point in the existing literature. We refer the readers to \cite[Chapter 4]{Ahl18} for a broader view on multiple-access channels.

Dueck characterized the zero-error capacity region for a class of discrete memoryless multiple-access channels. For simplicity, we state his characterization specialized for the binary adder channel with complete feedback.

\begin{theorem}[Dueck~\cite{Due85}] \label{thm:dueck-rv}
  The rate pair $(R_1, R_2)$ belongs to the zero-error capacity region $\Of$ of the binary adder channel with complete feedback if and only if there exist two Bernoulli random variables $X_1, X_2$ and an auxiliary discrete random variable $U$ such that
  \begin{subequations}
    \ifieee
    \begin{gather}
      R_1 \le H(X_1 \mid U), \quad R_2 \le H(X_2 \mid U), \\
      \begin{split} \label{eqn:dueck-rv}
        H(X_1', X_2' \mid U, X_1' + X_2') \le I(U; X_1' + X_2') \\
        \text{for every }(X_1', X_2') \in \mathcal{P}(X_1, X_2, U),
      \end{split}
    \end{gather}
    \else
    \begin{gather} 
      R_1 \le H(X_1 \mid U), \quad R_2 \le H(X_2 \mid U), \\
      H(X_1', X_2' \mid U, X_1' + X_2') \le I(U; X_1' + X_2') 
      \text{ for every }(X_1', X_2') \in \mathcal{P}(X_1, X_2, U), \label{eqn:dueck-rv}
    \end{gather}
    \fi
  \end{subequations}
  where $H(\cdot \mid \cdot)$ is the conditional entropy, $I(\cdot;\cdot)$ is the mutual information, and $\mathcal{P}(X_1, X_2, U)$ consists of pairs $(X_1', X_2')$ of Bernoulli random variables such that their conditional probability distributions satisfy $p_{X_1'\mid U} = p_{X_1\mid U}$, $p_{X_2'\mid U} = p_{X_2\mid U}$.%
  \footnote{We point out that Dueck \cite[Section~2]{Due85} asserted that, through the linear dependency approach of Ahlswede and K\"orner~\cite{AK75} (see also \cite{S78}), the auxiliary random variable $U$ in \cref{thm:dueck-rv} can be assumed to take no more than $6$ values. However, as far as we are aware, it is unclear how the linear dependency approach could be applied directly to Dueck's characterization. The major obstacle to the linear dependency approach is that \eqref{eqn:dueck-rv} a priori represents infinitely many linear constraints on the probability distribution of $U$ --- one for each $(X_1', X_2')$ in $\mathcal{P}(X_1, X_2, U)$.}
\end{theorem}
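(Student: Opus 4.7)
The plan is to prove both directions of Dueck's characterization in turn. For the achievability direction, assume that the rate pair $(R_1, R_2)$ together with random variables $X_1, X_2, U$ satisfies the stated conditions. I would build a feedback coding scheme based on time-sharing according to $U$: partition $n$ channel uses into sub-blocks indexed by values of $U$, with relative lengths matching the distribution of $U$. Inside a sub-block labeled $u$, encoder $i$ uses a random codebook with i.i.d.\ Bernoulli bits drawn from $p_{X_i \mid U = u}$. The single-user rate $H(X_i \mid U)$ is achievable by standard arguments. The nontrivial case arises when the received sum equals $1$, where the decoder cannot immediately tell $(0,1)$ from $(1,0)$. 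Constraint \eqref{eqn:dueck-rv} bounds the residual ambiguity $H(X_1', X_2' \mid U, X_1' + X_2')$ by the information $I(U; X_1' + X_2')$ that the sum sequence reveals about $U$, which is exactly what a second, feedback-aided, round of transmission needs in order to resolve the remaining confusability.

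For the converse direction, start from any $(\lceil 2^{nR_1} \rceil, \lceil 2^{nR_2} \rceil, n)$ uniquely decodable code and derive the three conditions. Introduce the auxiliary variable $U$ as an appropriate sketch of the past output history paired with a uniformly random time index in $[n]$. Zero-error decodability lets us bound the rate sum by the entropy of the output sequence, and expanding this via the chain rule, using the conditional independence of the two encoders given the past outputs, yields the single-letter bounds $R_i \le H(X_i \mid U)$. The more delicate inequality \eqref{eqn:dueck-rv} comes from the adversarial requirement that an alternative strategy $(X_1', X_2') \in \mathcal{P}(X_1, X_2, U)$ cannot fool the decoder: the confusable residual $H(X_1', X_2' \mid U, X_1' + X_2')$ must be at most the information about the alternative sum that $U$ carries, which is precisely $I(U; X_1' + X_2')$.

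The main obstacle will be the converse. As the authors' footnote emphasizes, the infinite family \eqref{eqn:dueck-rv} cannot be cleanly reduced to finitely many linear constraints by the Ahlswede--K\"orner dependency argument, because the set $\mathcal{P}(X_1, X_2, U)$ itself depends on the distribution of $U$. The central technical subtlety is to argue that the zero-error requirement forces the constraint to hold for every counterfactual encoder pair consistent with the marginal conditional distributions, not merely for the one actually in use; this requires thinking of the senders as adversaries who may deviate to any feasible coupling rather than as fixed channel users. Once this uniform-in-$\mathcal{P}$ control is in place, standard Fano-type inequalities and convexification through a time-sharing random variable close the argument on both sides.
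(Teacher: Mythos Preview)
The paper does not contain a proof of this theorem. It is stated as a result of Dueck~\cite{Due85} and is used as a black box; the paper's contribution begins only with the reformulation in \cref{thm:dueck-concrete}. There is therefore nothing in the paper to compare your proposal against.

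As for the proposal itself, what you have written is an outline in the spirit of Dueck's original argument rather than a proof. The achievability sketch (time-share over $U$, random Bernoulli codebooks per block, use feedback in a second phase to resolve the ambiguity left when the sum equals $1$) and the converse sketch (identify $U$ with the past output sequence plus a random time index, then single-letterize) are the standard moves, but the hard content---making the zero-error requirement rigorous on the converse side, and showing that the feedback phase actually clears the residual entropy uniformly over all couplings in $\mathcal{P}(X_1,X_2,U)$ on the achievability side---is asserted rather than carried out. If you want a complete argument you will need to consult Dueck's paper directly; the present paper does not reproduce it.
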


Pinning down the precise value of $R(\Of)$ has remained as an open problem since 1985. Zhang, Berger and Massey \cite{ZBM87} wrote ``However, numerical evaluation of his [Dueck's] capacity region description is fraught with challenging obstacles even in this special case [the binary adder channel with complete feedback]'', which is reiterated by Ahlswede \cite[Section 4.9.4]{Ahl18}. Before our work, the best upper bound $R(\Of) \le 0.79113$ can be deduced from the average Cover--Leung channel capacity of the binary adder channel due to Willems~\cite{Wil84}, whereas the lower bound $R(\Of) \ge 0.78974$, due to Belokopytov~\cite{Bel89}, is conjectured to be tight in \cite[Conjecture~A]{JPV19}.

We make significant progress on the computation of $R(\Of)$, and we settle Conjecture~A in \cite{JPV19} in the affirmative. Here, as well as throughout the paper,
\[
  H_n(x_1, \dots, x_n) := -(x_1\log x_1 + \dots + x_n\log x_n),
\]
with the convention $0 \log 0 = 0$, and all the logarithms are in base $2$.

\begin{theorem} \label{thm:exact}
  The average zero-error capacity $R(\Of)$ of the binary adder channel with complete feedback is $H_2(1/2-\delta,1/2+\delta)\approx 0.78974$, where $\delta = 1/(2 \log(2+\sqrt3))$.
\end{theorem}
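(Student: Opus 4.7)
Since Belokopytov already supplies the matching lower bound, the task is the upper bound $R(\Of) \le H_2(1/2-\delta, 1/2+\delta)$. By \cref{thm:dueck-rv}, any $(R, R) \in \Of$ is witnessed by Bernoulli random variables $X_1, X_2$ and an auxiliary $U$ with $R \le \min\{H(X_1 \mid U), H(X_2 \mid U)\}$ and the family \eqref{eqn:dueck-rv} of constraints. I would parameterize the joint distribution by $p_i := \Pr(U = u_i)$, $a_i := \Pr(X_1 = 1 \mid U = u_i)$, and $b_i := \Pr(X_2 = 1 \mid U = u_i)$, so that the symmetric rate reads $\tfrac12 \sumnpaabb$.

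My first step is to simplify \eqref{eqn:dueck-rv}. Because $X_1' + X_2'$ is a deterministic function of $(X_1', X_2')$, the chain rule reduces \eqref{eqn:dueck-rv} to the equivalent inequality $H(X_1', X_2' \mid U) \le H(X_1' + X_2')$, which must hold uniformly over the couplings $(X_1', X_2') \in \mathcal{P}(X_1, X_2, U)$. These couplings are parameterized, level by level of $U$, by a correlation $c_i \in \cinterval$, and the left-hand side decomposes as $\sumn p_i S_4(a_i, b_i, c_i)$, where $S_4$ denotes the four-atom entropy of the $U$-conditional joint distribution.

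Next, I would identify the adversary's worst coupling. Since $H(X_1'+X_2')$ depends on the $c_i$'s only through the scalar aggregate $\Pr(X_1'+X_2' = 1) = \sumn p_i(a_i\bbi + \aaai b_i + 2c_i)$, I would fix this aggregate to a value $x$ and solve the resulting per-level Lagrangian, obtaining a distinguished correlation $c_{a_i, b_i}(x)$ that maximizes the left-hand side. Optimizing over $x$ then collapses the infinite family of Dueck constraints into a single sharp inequality $\sumnphabcxs \le H(X_1' + X_2')$ evaluated at the adversary's optimum $x^*$.

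The final step is to maximize $\tfrac12 \sumnpaabb$ subject to this tightened constraint. The $X_1 \leftrightarrow X_2$ symmetry of both objective and constraint restricts the search to exchangeable configurations, and a KKT analysis should force the optimum onto a distribution in which $U$ is supported on only a few atoms. That reduction collapses the problem to a low-dimensional calculus question whose stationarity condition is the transcendental equation determining $\delta = 1/(2\log(2+\sqrt{3}))$ and the claimed rate $H_2(1/2-\delta, 1/2+\delta)$. The principal obstacle, already flagged in the footnote to \cref{thm:dueck-rv}, is exactly this step: converting the a priori infinite family of Dueck constraints into a finite-dimensional program without losing tightness at the optimum, and identifying the closed-form adversarial coupling $c_{a_i,b_i}(x^*)$ that realizes it.
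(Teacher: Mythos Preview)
Your outline tracks the paper's program through the first two stages: the reformulation of \cref{thm:dueck-rv} in the $(p_i,a_i,b_i,c_i)$ parameters and the chain-rule reduction of \eqref{eqn:dueck-rv} to $H(X_1',X_2'\mid U)\le H(X_1'+X_2')$ are exactly \cref{thm:dueck-concrete}, and the collapse of the constraint family to a single inequality at the adversarial $x^*$ is \cref{lem:q-elim} together with \cref{thm:main}. (Incidentally, identifying $c_{a_i,b_i}(x)$ is the easy part---it is the root of a quadratic; the real work hidden in that reduction is \cref{lem:uniqueness}, the uniqueness of $x^*$, which occupies an entire section.)

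The genuine gap is in your final paragraph. The sentence ``symmetry \dots\ restricts the search to exchangeable configurations'' is not a valid step: symmetry of a non-convex problem does not force the optimizer to be symmetric, and the paper never assumes $a_i=b_i$ in advance (that equality \emph{emerges} from the first-order conditions, it is not an input). More importantly, the paper does not run a KKT argument that first cuts the number of atoms and then solves a low-dimensional calculus problem. Instead it fixes, with hindsight, the specific multiplier $\lambda=\delta\log\tfrac{1+2\delta}{1-2\delta}$ and bounds the unconstrained Lagrangian $L$ over all of $D_n$ for arbitrary $n$ (\cref{thm:lag}). The core of that bound is an algebraic lemma (\cref{lem:e1234}) classifying the four-tuples $(e_{1i},e_{2i},e_{3i},e_{4i})\in\Delta^3$ that can satisfy the two first-order equations $\partial L/\partial a_i=\partial L/\partial b_i=0$; this is where $r=2+\sqrt3$ first appears, and it shows that at any interior critical point every atom is one of two explicit types. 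Only \emph{after} this classification does the problem become effectively two-dimensional. Your proposal correctly locates the obstacle but underestimates how specific the tools are that overcome it.
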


It is worth pointing out that the exact value $R(\mathcal{O})$ of the average zero-error capacity of the binary adder channel \emph{without} feedback still remains unknown. The current lower bound is $R(\mathcal{O}) \ge (\log 240)/12 \approx 0.65891$ due to Mattas and \"Osterg{\aa}rd~\cite[Section III]{MO05}, while the current upper bound is $R(\mathcal{O}) \le 3/4 = 0.75$.

Our motivation to determine the exact value of $R(\Of)$ comes from quantitative group testing. In a typical quantitative group testing problem, we are given a set of $n$ elements, some of which are defective, and we wish to identify the defectives by testing subsets of these $n$ elements. The classical additive model assumes that each test could precisely tell the number of defectives contained in the tested subset. This model also goes under the name ``coin weighing problem with a spring scale''.

In this paper we focus on the worst-case analysis in the adaptive setting, in which the outcomes of previous tests are assumed known at the time of designing the current test. By first testing the entire collection of $n$ elements, we may assume that the number $d$ of defectives is known. The $d = 1$ case is a classical puzzle, and it is known that $\ceil{\log n}$ tests are needed to identify the defective.

However even the $d = 2$ case was far from being fully understood. Denote $t(n)$ the minimum number of tests required to identify the two defectives among $n$ elements with certainty. The ``Fibonaccian algorithm'' by Christen~\cite{Chr80} and Aigner~\cite{Aig86} gives $t(n) \le \log n / \log \ph + O(1) \approx 1.44042\log n$, where $\ph = (\sqrt{5}+1)/2$ is the golden ratio. The algorithm was improved to $t(n) \le (12/\log 330)\log n + O(1) \approx 1.43432\log n$ by Hao~\cite[Section~4]{Hao90}. Using the language of decision trees, Gargano et al. \cite{GMSV92} improved the upper bound to $t(n) \le (7/5)\log n + O(1) \approx 1.4\log n$. Christen reported in \cite[Section~6]{Chr94} that a more involved recursive method yields a better upper bound $t(n) \le (6 / \log 20)\log n + O(1) \approx 1.38827\log n$. On the flip side, the information-theoretic bound gives $t(n) \ge \log_3\binom{n}{2} \approx 1.26185\log n$, and no better lower bound is known.\footnote{In \cite[Section~6]{Chr94}, Christen asserted that the lower bound $(4/3)\log n$ of Lindstr\"om~\cite{Lin69} for the predetermined setting also holds for $t(n)$ in the adaptive setting. We believe this assertion is incorrect as it is inconsistent with \cref{cor:tn}.}

Hao \cite{Hao90} observed that there exists $r > 0$ such that $t(n) \sim (\log n)/r$ (equivalently, $t(n)/\log n$ approaches to $1/r$ as $n \to \infty$). To characterize the constant $r$, we take advantage of the following correspondence between the quantitative group testing problem and the binary adder channel.

\begin{theorem} \label{thm:correspondence}
  The minimum number $t(n)$ of tests needed in an adaptive strategy to identify two defectives among $n$ elements satisfies $t(n) \sim (\log n) / R(\Of)$, where the constant $R(\Of)$ is the average zero-error capacity of the binary adder channel with complete feedback.
\end{theorem}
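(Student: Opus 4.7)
The plan is to establish the asymptotic through two matching reductions between adaptive group testing strategies and uniquely decodable codes for the binary adder channel with complete feedback. The key idea is to split the element set $[n]$ into two disjoint halves and assign each half to one of the two senders, which reconciles the intrinsic symmetry of a group testing round (a single subset $S_k$ is tested) with the asymmetry of the BAC protocol (two potentially different encoders $e_{1k}, e_{2k}$).

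For the lower bound $t(n) \ge (\log n)/R(\Of) - O(1)$, suppose that an adaptive strategy identifies the defective pair in $t$ tests. Partition $[n] = A \sqcup B$ with $|A| = |B| = M := \lfloor n/2 \rfloor$, and restrict attention to defective pairs in $A \times B$. For $a \in A$, define $e_{1k}(a, y_1, \ldots, y_{k-1})$ to be $1$ if $a \in S_k(y_1, \ldots, y_{k-1})$ and $0$ otherwise, and define $e_{2k}$ analogously on $B$. The BAC output $e_{1k}(a) + e_{2k}(b)$ then equals the test outcome $\abs{S_k \cap \set{a, b}}$ of the group testing scheme, and because $A$ and $B$ are disjoint the group testing decoder recovers the ordered pair $(a, b)$ from the output sequence. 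This yields an $(M, M, t)$ uniquely decodable code, and by independent block repetition (together with trivial padding) the rate $(\log M)/t$ is asymptotically achievable, so $(\log M)/t \le R(\Of)$ and therefore $t \ge (\log(n/2))/R(\Of)$.

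For the upper bound $t(n) \le (\log n)/R(\Of) + O(1)$, fix any $R < R(\Of)$. By definition of $R(\Of)$, for every sufficiently large $M$ there exists an $(M, M, t_M)$ uniquely decodable code with $t_M \le (\log M)/R + 1$. I design the strategy recursively: split $[n] = A \sqcup B$ into halves, and use one test to query the subset $A$. If the outcome is $0$ (respectively $2$), both defectives lie in $B$ (respectively $A$) and we recurse on that half of size $n/2$. If the outcome is $1$, exactly one defective lies in each half; we then simulate the BAC code, testing in round $k$ the subset $S_k := \set{a \in A : e_{1k}(a, y_1, \ldots, y_{k-1}) = 1} \cup \set{b \in B : e_{2k}(b, y_1, \ldots, y_{k-1}) = 1}$, whose outcome equals $e_{1k}(a) + e_{2k}(b)$, so unique decodability of the BAC code identifies both defectives in $t_{\lfloor n/2 \rfloor}$ further tests. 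The recurrence $t(n) \le 1 + \max(t(n/2),\, (\log(n/2))/R + 1)$, combined with $R < R(\Of) < 1$ so that $1/R > 1$, inductively yields $t(n) \le (\log n)/R + C_R$ for some constant $C_R$. Letting $R \nearrow R(\Of)$ completes the upper bound.

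The main technical point is bridging the asymmetry gap between the two models. The partition trick costs one auxiliary test per recursion level, and together with the sub-leading $+1$ in $t_M \le (\log M)/R + 1$ it is absorbed into the $O(1)$ correction; nothing disturbs the leading $(\log n)/R(\Of)$ in either direction.
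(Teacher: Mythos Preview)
Your proof is correct and rests on the same key correspondence the paper uses: an adaptive strategy for the two-set variant (one defective in each of two disjoint sets) is exactly an $(M,M,m)$ uniquely decodable code for the BAC with feedback. The paper packages this as a separate bijection (\cref{thm:bijection}) and then invokes Hao's observation $t(n)\sim t(n,n)$ as a black box to pass from the original problem to the two-set variant. You instead build that bridge yourself: your lower bound is precisely the paper's restriction to pairs in $A\times B$, and your recursive upper bound (test $A$; recurse on one half or switch to the BAC code) is in effect a direct proof that $t(n)\le t(\lfloor n/2\rfloor,\lfloor n/2\rfloor)+O(\log n)$, with the splitting overhead absorbed because $1/R>1$. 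The paper's route is marginally cleaner to state since it separates the bijection from the asymptotics; yours is self-contained and avoids citing Hao.
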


To the best of our knowledge, although \cref{thm:correspondence} is probably widely known among the information theory community, it was only recently mentioned in \cite{JPV19}. As an immediate consequence of \cref{thm:exact,thm:correspondence}, we establish the asymptotic formula of $t(n)$.

\begin{corollary} \label{cor:tn}
  The minimum number $t(n)$ of tests needed in a adaptive strategy to identify two defectives among $n$ elements satisfies $t(n) \sim (\log n)/H_2(1/2-\delta,1/2+\delta) \approx 1.26624\log n$, where $\delta := 1/(2\log (2 + \sqrt3))$.
  \ifieee
  \leavevmode\unskip\penalty9999\hbox{}\nobreak\hfill\quad\hbox{\IEEEQED}
  \else
  \qed
  \fi
\end{corollary}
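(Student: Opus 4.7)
The plan is essentially to substitute \cref{thm:exact} into \cref{thm:correspondence}. By \cref{thm:correspondence}, $t(n) \sim (\log n)/R(\Of)$ as $n \to \infty$, where $R(\Of)$ denotes the average zero-error capacity of the binary adder channel with complete feedback. By \cref{thm:exact}, this capacity is exactly $R(\Of) = H_2(1/2-\delta, 1/2+\delta)$ with $\delta = 1/(2\log(2+\sqrt3))$. Chaining the two asymptotic equivalences yields
\[
  t(n) \sim \frac{\log n}{H_2(1/2-\delta, 1/2+\delta)},
\]
which is the first assertion of the corollary.

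The only remaining task is to verify the numerical approximation $1/H_2(1/2-\delta,1/2+\delta) \approx 1.26624$. Since the logarithm is taken base $2$ throughout the paper, one computes $\delta = 1/(2\log(2+\sqrt3)) \approx 0.18990$, hence $1/2-\delta \approx 0.31010$; plugging into $H_2(p,1-p) = -p\log p - (1-p)\log(1-p)$ gives $H_2 \approx 0.78974$, and its reciprocal is approximately $1.26624$, matching the stated constant.

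There is no real obstacle here: the entire content of the corollary is carried by \cref{thm:exact} and \cref{thm:correspondence}, which is why the authors close the statement with \textbf{\qed}. The proposal, in short, is to write one sentence combining the two asymptotic statements and then record the elementary numerical evaluation.
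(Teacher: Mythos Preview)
The proposal is correct and matches the paper's own treatment: the corollary is stated with a \qed\ symbol and is introduced as ``an immediate consequence of \cref{thm:exact,thm:correspondence}'', so the only content is exactly the substitution you describe.
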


As is common in group testing theory, one distinguishes between the predetermined and adaptive testing strategies. For the predetermined setting, in which all the tests are decided in advance, determining the minimum number $\tilde{t}(n)$ of tests to identify two defectives among $n$ elements is equivalent to determining the maximum size of a $B_2$-sequence of $n$-dimensional binary vectors. The results on binary $B_2$-sequences in \cite{Lin69} and \cite{CLZ01} are best known, and these results imply that $(1.73837-o(1))\log n \le \tilde{t}(n) \le (2+o(1))\log n$. Finding the asymptotic behavior of $\tilde{t}(n)$ remains as an open problem.

\section{Proof ideas}

In this section we outline the three major steps that lead to the proof of \cref{thm:exact}. The first step is a more down-to-earth reformulation of Dueck's characterization, that is \cref{thm:dueck-rv}, for the symmetric rate point. To state the reformulation, we adopt the following notations for the entirety of the paper:
\begin{gather*}
  \aaai := 1 - a_i, \quad \bbi := 1 - b_i, \quad \aaa = 1 - a, \quad \bb = 1 - b,\\
  S_3(x) := H_3((1-x)/2, x, (1-x)/2), \\
  S_4(a,b,c) := H_4(ab-c, a\bb+c, \aaa b+c, \aaa\bb-c), \\
  \Delta^{n-1} := \dset{\spn \in [0,1]^n}{p_1 + \dots + p_n = 1}.
\end{gather*}

\begin{theorem} \label{thm:dueck-concrete}
  The rate pair $(R, R)$ belongs to the zero-error capacity region $\Of$ of the binary adder channel with complete feedback if and only if there exist $n \in \mathbb{N}$, $\spn \in \Delta^{n-1}$, $\san \in [0,1]$ and $\sbn \in [0,1]$ such that
  \begin{subequations}
    \ifieee
    \begin{gather}
      R \le \tfrac12 \sumnpaabb, \label{eqn:r-bound} \\
      \begin{split} \label{eqn:h3-h4}
        S_3{\left(\sumnpabc\right)} \ge \sumn p_i S_4(a_i, b_i, c_i)\\
        \text{for every }c_1 \in C_1, \dots, c_n \in C_n,
      \end{split}
    \end{gather}
    \else
    \begin{gather}
      R \le \tfrac12 \sumnpaabb, \label{eqn:r-bound} \\
      S_3{\left(\sumnpabc\right)} \ge \sumn p_i S_4(a_i, b_i, c_i), \text{ for every }c_1 \in C_1, \dots, c_n \in C_n, \label{eqn:h3-h4}
    \end{gather}
    \fi
  \end{subequations}
  where $C_i := \cinterval$.
\end{theorem}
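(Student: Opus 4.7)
The plan is to identify Dueck's auxiliary random variable $U$ with a finite distribution $\spn$ and to parametrize the conditional laws concretely. Set $a_i := \pr[X_1 = 0 \mid U = i]$ and $b_i := \pr[X_2 = 0 \mid U = i]$, so $H(X_1 \mid U) = \sumn p_i H_2(a_i, \aaai)$ and similarly for $X_2$. Given these marginals, each joint law in $\mathcal{P}(X_1, X_2, U)$ is, conditional on $U = i$, pinned down by a single scalar $c_i$ whose range is exactly $C_i$ (the interval that keeps all four cell probabilities non-negative); under this parametrization, $H(X_1', X_2' \mid U) = \sumn p_i S_4(a_i, b_i, c_i)$ and $\pr[X_1' + X_2' = 1] = \sumnpabc$. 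Using that $X_1' + X_2'$ is a function of $(X_1', X_2')$, a chain-rule expansion rewrites Dueck's condition \eqref{eqn:dueck-rv} as the simpler inequality $H(X_1', X_2' \mid U) \le H(X_1' + X_2')$.

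For the forward direction, I take the $X_1, X_2, U$ guaranteed by \cref{thm:dueck-rv} and extract $p_i, a_i, b_i$ as above. Averaging $R \le \min\{H(X_1 \mid U), H(X_2 \mid U)\}$ yields \eqref{eqn:r-bound}. For \eqref{eqn:h3-h4}, fix $c_i \in C_i$; Dueck gives $\sumn p_i S_4(a_i, b_i, c_i) \le H_3(q_0, q_1, q_2)$, which is bounded above by $S_3(q_1)$ because $H_3$ with fixed middle entry is maximized when the two outer entries coincide.

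For the backward direction, suppose $\spn, \san, \sbn$ satisfy \eqref{eqn:r-bound} and \eqref{eqn:h3-h4}. Construct an auxiliary $U'$ on $4n$ states: for each $i \in [n]$, attach four values of probability $p_i/4$ with conditional marginal pairs $(a_i, b_i)$, $(\aaai, \bbi)$, $(b_i, a_i)$, $(\bbi, \aaai)$, and take any joint law (for instance, independence) with these marginals. A direct computation shows $H(X_1 \mid U') = H(X_2 \mid U') = \tfrac12 \sumnpaabb$, so \eqref{eqn:r-bound} fulfills both rate bounds of \cref{thm:dueck-rv}. Given any $(X_1', X_2') \in \mathcal{P}(X_1, X_2, U')$, parametrize its four conditional laws above index $i$ by $c_i, c_i', c_i'', c_i''' \in C_i$ and let $d_i$ be their average; the four-fold symmetry forces $q_0 = q_2$, so $H(X_1' + X_2') = S_3(q_1)$, while $q_1 = \sumn p_i(a_i \bbi + \aaai b_i + 2d_i)$. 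The concavity of $S_4(a_i, b_i, \cdot)$---since $S_4$ is $H_4$ composed with an affine map in $c$---lets Jensen give $H(X_1', X_2' \mid U') \le \sumn p_i S_4(a_i, b_i, d_i)$, whereupon \eqref{eqn:h3-h4} applied at $d_i$ closes the chain and verifies Dueck's condition.

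The principal subtlety is engineering the backward symmetrization to (i) equalize the two per-user conditional entropies $H(X_j \mid U')$---so that the \emph{average} rate bound \eqref{eqn:r-bound} suffices---and simultaneously (ii) force $q_0 = q_2$---so that $S_3(q_1)$ in \eqref{eqn:h3-h4} actually coincides with $H(X_1' + X_2')$ and can be fed into Dueck's inequality. The $4n$-fold construction, incorporating both the $(a, b) \leftrightarrow (\aaa, \bb)$ and $(a, b) \leftrightarrow (b, a)$ involutions, achieves both; the concavity of $S_4$ in its third argument is then what converts the single averaged-$d_i$ instance of \eqref{eqn:h3-h4} into the entire (uncountable) family of constraints \eqref{eqn:dueck-rv}.
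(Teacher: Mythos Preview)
Your proof is correct and follows essentially the same approach as the paper. The only difference is in the backward symmetrization: the paper uses a $2n$-state construction with the single involution $(a_i,b_i)\mapsto(\bb_i,\aaa_i)$, which simultaneously equalizes $H(X_1\mid U')$ with $H(X_2\mid U')$ and forces $q_0=q_2$; your $4n$-state construction, applying the involutions $(a,b)\leftrightarrow(\aaa,\bb)$ and $(a,b)\leftrightarrow(b,a)$ separately, is slightly redundant but achieves the same two goals and closes the argument with the same concavity step for $S_4(a_i,b_i,\cdot)$.
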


\begin{figure}[t]
  \centering
  \begin{tikzpicture}
    \draw[scale=.5, red, thick] (1.4,4.5942738)--(1.42,4.4128884)--(1.44,4.2770064)--(1.46,4.1641156)--(1.48,4.0658275)--(1.5,3.9779066)--(1.52,3.8978565)--(1.54,3.8240572)--(1.56,3.7553856)--(1.58,3.6910248)--(1.6,3.6303576)--(1.62,3.5729044)--(1.64,3.5182833)--(1.66,3.4661839)--(1.68,3.4163502)--(1.7,3.3685679)--(1.72,3.3226549)--(1.74,3.2784557)--(1.76,3.2358355)--(1.78,3.1946769)--(1.8,3.1548766)--(1.82,3.1163434)--(1.84,3.0789962)--(1.86,3.0427621)--(1.88,3.0075761)--(1.9,2.9733789)--(1.92,2.9401173)--(1.94,2.9077424)--(1.96,2.8762097)--(1.98,2.8454786)--(2.0,2.8155114)--(2.02,2.7862735)--(2.04,2.757733)--(2.06,2.7298603)--(2.08,2.7026278)--(2.1,2.6760098)--(2.12,2.6499825)--(2.14,2.6245234)--(2.16,2.5996117)--(2.18,2.5752276)--(2.2,2.5513527)--(2.22,2.5279695)--(2.24,2.5050618)--(2.26,2.482614)--(2.28,2.4606116)--(2.3,2.4390406)--(2.32,2.4178879)--(2.34,2.3971412)--(2.36,2.3767885)--(2.38,2.3568187)--(2.4,2.3372211)--(2.42,2.3179854)--(2.44,2.299102)--(2.46,2.2805616)--(2.48,2.2623554)--(2.5,2.2444748)--(2.52,2.2269117)--(2.54,2.2096585)--(2.56,2.1927077)--(2.58,2.1760521)--(2.6,2.1596849)--(2.62,2.1435996)--(2.64,2.1277899)--(2.66,2.1122496)--(2.68,2.096973)--(2.7,2.0819544)--(2.72,2.0671885)--(2.74,2.05267)--(2.76,2.0383939)--(2.78,2.0243553)--(2.8,2.0105497)--(2.82,1.9969725)--(2.84,1.9836193)--(2.86,1.9704859)--(2.88,1.9575683)--(2.9,1.9448626)--(2.92,1.9323649)--(2.94,1.9200715)--(2.96,1.9079789)--(2.98,1.8960837)--(3.0,1.8843825)--(3.02,1.872872)--(3.04,1.8615492)--(3.06,1.850411)--(3.08,1.8394544)--(3.1,1.8286765)--(3.12,1.8180747)--(3.14,1.8076462)--(3.16,1.7973884)--(3.18,1.7872987)--(3.2,1.7773747)--(3.22,1.767614)--(3.24,1.7580143)--(3.26,1.7485732)--(3.28,1.7392886)--(3.3,1.7301584)--(3.32,1.7211805)--(3.34,1.7123527)--(3.36,1.7036733)--(3.38,1.6951402)--(3.4,1.6867517)--(3.42,1.6785058)--(3.44,1.6704008)--(3.46,1.6624351)--(3.48,1.6546068)--(3.5,1.6469145)--(3.52,1.6393565)--(3.54,1.6319312)--(3.56,1.6246373)--(3.58,1.6174731)--(3.6,1.6104373)--(3.62,1.6035285)--(3.64,1.5967453)--(3.66,1.5900865)--(3.68,1.5835506)--(3.7,1.5771366)--(3.72,1.5708432)--(3.74,1.5646691)--(3.76,1.5586132)--(3.78,1.5526745)--(3.8,1.5468517)--(3.82,1.5411439)--(3.84,1.53555)--(3.86,1.530069)--(3.88,1.5246999)--(3.9,1.5194417)--(3.92,1.5142936)--(3.94,1.5092545)--(3.96,1.5043236)--(3.98,1.4995001)--(4.0,1.4947831)--(4.02,1.4901718)--(4.04,1.4856655)--(4.06,1.4812632)--(4.08,1.4769644)--(4.1,1.4727682)--(4.12,1.4686739)--(4.14,1.4646809)--(4.16,1.4607886)--(4.18,1.4569961)--(4.2,1.453303)--(4.22,1.4497086)--(4.24,1.4462124)--(4.26,1.4428137)--(4.28,1.439512)--(4.3,1.4363067)--(4.32,1.4331974)--(4.34,1.4301835)--(4.36,1.4272646)--(4.38,1.4244402)--(4.4,1.4217099)--(4.42,1.4190731)--(4.44,1.4165295)--(4.46,1.4140788)--(4.48,1.4117204)--(4.5,1.409454)--(4.52,1.4072793)--(4.54,1.4051959)--(4.56,1.4032036)--(4.58,1.4013019)--(4.6,1.3994906)--(4.62,1.3977695)--(4.64,1.3961382)--(4.66,1.3945965)--(4.68,1.3931441)--(4.7,1.3917809)--(4.72,1.3905066)--(4.74,1.389321)--(4.76,1.388224)--(4.78,1.3872154)--(4.8,1.386295)--(4.82,1.3854626)--(4.84,1.3847182)--(4.86,1.3840617)--(4.88,1.3834929)--(4.9,1.3830118)--(4.92,1.3826182)--(4.94,1.3823122)--(4.96,1.3820936)--(4.98,1.3819625)--(5.0,1.3819187)--(5.02,1.3819625)--(5.04,1.3820936)--(5.06,1.3823122)--(5.08,1.3826182)--(5.1,1.3830118)--(5.12,1.3834929)--(5.14,1.3840617)--(5.16,1.3847182)--(5.18,1.3854626)--(5.2,1.386295)--(5.22,1.3872154)--(5.24,1.388224)--(5.26,1.389321)--(5.28,1.3905066)--(5.3,1.3917809)--(5.32,1.3931441)--(5.34,1.3945965)--(5.36,1.3961382)--(5.38,1.3977695)--(5.4,1.3994906)--(5.42,1.4013019)--(5.44,1.4032036)--(5.46,1.4051959)--(5.48,1.4072793)--(5.5,1.409454)--(5.52,1.4117204)--(5.54,1.4140788)--(5.56,1.4165295)--(5.58,1.4190731)--(5.6,1.4217099)--(5.62,1.4244402)--(5.64,1.4272646)--(5.66,1.4301835)--(5.68,1.4331974)--(5.7,1.4363067)--(5.72,1.439512)--(5.74,1.4428137)--(5.76,1.4462124)--(5.78,1.4497086)--(5.8,1.453303)--(5.82,1.4569961)--(5.84,1.4607886)--(5.86,1.4646809)--(5.88,1.4686739)--(5.9,1.4727682)--(5.92,1.4769644)--(5.94,1.4812632)--(5.96,1.4856655)--(5.98,1.4901718)--(6.0,1.4947831)--(6.02,1.4995001)--(6.04,1.5043236)--(6.06,1.5092545)--(6.08,1.5142936)--(6.1,1.5194417)--(6.12,1.5246999)--(6.14,1.530069)--(6.16,1.53555)--(6.18,1.5411439)--(6.2,1.5468517)--(6.22,1.5526745)--(6.24,1.5586132)--(6.26,1.5646691)--(6.28,1.5708432)--(6.3,1.5771366)--(6.32,1.5835506)--(6.34,1.5900865)--(6.36,1.5967453)--(6.38,1.6035285)--(6.4,1.6104373)--(6.42,1.6174731)--(6.44,1.6246373)--(6.46,1.6319312)--(6.48,1.6393565)--(6.5,1.6469145)--(6.52,1.6546068)--(6.54,1.6624351)--(6.56,1.6704008)--(6.58,1.6785058)--(6.6,1.6867517)--(6.62,1.6951402)--(6.64,1.7036733)--(6.66,1.7123527)--(6.68,1.7211805)--(6.7,1.7301584)--(6.72,1.7392886)--(6.74,1.7485732)--(6.76,1.7580143)--(6.78,1.767614)--(6.8,1.7773747)--(6.82,1.7872987)--(6.84,1.7973884)--(6.86,1.8076462)--(6.88,1.8180747)--(6.9,1.8286765)--(6.92,1.8394544)--(6.94,1.850411)--(6.96,1.8615492)--(6.98,1.872872)--(7.0,1.8843825)--(7.02,1.8960837)--(7.04,1.9079789)--(7.06,1.9200715)--(7.08,1.9323649)--(7.1,1.9448626)--(7.12,1.9575683)--(7.14,1.9704859)--(7.16,1.9836193)--(7.18,1.9969725)--(7.2,2.0105497)--(7.22,2.0243553)--(7.24,2.0383939)--(7.26,2.05267)--(7.28,2.0671885)--(7.3,2.0819544)--(7.32,2.096973)--(7.34,2.1122496)--(7.36,2.1277899)--(7.38,2.1435996)--(7.4,2.1596849)--(7.42,2.1760521)--(7.44,2.1927077)--(7.46,2.2096585)--(7.48,2.2269117)--(7.5,2.2444748)--(7.52,2.2623554)--(7.54,2.2805616)--(7.56,2.299102)--(7.58,2.3179854)--(7.6,2.3372211)--(7.62,2.3568187)--(7.64,2.3767885)--(7.66,2.3971412)--(7.68,2.4178879)--(7.7,2.4390406)--(7.72,2.4606116)--(7.74,2.482614)--(7.76,2.5050618)--(7.78,2.5279695)--(7.8,2.5513527)--(7.82,2.5752276)--(7.84,2.5996117)--(7.86,2.6245234)--(7.88,2.6499825)--(7.9,2.6760098)--(7.92,2.7026278)--(7.94,2.7298603)--(7.96,2.757733)--(7.98,2.7862735)--(8.0,2.8155114)--(8.02,2.8454786)--(8.04,2.8762097)--(8.06,2.9077424)--(8.08,2.9401173)--(8.1,2.9733789)--(8.12,3.0075761)--(8.14,3.0427621)--(8.16,3.0789962)--(8.18,3.1163434)--(8.2,3.1548766)--(8.22,3.1946769)--(8.24,3.2358355)--(8.26,3.2784557)--(8.28,3.3226549)--(8.3,3.3685679)--(8.32,3.4163502)--(8.34,3.4661839)--(8.36,3.5182833)--(8.38,3.5729044)--(8.4,3.6303576)--(8.42,3.6910248)--(8.44,3.7553856)--(8.46,3.8240572)--(8.48,3.8978565)--(8.5,3.9779066)--(8.52,4.0658275)--(8.54,4.1641156)--(8.56,4.2770064)--(8.58,4.4128884)--(8.6,4.5942738)--(8.6,5.4057262)--(8.58,5.5871116)--(8.56,5.7229936)--(8.54,5.8358844)--(8.52,5.9341725)--(8.5,6.0220934)--(8.48,6.1021435)--(8.46,6.1759428)--(8.44,6.2446144)--(8.42,6.3089752)--(8.4,6.3696424)--(8.38,6.4270956)--(8.36,6.4817167)--(8.34,6.5338161)--(8.32,6.5836498)--(8.3,6.6314321)--(8.28,6.6773451)--(8.26,6.7215443)--(8.24,6.7641645)--(8.22,6.8053231)--(8.2,6.8451234)--(8.18,6.8836566)--(8.16,6.9210038)--(8.14,6.9572379)--(8.12,6.9924239)--(8.1,7.0266211)--(8.08,7.0598827)--(8.06,7.0922576)--(8.04,7.1237903)--(8.02,7.1545214)--(8.0,7.1844886)--(7.98,7.2137265)--(7.96,7.242267)--(7.94,7.2701397)--(7.92,7.2973722)--(7.9,7.3239902)--(7.88,7.3500175)--(7.86,7.3754766)--(7.84,7.4003883)--(7.82,7.4247724)--(7.8,7.4486473)--(7.78,7.4720305)--(7.76,7.4949382)--(7.74,7.517386)--(7.72,7.5393884)--(7.7,7.5609594)--(7.68,7.5821121)--(7.66,7.6028588)--(7.64,7.6232115)--(7.62,7.6431813)--(7.6,7.6627789)--(7.58,7.6820146)--(7.56,7.700898)--(7.54,7.7194384)--(7.52,7.7376446)--(7.5,7.7555252)--(7.48,7.7730883)--(7.46,7.7903415)--(7.44,7.8072923)--(7.42,7.8239479)--(7.4,7.8403151)--(7.38,7.8564004)--(7.36,7.8722101)--(7.34,7.8877504)--(7.32,7.903027)--(7.3,7.9180456)--(7.28,7.9328115)--(7.26,7.94733)--(7.24,7.9616061)--(7.22,7.9756447)--(7.2,7.9894503)--(7.18,8.0030275)--(7.16,8.0163807)--(7.14,8.0295141)--(7.12,8.0424317)--(7.1,8.0551374)--(7.08,8.0676351)--(7.06,8.0799285)--(7.04,8.0920211)--(7.02,8.1039163)--(7.0,8.1156175)--(6.98,8.127128)--(6.96,8.1384508)--(6.94,8.149589)--(6.92,8.1605456)--(6.9,8.1713235)--(6.88,8.1819253)--(6.86,8.1923538)--(6.84,8.2026116)--(6.82,8.2127013)--(6.8,8.2226253)--(6.78,8.232386)--(6.76,8.2419857)--(6.74,8.2514268)--(6.72,8.2607114)--(6.7,8.2698416)--(6.68,8.2788195)--(6.66,8.2876473)--(6.64,8.2963267)--(6.62,8.3048598)--(6.6,8.3132483)--(6.58,8.3214942)--(6.56,8.3295992)--(6.54,8.3375649)--(6.52,8.3453932)--(6.5,8.3530855)--(6.48,8.3606435)--(6.46,8.3680688)--(6.44,8.3753627)--(6.42,8.3825269)--(6.4,8.3895627)--(6.38,8.3964715)--(6.36,8.4032547)--(6.34,8.4099135)--(6.32,8.4164494)--(6.3,8.4228634)--(6.28,8.4291568)--(6.26,8.4353309)--(6.24,8.4413868)--(6.22,8.4473255)--(6.2,8.4531483)--(6.18,8.4588561)--(6.16,8.46445)--(6.14,8.469931)--(6.12,8.4753001)--(6.1,8.4805583)--(6.08,8.4857064)--(6.06,8.4907455)--(6.04,8.4956764)--(6.02,8.5004999)--(6.0,8.5052169)--(5.98,8.5098282)--(5.96,8.5143345)--(5.94,8.5187368)--(5.92,8.5230356)--(5.9,8.5272318)--(5.88,8.5313261)--(5.86,8.5353191)--(5.84,8.5392114)--(5.82,8.5430039)--(5.8,8.546697)--(5.78,8.5502914)--(5.76,8.5537876)--(5.74,8.5571863)--(5.72,8.560488)--(5.7,8.5636933)--(5.68,8.5668026)--(5.66,8.5698165)--(5.64,8.5727354)--(5.62,8.5755598)--(5.6,8.5782901)--(5.58,8.5809269)--(5.56,8.5834705)--(5.54,8.5859212)--(5.52,8.5882796)--(5.5,8.590546)--(5.48,8.5927207)--(5.46,8.5948041)--(5.44,8.5967964)--(5.42,8.5986981)--(5.4,8.6005094)--(5.38,8.6022305)--(5.36,8.6038618)--(5.34,8.6054035)--(5.32,8.6068559)--(5.3,8.6082191)--(5.28,8.6094934)--(5.26,8.610679)--(5.24,8.611776)--(5.22,8.6127846)--(5.2,8.613705)--(5.18,8.6145374)--(5.16,8.6152818)--(5.14,8.6159383)--(5.12,8.6165071)--(5.1,8.6169882)--(5.08,8.6173818)--(5.06,8.6176878)--(5.04,8.6179064)--(5.02,8.6180375)--(5.0,8.6180813)--(4.98,8.6180375)--(4.96,8.6179064)--(4.94,8.6176878)--(4.92,8.6173818)--(4.9,8.6169882)--(4.88,8.6165071)--(4.86,8.6159383)--(4.84,8.6152818)--(4.82,8.6145374)--(4.8,8.613705)--(4.78,8.6127846)--(4.76,8.611776)--(4.74,8.610679)--(4.72,8.6094934)--(4.7,8.6082191)--(4.68,8.6068559)--(4.66,8.6054035)--(4.64,8.6038618)--(4.62,8.6022305)--(4.6,8.6005094)--(4.58,8.5986981)--(4.56,8.5967964)--(4.54,8.5948041)--(4.52,8.5927207)--(4.5,8.590546)--(4.48,8.5882796)--(4.46,8.5859212)--(4.44,8.5834705)--(4.42,8.5809269)--(4.4,8.5782901)--(4.38,8.5755598)--(4.36,8.5727354)--(4.34,8.5698165)--(4.32,8.5668026)--(4.3,8.5636933)--(4.28,8.560488)--(4.26,8.5571863)--(4.24,8.5537876)--(4.22,8.5502914)--(4.2,8.546697)--(4.18,8.5430039)--(4.16,8.5392114)--(4.14,8.5353191)--(4.12,8.5313261)--(4.1,8.5272318)--(4.08,8.5230356)--(4.06,8.5187368)--(4.04,8.5143345)--(4.02,8.5098282)--(4.0,8.5052169)--(3.98,8.5004999)--(3.96,8.4956764)--(3.94,8.4907455)--(3.92,8.4857064)--(3.9,8.4805583)--(3.88,8.4753001)--(3.86,8.469931)--(3.84,8.46445)--(3.82,8.4588561)--(3.8,8.4531483)--(3.78,8.4473255)--(3.76,8.4413868)--(3.74,8.4353309)--(3.72,8.4291568)--(3.7,8.4228634)--(3.68,8.4164494)--(3.66,8.4099135)--(3.64,8.4032547)--(3.62,8.3964715)--(3.6,8.3895627)--(3.58,8.3825269)--(3.56,8.3753627)--(3.54,8.3680688)--(3.52,8.3606435)--(3.5,8.3530855)--(3.48,8.3453932)--(3.46,8.3375649)--(3.44,8.3295992)--(3.42,8.3214942)--(3.4,8.3132483)--(3.38,8.3048598)--(3.36,8.2963267)--(3.34,8.2876473)--(3.32,8.2788195)--(3.3,8.2698416)--(3.28,8.2607114)--(3.26,8.2514268)--(3.24,8.2419857)--(3.22,8.232386)--(3.2,8.2226253)--(3.18,8.2127013)--(3.16,8.2026116)--(3.14,8.1923538)--(3.12,8.1819253)--(3.1,8.1713235)--(3.08,8.1605456)--(3.06,8.149589)--(3.04,8.1384508)--(3.02,8.127128)--(3.0,8.1156175)--(2.98,8.1039163)--(2.96,8.0920211)--(2.94,8.0799285)--(2.92,8.0676351)--(2.9,8.0551374)--(2.88,8.0424317)--(2.86,8.0295141)--(2.84,8.0163807)--(2.82,8.0030275)--(2.8,7.9894503)--(2.78,7.9756447)--(2.76,7.9616061)--(2.74,7.94733)--(2.72,7.9328115)--(2.7,7.9180456)--(2.68,7.903027)--(2.66,7.8877504)--(2.64,7.8722101)--(2.62,7.8564004)--(2.6,7.8403151)--(2.58,7.8239479)--(2.56,7.8072923)--(2.54,7.7903415)--(2.52,7.7730883)--(2.5,7.7555252)--(2.48,7.7376446)--(2.46,7.7194384)--(2.44,7.700898)--(2.42,7.6820146)--(2.4,7.6627789)--(2.38,7.6431813)--(2.36,7.6232115)--(2.34,7.6028588)--(2.32,7.5821121)--(2.3,7.5609594)--(2.28,7.5393884)--(2.26,7.517386)--(2.24,7.4949382)--(2.22,7.4720305)--(2.2,7.4486473)--(2.18,7.4247724)--(2.16,7.4003883)--(2.14,7.3754766)--(2.12,7.3500175)--(2.1,7.3239902)--(2.08,7.2973722)--(2.06,7.2701397)--(2.04,7.242267)--(2.02,7.2137265)--(2.0,7.1844886)--(1.98,7.1545214)--(1.96,7.1237903)--(1.94,7.0922576)--(1.92,7.0598827)--(1.9,7.0266211)--(1.88,6.9924239)--(1.86,6.9572379)--(1.84,6.9210038)--(1.82,6.8836566)--(1.8,6.8451234)--(1.78,6.8053231)--(1.76,6.7641645)--(1.74,6.7215443)--(1.72,6.6773451)--(1.7,6.6314321)--(1.68,6.5836498)--(1.66,6.5338161)--(1.64,6.4817167)--(1.62,6.4270956)--(1.6,6.3696424)--(1.58,6.3089752)--(1.56,6.2446144)--(1.54,6.1759428)--(1.52,6.1021435)--(1.5,6.0220934)--(1.48,5.9341725)--(1.46,5.8358844)--(1.44,5.7229936)--(1.42,5.5871116)--(1.4,5.4057262)--cycle;
    \foreach \angle in {0, 180}
      \fill[scale=.5, blue, opacity=0.5, rotate around={\angle:(5,5)}] (0,0)--(0,10)--(0.02,9.76)--(0.04,9.58)--(0.06,9.4)--(0.08,9.24)--(0.1,9.1)--(0.12,8.94)--(0.14,8.8)--(0.16,8.66)--(0.18,8.54)--(0.2,8.4)--(0.22,8.28)--(0.24,8.16)--(0.26,8.04)--(0.28,7.92)--(0.3,7.82)--(0.32,7.7)--(0.34,7.6)--(0.36,7.48)--(0.38,7.38)--(0.4,7.28)--(0.42,7.18)--(0.44,7.08)--(0.46,6.98)--(0.48,6.9)--(0.5,6.8)--(0.52,6.7)--(0.54,6.62)--(0.56,6.52)--(0.58,6.44)--(0.6,6.36)--(0.62,6.28)--(0.64,6.18)--(0.66,6.1)--(0.68,6.02)--(0.7,5.94)--(0.72,5.88)--(0.74,5.8)--(0.76,5.72)--(0.78,5.64)--(0.8,5.58)--(0.82,5.5)--(0.84,5.42)--(0.86,5.36)--(0.88,5.3)--(0.9,5.22)--(0.92,5.16)--(0.94,5.1)--(0.96,5.02)--(0.98,4.96)--(1.0,4.9)--(1.02,4.84)--(1.04,4.78)--(1.06,4.72)--(1.08,4.66)--(1.1,4.6)--(1.12,4.56)--(1.14,4.5)--(1.16,4.44)--(1.18,4.38)--(1.2,4.34)--(1.22,4.28)--(1.24,4.22)--(1.26,4.18)--(1.28,4.12)--(1.3,4.08)--(1.32,4.04)--(1.34,3.98)--(1.36,3.94)--(1.38,3.9)--(1.4,3.84)--(1.42,3.8)--(1.44,3.76)--(1.46,3.72)--(1.48,3.68)--(1.5,3.62)--(1.52,3.58)--(1.54,3.54)--(1.56,3.5)--(1.58,3.48)--(1.6,3.44)--(1.62,3.4)--(1.64,3.36)--(1.66,3.32)--(1.68,3.28)--(1.7,3.24)--(1.72,3.22)--(1.74,3.18)--(1.76,3.14)--(1.78,3.12)--(1.8,3.08)--(1.82,3.06)--(1.84,3.02)--(1.86,2.98)--(1.88,2.96)--(1.9,2.92)--(1.92,2.9)--(1.94,2.88)--(1.96,2.84)--(1.98,2.82)--(2.0,2.78)--(2.02,2.76)--(2.04,2.74)--(2.06,2.7)--(2.08,2.68)--(2.1,2.66)--(2.12,2.64)--(2.14,2.6)--(2.16,2.58)--(2.18,2.56)--(2.2,2.54)--(2.22,2.52)--(2.24,2.5)--(2.26,2.48)--(2.28,2.44)--(2.3,2.42)--(2.32,2.4)--(2.34,2.38)--(2.36,2.36)--(2.38,2.34)--(2.4,2.32)--(2.42,2.3)--(2.44,2.28)--(2.46,2.26)--(2.48,2.26)--(2.5,2.24)--(2.52,2.22)--(2.54,2.2)--(2.56,2.18)--(2.58,2.16)--(2.6,2.14)--(2.62,2.12)--(2.64,2.12)--(2.66,2.1)--(2.68,2.08)--(2.7,2.06)--(2.72,2.04)--(2.74,2.04)--(2.76,2.02)--(2.78,2.0)--(2.8,1.98)--(2.82,1.98)--(2.84,1.96)--(2.86,1.94)--(2.88,1.94)--(2.9,1.92)--(2.92,1.9)--(2.94,1.88)--(2.96,1.88)--(2.98,1.86)--(3.0,1.84)--(3.02,1.84)--(3.04,1.82)--(3.06,1.82)--(3.08,1.8)--(3.1,1.78)--(3.12,1.78)--(3.14,1.76)--(3.16,1.74)--(3.18,1.74)--(3.2,1.72)--(3.22,1.72)--(3.24,1.7)--(3.26,1.68)--(3.28,1.68)--(3.3,1.66)--(3.32,1.66)--(3.34,1.64)--(3.36,1.64)--(3.38,1.62)--(3.4,1.62)--(3.42,1.6)--(3.44,1.6)--(3.46,1.58)--(3.48,1.58)--(3.5,1.56)--(3.52,1.54)--(3.54,1.54)--(3.56,1.52)--(3.58,1.52)--(3.6,1.5)--(3.62,1.5)--(3.64,1.48)--(3.66,1.48)--(3.68,1.48)--(3.7,1.46)--(3.72,1.46)--(3.74,1.44)--(3.76,1.44)--(3.78,1.42)--(3.8,1.42)--(3.82,1.4)--(3.84,1.4)--(3.86,1.38)--(3.88,1.38)--(3.9,1.38)--(3.92,1.36)--(3.94,1.36)--(3.96,1.34)--(3.98,1.34)--(4.0,1.32)--(4.02,1.32)--(4.04,1.32)--(4.06,1.3)--(4.08,1.3)--(4.1,1.28)--(4.12,1.28)--(4.14,1.26)--(4.16,1.26)--(4.18,1.26)--(4.2,1.24)--(4.22,1.24)--(4.24,1.22)--(4.26,1.22)--(4.28,1.22)--(4.3,1.2)--(4.32,1.2)--(4.34,1.2)--(4.36,1.18)--(4.38,1.18)--(4.4,1.16)--(4.42,1.16)--(4.44,1.16)--(4.46,1.14)--(4.48,1.14)--(4.5,1.14)--(4.52,1.12)--(4.54,1.12)--(4.56,1.12)--(4.58,1.1)--(4.6,1.1)--(4.62,1.08)--(4.64,1.08)--(4.66,1.08)--(4.68,1.06)--(4.7,1.06)--(4.72,1.06)--(4.74,1.04)--(4.76,1.04)--(4.78,1.04)--(4.8,1.02)--(4.82,1.02)--(4.84,1.02)--(4.86,1.0)--(4.88,1.0)--(4.9,1.0)--(4.92,0.98)--(4.94,0.98)--(4.96,0.98)--(4.98,0.96)--(5.0,0.96)--(5.02,0.96)--(5.04,0.94)--(5.06,0.94)--(5.08,0.94)--(5.1,0.94)--(5.12,0.92)--(5.14,0.92)--(5.16,0.92)--(5.18,0.9)--(5.2,0.9)--(5.22,0.9)--(5.24,0.88)--(5.26,0.88)--(5.28,0.88)--(5.3,0.88)--(5.32,0.86)--(5.34,0.86)--(5.36,0.86)--(5.38,0.84)--(5.4,0.84)--(5.42,0.84)--(5.44,0.82)--(5.46,0.82)--(5.48,0.82)--(5.5,0.82)--(5.52,0.8)--(5.54,0.8)--(5.56,0.8)--(5.58,0.8)--(5.6,0.78)--(5.62,0.78)--(5.64,0.78)--(5.66,0.76)--(5.68,0.76)--(5.7,0.76)--(5.72,0.76)--(5.74,0.74)--(5.76,0.74)--(5.78,0.74)--(5.8,0.74)--(5.82,0.72)--(5.84,0.72)--(5.86,0.72)--(5.88,0.72)--(5.9,0.7)--(5.92,0.7)--(5.94,0.7)--(5.96,0.68)--(5.98,0.68)--(6.0,0.68)--(6.02,0.68)--(6.04,0.66)--(6.06,0.66)--(6.08,0.66)--(6.1,0.66)--(6.12,0.64)--(6.14,0.64)--(6.16,0.64)--(6.18,0.64)--(6.2,0.62)--(6.22,0.62)--(6.24,0.62)--(6.26,0.62)--(6.28,0.62)--(6.3,0.6)--(6.32,0.6)--(6.34,0.6)--(6.36,0.6)--(6.38,0.58)--(6.4,0.58)--(6.42,0.58)--(6.44,0.58)--(6.46,0.56)--(6.48,0.56)--(6.5,0.56)--(6.52,0.56)--(6.54,0.54)--(6.56,0.54)--(6.58,0.54)--(6.6,0.54)--(6.62,0.54)--(6.64,0.52)--(6.66,0.52)--(6.68,0.52)--(6.7,0.52)--(6.72,0.5)--(6.74,0.5)--(6.76,0.5)--(6.78,0.5)--(6.8,0.5)--(6.82,0.48)--(6.84,0.48)--(6.86,0.48)--(6.88,0.48)--(6.9,0.48)--(6.92,0.46)--(6.94,0.46)--(6.96,0.46)--(6.98,0.46)--(7.0,0.44)--(7.02,0.44)--(7.04,0.44)--(7.06,0.44)--(7.08,0.44)--(7.1,0.42)--(7.12,0.42)--(7.14,0.42)--(7.16,0.42)--(7.18,0.42)--(7.2,0.4)--(7.22,0.4)--(7.24,0.4)--(7.26,0.4)--(7.28,0.4)--(7.3,0.38)--(7.32,0.38)--(7.34,0.38)--(7.36,0.38)--(7.38,0.38)--(7.4,0.36)--(7.42,0.36)--(7.44,0.36)--(7.46,0.36)--(7.48,0.36)--(7.5,0.34)--(7.52,0.34)--(7.54,0.34)--(7.56,0.34)--(7.58,0.34)--(7.6,0.34)--(7.62,0.32)--(7.64,0.32)--(7.66,0.32)--(7.68,0.32)--(7.7,0.32)--(7.72,0.3)--(7.74,0.3)--(7.76,0.3)--(7.78,0.3)--(7.8,0.3)--(7.82,0.3)--(7.84,0.28)--(7.86,0.28)--(7.88,0.28)--(7.9,0.28)--(7.92,0.28)--(7.94,0.26)--(7.96,0.26)--(7.98,0.26)--(8.0,0.26)--(8.02,0.26)--(8.04,0.26)--(8.06,0.24)--(8.08,0.24)--(8.1,0.24)--(8.12,0.24)--(8.14,0.24)--(8.16,0.24)--(8.18,0.22)--(8.2,0.22)--(8.22,0.22)--(8.24,0.22)--(8.26,0.22)--(8.28,0.22)--(8.3,0.2)--(8.32,0.2)--(8.34,0.2)--(8.36,0.2)--(8.38,0.2)--(8.4,0.2)--(8.42,0.18)--(8.44,0.18)--(8.46,0.18)--(8.48,0.18)--(8.5,0.18)--(8.52,0.18)--(8.54,0.18)--(8.56,0.16)--(8.58,0.16)--(8.6,0.16)--(8.62,0.16)--(8.64,0.16)--(8.66,0.16)--(8.68,0.14)--(8.7,0.14)--(8.72,0.14)--(8.74,0.14)--(8.76,0.14)--(8.78,0.14)--(8.8,0.14)--(8.82,0.12)--(8.84,0.12)--(8.86,0.12)--(8.88,0.12)--(8.9,0.12)--(8.92,0.12)--(8.94,0.12)--(8.96,0.1)--(8.98,0.1)--(9.0,0.1)--(9.02,0.1)--(9.04,0.1)--(9.06,0.1)--(9.08,0.1)--(9.1,0.1)--(9.12,0.08)--(9.14,0.08)--(9.16,0.08)--(9.18,0.08)--(9.2,0.08)--(9.22,0.08)--(9.24,0.08)--(9.26,0.06)--(9.28,0.06)--(9.3,0.06)--(9.32,0.06)--(9.34,0.06)--(9.36,0.06)--(9.38,0.06)--(9.4,0.06)--(9.42,0.04)--(9.44,0.04)--(9.46,0.04)--(9.48,0.04)--(9.5,0.04)--(9.52,0.04)--(9.54,0.04)--(9.56,0.04)--(9.58,0.04)--(9.6,0.02)--(9.62,0.02)--(9.64,0.02)--(9.66,0.02)--(9.68,0.02)--(9.7,0.02)--(9.72,0.02)--(9.74,0.02)--(9.76,0.02)--(9.78,0.0)--(9.8,0.0)--(9.82,0.0)--(9.84,0.0)--(9.86,0.0)--(9.88,0.0)--(9.9,0.0)--(9.92,0.0)--(9.94,0.0)--(9.96,0.0)--(9.98,0.0)--(10.0,0.0)--cycle;
      \draw[darkgray, dashed, thick] (0,0)--(5,5);
      \draw[darkgray, line] (0,0)--(5.5,0);
      \draw[darkgray, line] (0,0)--(0,5.5);
      \draw[blue] (5,0)--(5,5)--(0,5);
      \node at (5.5,0) [right] {$a_1$};
      \node at (0,5.5) [above] {$b_1$};
  \end{tikzpicture}
  \caption{The blue shaded region is the feasible region and the red closed curve is a level curve of the objective function.}
  \label{fig:feasible}
\end{figure}

Even in the simplest case where $n = 1$, maximizing the right hand side of \eqref{eqn:r-bound} in the feasible region described by \eqref{eqn:h3-h4} is a nonlinear non-convex optimization problem. \cref{fig:feasible} suggests that the optimum occurs when $a_1 = b_1$. Indeed, Belokopytov obtained his lower bound on $R(\Of)$ by setting $n = 1$ and $a_1 = b_1$ in \cref{thm:dueck-concrete}.

\begin{corollary}[Theorem~1 of Belokopytov~\cite{Bel89}] \label{lem:belokopytov}
  The rate pair $(R, R)$, where $R = H_2(1/2-\delta,1/2+\delta)$ and $\delta = 1/(2\log (2 + \sqrt3))$, belongs to the zero-error capacity region $\Of$ of the binary adder channel with complete feedback.
\end{corollary}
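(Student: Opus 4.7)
The plan is to apply \cref{thm:dueck-concrete} with $n = 1$, $p_1 = 1$, and $a_1 = b_1 = a := 1/2 - \delta$. The rate bound \eqref{eqn:r-bound} then immediately gives $R \le \tfrac{1}{2}(H_2(a, \aaa) + H_2(a, \aaa)) = H_2(1/2 - \delta, 1/2 + \delta)$, so the entire task reduces to verifying \eqref{eqn:h3-h4}, which in this scalar setting reads
\[
  S_3(2a\aaa + 2c) \ge S_4(a, a, c) \quad \text{for every } c \in [-a\aaa, a^2]
\]
(the right endpoint is $\min(a^2, \aaa^2) = a^2$ since $a < 1/2$).

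The next step would be to put this inequality in friendlier coordinates. Substituting $q := c + a\aaa \in [0, a]$ makes the four $S_4$-arguments become $(\aaa - q, q, q, a - q)$ and recasts $S_3(2q) = H_2(2q, 1 - 2q) + (1 - 2q)$. After writing $H_2(2q, 1 - 2q) = -2q - 2q\log q - (1 - 2q)\log(1 - 2q)$ and regrouping the $(a - q)\log(a - q)$ and $(\aaa - q)\log(\aaa - q)$ terms via the identity $(a - q) + (\aaa - q) = 1 - 2q$, the inequality collapses to $1 - 4q \ge (1 - 2q)\,H_2(1/2 + \delta/(1 - 2q),\, 1/2 - \delta/(1 - 2q))$. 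Setting $\eta := \delta/(1 - 2q) \in [\delta, 1/2]$ and rearranging finally yields the clean univariate statement
\[
  \delta \ge g(\eta) := \frac{\eta}{2 - H_2(1/2 + \eta,\, 1/2 - \eta)} \quad \text{for every } \eta \in [\delta, 1/2].
\]

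It would then remain to show that $\max_{\eta \in [\delta, 1/2]} g(\eta) = \delta$. Setting $g'(\eta) = 0$ and simplifying, the coefficients of $\log(1/2 + \eta)$ and $\log(1/2 - \eta)$ combine cleanly to leave the elementary equation $\log((1/2 + \eta)(1/2 - \eta)) = -4$, i.e., $\eta^2 = 3/16$, so the unique interior critical point is $\eta^* = \sqrt{3}/4$. Exploiting $(2 + \sqrt{3})(2 - \sqrt{3}) = 1$ simplifies $H_2((2 + \sqrt{3})/4,\, (2 - \sqrt{3})/4) = 2 - (\sqrt{3}/2)\log(2 + \sqrt{3})$, whence $g(\eta^*) = (\sqrt{3}/4) / ((\sqrt{3}/2)\log(2 + \sqrt{3})) = 1/(2\log(2 + \sqrt{3})) = \delta$ on the nose. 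A quick endpoint check---$g(\delta) < \delta$ because $H_2(1/2 + \delta, 1/2 - \delta) < 1$, and $g(1/2) = 1/4 < \delta$ because $\log(2 + \sqrt{3}) < 2$---confirms $\eta^*$ is the global maximum. The main obstacle is really the algebraic coincidence of this last step: the first-order condition has to be solvable in closed form \emph{and} the critical value of $g$ has to equal the prescribed $\delta$; both happen essentially by design, but extracting them is where the real work lies.
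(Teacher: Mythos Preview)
Your proof is correct and follows the paper's strategy exactly: apply \cref{thm:dueck-concrete} with $n=1$ and $a_1=b_1=1/2-\delta$, then optimize the single remaining variable to show the constraint \eqref{eqn:h3-h4} holds with equality at the extremum. The only difference is cosmetic: where the paper works directly with $M(c)$, shows $2^{M'(c)}$ is an increasing rational function of $c$, and solves $M'(c)=0$ to get $c^*=1/4-2\delta/\sqrt3+\delta^2$, your substitution $\eta=\delta/(1-2q)$ reparametrizes the same optimization so that the first-order condition collapses to $\log(1/4-\eta^2)=-4$; the critical points correspond (your $\eta^*=\sqrt3/4$ is precisely the paper's $c^*$ after the change of variables), and both verifications of the critical value reduce to the same identity $(2+\sqrt3)(2-\sqrt3)=1$.
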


Although \eqref{eqn:h3-h4} a priori represents infinitely many constraints, the second step essentially reduces it to a single constraint by eliminating the universal quantifier in \eqref{eqn:h3-h4}. To state this single constraint, we need the following technical lemma about convex combinations of certain functions.

\begin{lemma} \label{lem:unique-solution}
  For every $\spn \in \Delta^{n-1}$, $\san$, $\sbn$, and $x \in [0,1]$, for every $i \in [n]$ the equation
  \ifieee
  \begin{gather*}
    (a_ib_i - c)(\aaai\bbi - c)(2x)^2 = (a_i\bbi + c)(\aaai b_i + c)(1-x)^2, \\
    -\min(a_i\bbi, \aaai b_i) \le c \le \min(a_ib_i, \aaai\bbi)
  \end{gather*}
  \else
  \[
    (a_ib_i - c)(\aaai\bbi - c)(2x)^2 = (a_i\bbi + c)(\aaai b_i + c)(1-x)^2, \quad -\min(a_i\bbi, \aaai b_i) \le c \le \min(a_ib_i, \aaai\bbi)
  \]
  \fi
  has a unique solution $c = \cabi(x)$, and the equation
  \[
    x = \sumnpabcx, \quad x \in (0,1]
  \]
  has a unique solution $x = x^*$ if
  \ifieee
  $(p_1, \dots, p_n, a_1, \dots, a_n, \allowbreak b_1, \dots, b_n) \in D_n$,
  \else
  $\spabn \in D_n$,
  \fi
  where
  \ifieee
  \begin{multline*}
    D_n = \biggl\{\spabn \colon \\
    \sumn p_i\abs{a_i - b_i} > 0 \text{ or }\sumn p_i\sqrt{a_i\aaai} > 1/4\biggr\}.
  \end{multline*}
  \else
  \[
    D_n = \biggl\{\spabn \colon \sumn p_i\abs{a_i - b_i} > 0 \text{ or }\sumn p_i\sqrt{a_i\aaai} > 1/4\biggr\}.
  \]
  \fi
\end{lemma}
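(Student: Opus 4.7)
The lemma has two parts. For the first, fix $i$ and consider
\[
  \psi(c) := 4x^2 (a_ib_i - c)(\aaai\bbi - c) - (1-x)^2 (a_i\bbi + c)(\aaai b_i + c)
\]
on $C_i = \cinterval$. On $C_i$ the four factors $a_ib_i - c$, $\aaai\bbi - c$, $a_i\bbi + c$, $\aaai b_i + c$ are nonnegative; the first two are nonincreasing in $c$ while the last two are nondecreasing. Hence for $x \in (0, 1)$ the function $\psi$ is strictly decreasing on $C_i$ with $\psi \geq 0$ at the left endpoint and $\psi \leq 0$ at the right, and the intermediate value theorem yields a unique zero $c = \cabi(x)$. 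The cases $x \in \{0, 1\}$ reduce one side of the equation to zero, and a direct inspection shows that only one root of the resulting quadratic lies in $C_i$.

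For the second part, set $f_i(x) := a_i\bbi + \aaai b_i + 2\cabi(x)$, let $G(x) := \sumn p_i f_i(x)$, and write $H(x) := G(x) - x$; the plan is to show $H$ has a unique zero in $(0, 1]$. The implicit function theorem (applicable since $\psi'(c) \ne 0$) gives continuity of $\cabi$ in $x$, hence of $H$. Evaluating $\cabi$ at the endpoints of $C_i$ yields $f_i(0) = |a_i - b_i|$ and $f_i(1) = \min(a_i + b_i, 2 - a_i - b_i)$, so $H(0) = \sumn p_i|a_i-b_i| \geq 0$ and $H(1) \leq 0$. Under $D_n$, $H$ is strictly positive in some right-neighborhood of $0$: if $\sumn p_i|a_i - b_i| > 0$ this is immediate; otherwise $a_i = b_i$ whenever $p_i > 0$, so $H(0) = 0$, and with the substitution $u_i := a_i\bbi + \cabi$ the defining equation becomes $2x\sqrt{(a_i - u_i)(\aaai - u_i)} = (1-x)u_i$. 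Differentiating at $x = 0$ (where $u_i(0) = 0$) gives $u_i'(0) = 2\sqrt{a_i\aaai}$, so $H'(0^+) = 4\sumn p_i\sqrt{a_i\aaai} - 1 > 0$ by the second clause of $D_n$.

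The remaining step, which is the technical crux, is to show $G$ is strictly concave on $(0, 1)$. Under the substitution $u_i := a_i\bbi + \cabi$ the defining equation becomes $4x^2(a_i - u_i)(\bbi - u_i) = (1-x)^2 u_i(u_i - a_i + b_i)$, and its logarithmic derivative gives
\[
  \frac{2}{x(1-x)} = u_i'(x)\left(\frac{1}{u_i} + \frac{1}{u_i - a_i + b_i} + \frac{1}{a_i - u_i} + \frac{1}{\bbi - u_i}\right),
\]
from which $u_i'(x) > 0$ is transparent. A second differentiation reduces $u_i''(x) < 0$ to a polynomial inequality that can be verified using the defining equation (sanity-checked by the symmetric case $a_i = b_i = 1/2$, where $u_i(x) = x/(1+x)$); since $f_i = 2u_i - (a_i - b_i)$, strict concavity transfers to each $f_i$ and hence to $G$. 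With $G$ strictly concave, $H$ is also strictly concave on $(0, 1)$, continuous on $[0, 1]$, strictly positive just above $0$ and nonpositive at $1$; continuity gives a zero in $(0, 1]$, while strict concavity prevents a second zero because two zeros $x_1 < x_2$ would force $H > 0$ strictly between them, producing a local minimum at $x_1$. The main obstacle is establishing this polynomial inequality uniformly in the four-parameter family $(a_i, b_i, x, u_i)$, which requires careful algebraic manipulation tied back to the defining equation.
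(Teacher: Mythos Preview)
Your treatment of the first part (uniqueness of $c=\cabi(x)$) is correct and essentially matches the paper's argument in \cref{lem:cab-prop}. The existence argument for $x^*$ is also fine and agrees with the paper.

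The gap is in your uniqueness argument: the claimed strict concavity of $f_i(x)=a_i\bbi+\aaai b_i+2\cabi(x)$ is false whenever $a_i\neq b_i$ and $a_i,b_i\in(0,1)$. Using the explicit formula that arises from solving the defining quadratic, for $x\in[0,1/3)$ one has $f_i(x)=u+\sqrt{v}$ with $u=-4x^2/((1+x)(1-3x))$ and $v=u^2-2su+t^2$, where $s=a_i\bbi+\aaai b_i$ and $t=|a_i-b_i|$. A short Taylor expansion at $x=0$ gives
\[
  f_i(x)=t+\frac{8(s-t)}{t}\,\frac{x^2}{2}+O(x^3)\qquad(t>0),
\]
so $f_i''(0)=8(s-t)/t$. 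But $s-t=2\min(a_i,b_i)\cdot\min(\aaai,\bbi)>0$ for $a_i,b_i\in(0,1)$, hence $f_i''(0)>0$ and $f_i$ is \emph{convex} near $0$. Your sanity check $a_i=b_i=1/2$ lands exactly in the degenerate regime $t=0$ where this obstruction vanishes, which is why it passed; the polynomial inequality you anticipate simply does not hold in the generic case.

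The paper's route to uniqueness avoids any global shape hypothesis on $f_i$. It works with $\ph_i:=f_i(x)-x$ and proceeds by induction on $n$ (\cref{lem:uniqueness}). For $n=1$ one substitutes $c=(x-s)/2$ into the defining equation to get a quartic $m(x)$ and uses the identity $2m(x)-xm'(x)=2(1-x)\bigl((a-b)^2+3x^2\bigr)$, which rules out two zeros directly. The $n=2$ step shows the quotient $\ph_1/\ph_2$ is strictly monotone between the individual zeros, relying on the key single-function fact (Proposition~\ref{lem:phi-prop}\ref{lem:us-b}) that $\ph_{a,b}(x)=x\ph_{a,b}'(x)$ has no solution in $(0,1]$; this is what replaces concavity, and its proof in the appendix is a case-by-case polynomial verification. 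For $n\ge3$ a linear-dependency argument reduces to fewer terms. If you want to rescue your outline, the natural target is not $u_i''<0$ but rather the weaker ``tangent-line'' statement $\ph_i(x)\neq x\ph_i'(x)$, which is exactly what the paper isolates.
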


\begin{theorem} \label{thm:main}
  The average zero-error capacity $R(\Of)$ of the binary adder channel with complete feedback equals the optimum of the following optimization problem.
  \begin{align*}
    \text{Maximize: } & \tfrac12 \sumnpaabb, \\
    \text{subject to: } & S_3(x^*) \ge \sumnphabcxs, \\
    & n \in \mathbb{N}, \spabn \in D_n,
  \end{align*}
  where $\cabi(x)$, $x^*$, and $D_n$ are defined as in \cref{lem:unique-solution}.%
  \footnote{Using a linear dependency argument, $n$ can be assumed to be no more than $3$ in \cref{thm:main}.}
\end{theorem}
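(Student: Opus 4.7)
The plan is to deduce \cref{thm:main} from \cref{thm:dueck-concrete} by collapsing the universally quantified family \eqref{eqn:h3-h4} into the single inequality at the stationary configuration supplied by \cref{lem:unique-solution}. Fix $\spabn \in D_n$ with all $p_i > 0$ (dropping zero-weight indices) and view
\[
  f(c_1, \dots, c_n) := S_3\!\left(\sumnpabc\right) - \sumn p_i S_4(a_i, b_i, c_i)
\]
as a continuous function on the compact box $\prod_{i=1}^n C_i$. The constraint \eqref{eqn:h3-h4} is exactly $\min f \ge 0$, so the task is to show that the minimizer is $c^* := (\cabi(x^*))_{i=1}^n$, after which $f(c^*) = S_3(x^*) - \sumnphabcxs$, and this together with \eqref{eqn:r-bound} reads off the stated program.

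The interior critical-point analysis is direct: $S_3'(y) = \log\frac{1-y}{2y}$ and $\fpp{S_4(a,b,c)}{c} = \log\frac{(ab-c)(\aaa\bb - c)}{(a\bb + c)(\aaa b + c)}$ combine to give
\[
  \fpp{f}{c_i} = p_i\bigl(2 S_3'(x) - \fpp{S_4(a_i, b_i, c_i)}{c_i}\bigr), \quad x := \sumnpabc,
\]
so equating these partials to zero for every $i$ reproduces exactly the defining equation of $c_i = \cabi(x)$ in \cref{lem:unique-solution}; the self-consistency relation $x = \sumnpabcx$ then forces $x = x^*$, and the uniqueness clause of \cref{lem:unique-solution} guarantees that $c^*$ is the only interior stationary point. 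To promote uniqueness to a global minimum I will check the endpoints of each $C_i$: as $c_i$ approaches either endpoint, one factor in $\fpp{S_4(a_i,b_i,c_i)}{c_i}$ tends to $0^+$, pushing $\fpp{f}{c_i} \to -\infty$ at the lower endpoint and $\fpp{f}{c_i} \to +\infty$ at the upper one. Consequently $f$ strictly decreases as one moves from either endpoint into the interior along the $c_i$-direction, so the minimum on the compact box is attained in the interior and must therefore coincide with $c^*$.

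This identifies the feasible set of \cref{thm:dueck-concrete} restricted to $D_n$ with the feasible set of the stated program, yielding one side of the equality. The degenerate stratum $\spabn \notin D_n$, on which $a_i = b_i$ for every active $i$ and $\sumn p_i\sqrt{a_i\aaai} \le 1/4$, is absorbed by continuity: any achievable rate there arises as a limit of achievable rates in $D_n$ (obtained by perturbing some $a_i$ off $b_i$ or toward $1/2$), and both $x^*$ and the objective $\tfrac12\sumnpaabb$ vary continuously under such perturbations, so no supremum is lost by restricting to $D_n$. The principal obstacle is the identification of the unique stationary point with the global minimum of $f$, because $f$ is neither globally convex nor concave and so second-order tests are unavailable; the argument has to lean on the endpoint blow-up of $\fpp{f}{c_i}$ together with the interior uniqueness supplied by \cref{lem:unique-solution}.
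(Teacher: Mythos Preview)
Your approach is essentially the paper's, and identifying $\min f \ge 0$ with $f(c^*) \ge 0$ via the first-order condition is the right idea. Two places need patching.

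First, the endpoint blow-up of $\partial f/\partial c_i$ relies on $2S_3'(x) = 2\log\frac{1-x}{2x}$ being finite, i.e.\ on $0 < x < 1$ at the boundary point in question. But if $a_i = b_i$ for every active $i$ (which is allowed in $D_n$ when $\sum_i p_i\sqrt{a_i\aaai} > 1/4$), the corner where every $c_i = -a_i\aaai$ has $x = 0$, and both summands of $\partial f/\partial c_i$ diverge, leaving the sign indeterminate; your argument therefore does not exclude this corner from being a minimum. The paper does not try to exclude such boundary points. Instead it proves that at \emph{every} minimum one has $c_i = c_{a_i,b_i}(x)$ for the associated $x \in [0,1]$, handling $x \in \{0,1\}$ by a direct check rather than by derivatives, and then verifies separately that at $x = 0$ one has $f = 1 - \sum_i p_i H_2(a_ib_i - c_i,\aaai\bbi - c_i) \ge 0$ automatically. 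Hence only solutions with $x \in (0,1]$ contribute a genuine constraint, and for points in $D_n$ that solution is the unique $x^*$. You should adopt this two-step structure. (Also note the degenerate case $C_i = \{0\}$, i.e.\ $a_ib_i\aaai\bbi = 0$, where there is no blow-up to invoke but nothing to prove either, since $c_{a_i,b_i}(x) \equiv 0$.)

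Second, your continuity argument for the stratum outside $D_n$ is too loose: you assert that nearby perturbed points in $D_n$ remain feasible for the $x^*$-constraint, but you never verify this (one would need \cref{lem:diff}\ref{item:diff-x-star-0} and a check that the constraint becomes $1 - \sum_i p_i H_2(a_i,\aaai) \ge 0$ in the limit). The paper's route is shorter and avoids the issue: outside $D_n$ one has $a_i = b_i$ for every active $i$ and $\sum_i p_i\sqrt{a_i\aaai} \le 1/4$, so the objective satisfies $\sum_i p_i H_2(a_i,\aaai) \le 2\sum_i p_i\sqrt{a_i\aaai} \le 1/2$; since $R(\Of) > 1/2$ by \cref{lem:belokopytov}, these points can be discarded from the supremum outright.
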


The third step is to make use of the Lagrangian function of the optimization problem in \cref{thm:main} with a fixed Lagrange multiplier to get an upper bound on $R(\Of)$ that matches Belokopytov's lower bound in \cref{lem:belokopytov}.

\begin{theorem} \label{thm:lag}
  For every $\la \in (0, 1/2)$, define $L\colon D_n \to \mathbb{R}$ by
  \ifieee
  \begin{multline*}
    L\spabn := \\
    \tfrac12 \sumn p_i\left(H_2(a_i, \aaai) + H_2(b_i, \bbi)\right) \\
    + \la\left(S_3(x^*) - \sumnphabcxs\right),
  \end{multline*}
  \else
  \begin{multline*}
    L\spabn := \tfrac12 \sumn p_i\left(H_2(a_i, \aaai) + H_2(b_i, \bbi)\right) \\ + \la\left(S_3(x^*) - \sumnphabcxs\right),
  \end{multline*}
  \fi
  where $\cabi(x)$, $x^*$ and $D_n$ are defined as in \cref{lem:unique-solution}. If the supremum of $L$ is greater than $(1+\la)/2$, then $L$ has a global maximum point, and its maximum value is at most
  \ifieee
  \begin{multline*}
    \max\Bigl(1 - \la \log(1+x_1), H_2{\left(\tfrac{r}{4} - \tfrac{\sqrt3}{4}x_1, \tfrac{1}{4r} + \tfrac{\sqrt3}{4}x_1\right)} \\
    + \la\left(-1 + \tfrac{\sqrt3}{2} (\log r) (1-x_1)\right)\Bigr),
  \end{multline*}
  \else
  \[
    \max\left(1 - \la \log(1+x_1), H_2{\left(\tfrac{r}{4} - \tfrac{\sqrt3}{4}x_1, \tfrac{1}{4r} + \tfrac{\sqrt3}{4}x_1\right)} + \la\left(-1 + \tfrac{\sqrt3}{2} (\log r) (1-x_1)\right)\right),
  \]
  \fi
  where $r = 2+\sqrt{3}$ and $x_1 = 1/(1 + 2r(r^{2\la}-1)/(r^2 - r^{2\la}))$.
\end{theorem}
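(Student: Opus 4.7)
The plan is to prove \cref{thm:lag} by first showing that the hypothesis $\sup L > (1+\la)/2$ forces the supremum to be attained in the interior of $D_n$, then analyzing the KKT conditions of $L$ at a global maximum, and finally showing that only two parametric families of critical points contribute to the upper bound, each realizing one of the two arguments of the outer $\max$.

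First I would establish existence of a maximizer by a compactness argument. The Lagrangian $L$ extends continuously to the closure of $D_n$ inside the compact set $\Delta^{n-1} \times [0,1]^{2n}$. A direct calculation shows that on the portion of $\partial D_n$ where the fixed-point equation for $x^*$ degenerates (i.e.\ $\sumn p_i \abs{a_i - b_i} = 0$ together with $\sumn p_i \sqrt{a_i\aaai} \le 1/4$), the natural continuous extension of $L$ is bounded by $(1+\la)/2$. The hypothesis therefore forces the supremum to be attained in the interior of $D_n$. By the linear-dependency argument cited in the footnote of \cref{thm:main}, I may additionally assume $n \le 3$, reducing to a finite-dimensional optimization.

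At an interior maximizer the stationarity conditions $\fpp{L}{p_i} = \mu^*$, $\fpp{L}{a_i} = 0$, $\fpp{L}{b_i} = 0$ must hold for each atom in the support of $p$. I would compute these using implicit differentiation of the defining quadratic for $c_{a_i,b_i}(x)$ and of the fixed-point equation for $x^*$. The key simplification is an envelope-type cancellation: the indirect dependence of $L$ on $(p,a,b)$ through $x^*$ and through each $c_i^*$ can be shown to drop out along the feasible manifold, collapsing the KKT system to tractable polynomial identities in the per-atom variables $(a_i, b_i, c_i^*, x^*)$.

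Classifying the solutions of this system should yield two families. In the symmetric family, $a_i = b_i$ at every used atom, so $L$ reduces to a one-dimensional Belokopytov-style problem in the common variable (recall \cref{lem:belokopytov}); solving its single stationarity equation in $\la$ gives the closed form $x_1 = 1/(1 + 2r(r^{2\la}-1)/(r^2 - r^{2\la}))$ with $r = 2+\sqrt{3}$, and evaluating $L$ at the resulting critical point returns the second argument $H_2\bigl(\tfrac{r}{4} - \tfrac{\sqrt3}{4}x_1,\, \tfrac{1}{4r} + \tfrac{\sqrt3}{4}x_1\bigr) + \la\bigl(-1 + \tfrac{\sqrt3}{2}(\log r)(1-x_1)\bigr)$. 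The second family corresponds to a degenerate configuration in which some $a_i$ or $b_i$ is forced to $\set{0,1}$ by the vanishing of a Jacobian factor in the asymmetric KKT equations; substitution yields the first argument $1 - \la \log(1+x_1)$. Any other interior critical point can be shown to give $L \le (1+\la)/2$ and therefore does not affect the supremum.

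The principal obstacle will be the algebraic classification of the critical points, and in particular ruling out asymmetric interior solutions outside the degenerate family. The implicit dependence of $x^*$ and each $c_i^*$ on $(p,a,b)$ makes the stationarity equations opaque, so a judicious change of variables---passing, for instance, to coordinates such as $u_i = a_i\bbi + c_i^*$ and $v_i = \aaai b_i + c_i^*$, or reparametrizing by $x^*$ directly---will likely be needed to expose the $a_i \leftrightarrow b_i$ symmetry underlying the problem. Deriving the closed form for $x_1$ then amounts to solving a transcendental equation in $\la$ via the substitution $r^{2\la}$, which should emerge naturally from the stationarity relation once the algebra is streamlined.
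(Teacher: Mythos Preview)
Your high-level architecture (boundary control for existence, envelope cancellation in the KKT system, then classification of critical points) matches the paper's, but your classification of the two families is wrong, and this is a genuine gap rather than a matter of presentation.

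The paper shows in its Claim~3 that no maximizer can have any $a_i$ or $b_i$ on the boundary $\{0,1\}$: a perturbation toward the interior strictly increases $L$ because the relevant directional derivative blows up like $\log t^{2\la-1}$ with $2\la-1<0$. So your ``degenerate configuration in which some $a_i$ or $b_i$ is forced to $\{0,1\}$'' is precisely the case that is \emph{excluded}, not the one producing the bound $1-\la\log(1+x_1)$. The actual dichotomy arises from an algebraic lemma (\cref{lem:e1234} in the paper) applied to the stationarity equations $\partial L/\partial a_i=\partial L/\partial b_i=0$: writing $e_{1i},e_{2i},e_{3i},e_{4i}$ for the four probabilities at atom $i$, one always gets $e_{2i}=e_{3i}$ (equivalently $a_i=b_i$), and then either $e_{1i}=e_{4i}$ (forcing $a_i=b_i=1/2$) or $e_{1i}/e_{4i}$ is determined by $x^*$ via a function $\rho$. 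The two bounds in the theorem correspond to whether the maximizer uses a \emph{mixture} of $a_i=b_i=1/2$ atoms with $a_i=b_i\neq 1/2$ atoms (giving $1-\la\log(1+x^*)$, with $x^*>x_1$ shown separately), or uses only the latter type (giving the $H_2$ expression with $x^*=x_1$ exactly, where $x_1$ comes from the fixed-point equation forcing $r_1+1=4\sqrt{r_1}$, hence $r_1=r^{\pm 2}$).

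So the missing ingredient is not a change of variables to expose symmetry---the symmetry $a_i=b_i$ is forced directly by the KKT equations---but the elementary algebraic fact that the system $(e_1e_2)^\la/(e_1+e_2)=(e_3e_4)^\la/(e_3+e_4)$, $(e_1e_3)^\la/(e_1+e_3)=(e_2e_4)^\la/(e_2+e_4)$ on $\Delta^3$ has only the solution families listed above. Without this, your classification collapses and the derivation of $x_1$ (which you describe as coming from a one-dimensional Belokopytov problem in $\la$) cannot be carried out as stated.
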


We are ready to determine the exact value of $R(\Of)$.

\begin{proof}[Proof of \cref{thm:exact}]
  In view of \cref{lem:belokopytov}, it suffices to prove that $R(\Of) \le H_2(1/2-\delta, 1/2+\delta)$, where
  \[
    \delta = 1/(2\log r) \quad\text{and}\quad r = 2+\sqrt3.
  \]
  \cref{thm:main} says that there exists $\spabn \in D_n$ such that
  \begin{gather*}
    R(\Of) = \tfrac12 \sumnpaabb,\\
    S_3(x^*) \ge \sumnphabcxs.
  \end{gather*}
  
  Clearly $R(\Of) \le L\spabn$, where $L$ is defined as in \cref{thm:lag} for every $\la \in (0,1/2)$. With hindsight, we take
  \[
    \la := \delta\log\frac{1+2\delta}{1-2\delta} \approx 0.44424.
  \]
  \cref{thm:lag} implies that $R(\Of)$ is at most the largest of the following three quantities:
  \ifieee
  \begin{multline*}
    \tfrac{1+\la}{2}, \quad 1 + \la \log(1+x_1), \quad
    H_2{\left(\tfrac{r}{4} - \tfrac{\sqrt3}{4}x_1, \tfrac{1}{4r} + \tfrac{\sqrt3}{4}x_1\right)} \\ + \la\left(-1 + \tfrac{\sqrt3}{2} (\log r) (1-x_1)\right),
  \end{multline*}
  \else
  \[
    \tfrac{1+\la}{2}, \quad 1 + \la \log(1+x_1), \quad H_2{\left(\tfrac{r}{4} - \tfrac{\sqrt3}{4}x_1, \tfrac{1}{4r} + \tfrac{\sqrt3}{4}x_1\right)} + \la\left(-1 + \tfrac{\sqrt3}{2} (\log r) (1-x_1)\right),
  \]
  \fi
  where $x_1 = 1/(1 + 2r(r^{2\la}-1)/(r^2 - r^{2\la})) \approx 0.39225$.
  The above three quantities are approximately $0.72212, 0.76189$ and $0.78974$, and so the last quantity is the largest. Moreover, one can simplify $x_1$ to $1 - 4\delta/\sqrt3$, and the last quantity to $H_2(1/2-\delta, 1/2+\delta)$. Therefore $R(\Of) \le H_2(1/2-\delta, 1/2+\delta)$.
\end{proof}

The rest of the paper is organized as follows. In \cref{sec:connection} we document the folklore correspondence between the group testing  problem and the binary adder channel, and we prove \cref{thm:correspondence}. In \cref{sec:reformulation} we prove \cref{thm:dueck-concrete}, that is the reformulation of Dueck's characterization, and we reproduce the proof of \cref{lem:belokopytov}, that is Belokopytov's lower bound on $R(\Of)$. In \cref{sec:simplification} we prove \cref{thm:main} assuming \cref{lem:unique-solution}, which is proved later in \cref{sec:convex}. In \cref{sec:proof-of-main} we finish the proof of \cref{thm:lag}. We conclude in \cref{sec:open} with an open problem.

\section{Application to quantitative group testing} \label{sec:connection}

The connection between the quantitative group testing problem with two defectives and the binary adder channel is established through a variation of the group testing problem. Aigner \cite{Aig86} introduced this variation in the context of search problems on graphs. Consider two disjoint sets $X_1$ and $X_2$, each of which consists of $n$ elements. It is known that each set contains precisely one defective element. We again perform a series of tests on subsets of $X_1 \cup X_2$. Let $t(n, n)$ be the minimum number of tests required in this situation. Hao \cite{Hao90} observed that $t(n) \sim t(n, n)$.

We establish a one-to-one correspondence between the above variation of the quantitative group testing problem, and the binary adder channel with complete feedback.

\begin{theorem} \label{thm:bijection}
  For every $m, n \in \mathbb{N}$, there is a bijection between
  \begin{enumerate}[label=(\alph*)]
    \item adaptive group testing strategies using $m$ tests that identify the defectives in two disjoint $n$-element sets, each of which contains exactly one defective, and
    \item $(n,n,m)$ uniquely decodeable codes for the binary adder channel with complete feedback.
  \end{enumerate}
\end{theorem}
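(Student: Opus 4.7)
The plan is to construct the bijection explicitly by noting that both objects are parametrized by the same data: for each time step $k \in [m]$ and each possible history of previous outputs (equivalently, test outcomes) in $\{0,1,2\}^{k-1}$, a pair of indicator functions on $X_1$ and $X_2$ respectively. First I would fix identifications $X_1 \cong [n]$ and $X_2 \cong [n]$ and describe an adaptive group testing strategy as a rooted ternary tree of depth $m$ in which the node at depth $k-1$ reached by a history $(y_1, \dots, y_{k-1}) \in \{0,1,2\}^{k-1}$ is labelled by a query subset $S_k(y_1, \dots, y_{k-1}) \subseteq X_1 \cup X_2$. Equivalently, the strategy is the data of two functions $f_{jk}\colon X_j \times \{0,1,2\}^{k-1} \to \{0,1\}$ for $j \in \{1,2\}$ and $k \in [m]$, where $f_{jk}(w, y_1, \dots, y_{k-1}) = 1$ iff $w \in S_k(y_1, \dots, y_{k-1})$.

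Next, given a strategy, I would define the encoders of the binary adder channel with feedback by $e_{jk} := f_{jk}$. Since the outcome of the $k$-th test on defectives $(w_1, w_2) \in X_1 \times X_2$ is the count $\mathbb{1}[w_1 \in S_k] + \mathbb{1}[w_2 \in S_k]$, it coincides with the channel output $y_k = e_{1k}(w_1, y_1, \dots, y_{k-1}) + e_{2k}(w_2, y_1, \dots, y_{k-1})$. Hence, by induction on $k$, the sequence of test outcomes produced by $(w_1, w_2)$ equals the channel output sequence $y(w_1, w_2)$. The inverse map reconstructs $S_k(y_1, \dots, y_{k-1}) := \{w \in X_1 : e_{1k}(w, y_1, \dots, y_{k-1}) = 1\} \cup \{w \in X_2 : e_{2k}(w, y_1, \dots, y_{k-1}) = 1\}$. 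The two constructions are manifestly inverse to one another.

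Finally I would match the validity conditions. An adaptive strategy correctly identifies the two defectives from the outcomes iff the map $(w_1, w_2) \mapsto (y_1, \dots, y_m)$ is injective on $X_1 \times X_2$, and this is precisely the unique decodability of the associated code, i.e.\ $y(w_1, w_2) = y(w_1', w_2') \Leftrightarrow (w_1, w_2) = (w_1', w_2')$. So under the correspondence above, valid strategies correspond exactly to $(n,n,m)$ uniquely decodable codes.

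The argument is essentially bookkeeping, and there is no real obstacle; the only point that requires care is the inductive verification that a history $(y_1, \dots, y_{k-1})$ reached in the decision tree by $(w_1, w_2)$ really is the same as the channel feedback seen by the encoders at step $k$, so that the encoders and the query functions receive the same argument and the two formulations describe the same map $(w_1, w_2) \mapsto (y_1, \dots, y_m)$.
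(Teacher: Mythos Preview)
Your proposal is correct and follows essentially the same approach as the paper: both parametrize a strategy by, for each history $(y_1,\dots,y_{k-1})$, a pair of subsets of $X_1$ and $X_2$ (equivalently indicator functions), send these to the encoder functions $e_{jk}$ in the obvious way, verify that test outcomes coincide with channel outputs so the same output sequence is produced, and then identify the correctness condition with unique decodability. The only cosmetic difference is that the paper phrases the strategy via set-valued decision functions $d_{ik}$ rather than your indicator functions $f_{jk}$, which is the same data.
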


\begin{proof}
  Let $M_1$ and $M_2$ be two disjoint $n$-element sets, each of which contains one defective. Formally, an adaptive group testing strategy using $n$ tests on $M_1$ and $M_2$, consists of two sequences $(d_{1k})_{k=1}^m$ and $(d_{2k})_{k=1}^n$ of \emph{decision functions}, where each $d_{ik}$ is a function from $\set{0,1,2}^{k-1}$ to $2^{X_i}$.
  
  The interpretation of the decision functions $d_{1k}$ and $d_{2k}$ is as follows. Suppose $w_1 \in M_1$ and $w_2 \in M_2$ are the two defectives. Before the $k$th test, if the previous $k-1$ results are $\tilde{y}_1, \tilde{y}_2, \dots, \tilde{y}_{k-1} \in \set{0,1,2}$, then the $k$th tested subset is the union of $X_{1k} := d_{1k}(y_1, \dots, y_{k-1}) \subseteq X_1$ and $X_{2k} := d_{2k}(y_1, \dots, y_{k-1}) \subseteq X_2$, and the $k$th result $\tilde{y}_{k}(x_1, x_2) := \bm{1}_{X_{1k}}(w_1) + \bm{1}_{X_{2k}}(w_2)$, where $\bm{1}_{X_{ik}}$ is the indicator function of $X_{ik}$. Let $\tilde{y}(w_1, w_2)$ be the sequence $(\tilde{y}_1, \dots, \tilde{y}_m)$ of results. Clearly such an adaptive strategy can identify the defectives if and only if
  \ifieee
  \begin{multline*}
    \tilde{y}(w_1, w_2) = \tilde{y}(w_1', w_2') \Leftrightarrow (w_1, w_2) = (w_1', w_2'), \\
    \text{for all }(w_1, w_2), (w_1', w_2') \in M_1 \times M_2.
  \end{multline*}
  \else
  \[
    \tilde{y}(w_1, w_2) = \tilde{y}(w_1', w_2') \Leftrightarrow (w_1, w_2) = (w_1', w_2'), \text{ for all }(w_1, w_2), (w_1', w_2') \in M_1 \times M_2.
  \]
  \fi

  Recall that an $(n,n,m)$ uniquely decodable code consists of two sequences $(e_{1k})_{k=1}^m$ and $(e_{2k})_{k=1}^m$ of encoding functions, where each $e_{ik}$ is a function from $\set{0,1,2}^{k-1} \times M_i \to \set{0,1}$, where $M_i$ is the $n$-message set of the $i$th sender. During the $k$th use of the channel, the binary adder channel outputs $y_k := e_{1k}(y_1, \dots, y_{k-1}, w_1) + e_{2k}(y_1, \dots, y_{k-1}, w_2)$. Let $y(w_1, w_2)$ be the sequence $(y_1, \dots, y_m)$ of outputs.
  
  The bijection maps a pair of sequences $(d_{1k})_{k=1}^m$ and $(d_{2k})_{k=1}^m$ of decision functions to a pair of sequences $(e_{1k})_{k=1}^m$ and $(e_{2k})_{k=1}^m$ of encoding functions as follows:
  \ifieee
  \begin{multline*}
    e_{ik}(y_1, \dots, y_{k-1}, w_i) = \bm{1}_{d_{ik}(y_1, \dots, y_{k-1})}(w_i) \\
    \text{for every }y_1, \dots, y_{k-1} \in \set{0,1,2} \text{ and } w_i \in M_i.
  \end{multline*}
  \else
  \[
    e_{ik}(y_1, \dots, y_{k-1}, w_i) = \bm{1}_{d_{ik}(y_1, \dots, y_{k-1})}(w_i)\quad
    \text{for every }y_1, \dots, y_{k-1} \in \set{0,1,2} \text{ and } w_i \in M_i.
  \]
  \fi
  Apparently the inverse of the pair $(e_{1k})_{k=1}^m$ and $(e_{2k})_{k=1}^m$ is the pair $(d_{1k})_{k=1}^m$ and $(d_{2k})_{k=1}^m$ defined as follows:
  \ifieee
  \begin{multline*}
    d_{ik}(y_1, \dots, y_{k-1}) = \dset{w_i \in M_i}{e_{ik}(y_1, \dots, y_{k-1, w_i}) = 1} \\
    \text{for every }y_1, \dots, y_{k-1} \in \set{0,1,2}.
  \end{multline*}
  \else
  \[
    d_{ik}(y_1, \dots, y_{k-1}) = \dset{w_i \in M_i}{e_{ik}(y_1, \dots, y_{k-1, w_i}) = 1}\quad \text{for every }y_1, \dots, y_{k-1} \in \set{0,1,2}.
  \]
  \fi
  
  Finally it is routine to check that for every $(w_1, w_2) \in M_1 \times M_2$, the sequence $\tilde{y}(w_1, w_2)$ of results determined by $(d_{1k})_{k=1}^m$ and $(d_{2k})_{k=1}^m$, and the sequence $y(w_1, w_2)$ of outputs determined by $(e_{1k})_{k=1}^m$ and $(e_{2k})_{k=1}^m$, are exactly the same. Therefore the adaptive strategy $((d_{1k})_{k=1}^m, (d_{2k})_{k=1}^m)$ can identify the defectives if and only if the $(n,n,m)$ code $((e_{1k})_{k=1}^m, (e_{2k})_{k=1}^m)$ is uniquely decodable.
\end{proof}

\begin{proof}[Proof of \cref{thm:correspondence}]
  Suppose that $t(n) \sim t(n,n) \sim (\log n)/r$ for some positive constant $r$. By \cref{thm:bijection}, for every $n$ there exists an $(n, n, t(n, n))$ uniquely decodable code, hence
  \[
    r = \lim_{n\to\infty} \frac{\log n}{t(n,n)} \le R(\Of).
  \]
  Conversely, for every $R < R(\Of)$, there exists $n_0$ such that for every $n \ge n_0$ there exists a $(\ceil{2^{nR}}, \ceil{2^{nR}}, n)$ uniquely decodable code. By \cref{thm:bijection}, such a uniquely decodable code implies that $t(\ceil{2^{nR}},\ceil{2^{nR}}) \le n$ for all $n \ge n_0$, and so
  \[
    \frac{nR}{r} = \frac{\log 2^{nR}}{r} \sim t(\ceil{2^{nR}},\ceil{2^{nR}}) \le n.
  \]
  This implies $r \ge R(\Of)$. Therefore $r = R(\Of)$.
\end{proof}

\section{Reformulation and lower bound on \texorpdfstring{$R(\Of)$}{R(Of)}} \label{sec:reformulation}

We follow Belokopytov's computation in \cite{Bel89}.

\begin{proof}[Proof of \cref{thm:dueck-concrete}]
  In view of \cref{thm:dueck-rv}, we know that $(R, R) \in \Of$ if and only if there exist two Bernoulli random variables $X_1, X_2$ and an auxiliary discrete random variable $U$ such that
  \begin{subequations}
    \ifieee
    \begin{gather}
      R \le H(X_1 \mid U), \quad R \le H(X_2 \mid U), \label{eqn:dueck-1}\\
      \begin{split} \label{eqn:dueck-2}
        H(X_1', X_2' \mid U, X_1' + X_2') \le I(U; X_1' + X_2') \\
        \text{for every }(X_1', X_2') \in \mathcal{P}(X_1, X_2, U),
      \end{split}
    \end{gather}
    \else
    \begin{gather}
      R \le H(X_1 \mid U), \quad R \le H(X_2 \mid U), \label{eqn:dueck-1}\\
      H(X_1', X_2' \mid U, X_1' + X_2') \le I(U; X_1' + X_2') \text{ for every }(X_1', X_2') \in \mathcal{P}(X_1, X_2, U), \label{eqn:dueck-2}
    \end{gather}
    \fi
  \end{subequations}
  where $\mathcal{P}(X_1, X_2, U)$ consists of pairs $(X_1', X_2')$ of discrete random variables such that the conditional probability distributions satisfy
  \begin{equation} \label{eqn:p-x1-x2-u}
    p_{X_1'\mid U} = p_{X_1\mid U}, \quad p_{X_2'\mid U} = p_{X_2\mid U}.
  \end{equation}

  Suppose that the discrete random variable takes $n$ values. Without loss of generality, we may assume that the values of $U$ are $1, \dots, n$.
  By setting the conditional probability distribution as follows,
  \ifieee
  \begin{gather*}
    p_i := \pr(U = i),\\
    a_i := \pr(X_1 = 0 \mid U = i), \quad b_i := \pr(X_2 = 0 \mid U = i),
  \end{gather*}
  \else
  \[
    p_i := \pr(U = i), \quad a_i := \pr(X_1 = 0 \mid U = i), \quad b_i := \pr(X_2 = 0 \mid U = i),
  \]
  \fi
  we know that
  \[
    \pr(X_1 = 1 \mid U = i) = \aaai, \quad \pr(X_2 = 1 \mid U = i) = \bbi.
  \]
  Furthermore, one can deduce from \eqref{eqn:p-x1-x2-u} that for every $(X_1', X_2') \in \mathcal{P}(X_1, X_2, U)$ there exist $c_1 \in C_1, \dots, c_n \in C_n$, where $C_i := \cinterval$, such that
  \ifieee
  \begin{gather*}
    \pr(X_1' = 0, X_2' = 0 \mid U = i) = a_ib_i - c_i, \\
    \pr(X_1' = 0, X_2' = 1 \mid U = i) = a_i\bbi + c_i, \\
    \pr(X_1' = 1, X_2' = 0 \mid U = i) = \aaai b_i + c_i, \\
    \pr(X_1' = 1, X_2' = 1 \mid U = i) = \aaai\bbi - c_i.
  \end{gather*}
  \else
  \begin{gather*}
    \pr(X_1' = 0, X_2' = 0 \mid U = i) = a_ib_i - c_i, \quad \pr(X_1' = 0, X_2' = 1 \mid U = i) = a_i\bbi + c_i, \\
    \pr(X_1' = 1, X_2' = 0 \mid U = i) = \aaai b_i + c_i, \quad \pr(X_1' = 1, X_2' = 1 \mid U = i) = \aaai\bbi - c_i.
  \end{gather*}
  \fi
  By the chain rule of conditional entropy, we have
  \ifieee
  \begin{align*}
    & \phantom{{}={}} I(U; X_1' + X_2') - H(X_1', X_2' \mid U, X_1' + X_2') \\
    & = H(X_1' + X_2') - H(X_1' + X_2' \mid U) \\
    & \phantom{{}={}} - H(X_1', X_2' \mid U, X_1' + X_2') \\
    & = H(X_1' + X_2') - H(X_1', X_2' \mid U),
  \end{align*}
  \else
  \begin{align*}
    & \phantom{{}={}} I(U; X_1' + X_2') - H(X_1', X_2' \mid U, X_1' + X_2') \\
    & = H(X_1' + X_2') - H(X_1' + X_2' \mid U) - H(X_1', X_2' \mid U, X_1' + X_2') \\
    & = H(X_1' + X_2') - H(X_1', X_2' \mid U),
  \end{align*}
  \fi
  and so $H(X_1', X_2' \mid U, X_1' + X_2') \le I(U; X_1' + X_2')$ is equivalent to $H(X_1' + X_2') \ge H(X_1', X_2' \mid U)$.
  Therefore \eqref{eqn:dueck-1} and \eqref{eqn:dueck-2} can be rewritten respectively in terms of the parameters $p_i, a_i, b_i$ as
  \begin{subequations}
    \ifieee
    \begin{gather}
      R \le \sumn p_iH_2(a_i, \aaai), \quad R \le \sumn p_iH_2(b_i, \bbi), \label{eqn:dueck-1a} \\
      \begin{split} \label{eqn:dueck-2a}
        H_3{\Biggl( \sumn p_i(a_ib_i - c_i), \sumnpabc,} \\
        \sumn p_i(\aaai\bbi - c_i)\Biggr) \ge \sumn p_iS_4(a_i, b_i, c_i) \\
        \text{for every }(c_1, \dots, c_n) \text{ with }c_i \in C_i.
      \end{split}
    \end{gather}
    \else
    \begin{gather}
      R \le \sumn p_iH_2(a_i, \aaai), \quad R \le \sumn p_iH_2(b_i, \bbi), \label{eqn:dueck-1a} \\
      \begin{multlined} \label{eqn:dueck-2a}
        H_3{\left( \sumn p_i(a_ib_i - c_i), \sumnpabc, \sumn p_i(\aaai\bbi - c_i)\right)} \ge \sumn p_iS_4(a_i, b_i, c_i) \\
        \text{ for every }(c_1, \dots, c_n) \text{ with }c_i \in C_i.
      \end{multlined}
    \end{gather}
    \fi
  \end{subequations}

  It suffices to show that \eqref{eqn:dueck-1a} and \eqref{eqn:dueck-2a} are equivalent to
  \begin{subequations}
    \ifieee
    \begin{gather}
      R \le \tfrac12 \sumnpaabb, \label{eqn:6a} \\
      \begin{split} \label{eqn:6b}
        S_3{\left(\sumnpabc\right)} \ge \sumn p_i S_4(a_i, b_i, c_i) \\
        \text{for every }c_1 \in C_1, \dots, c_n \in C_n.
      \end{split}
    \end{gather}
    \else
    \begin{gather}
      R \le \tfrac12 \sumnpaabb, \label{eqn:6a} \\
      S_3\left(\sumnpabc\right) \ge \sumn p_i S_4(a_i, b_i, c_i), \text{ for every }c_1 \in C_1, \dots, c_n \in C_n. \label{eqn:6b}
    \end{gather}
    \fi
  \end{subequations}
  Clearly \eqref{eqn:dueck-1a} implies \eqref{eqn:6a}. Because $x \mapsto -x \log x$ is concave, Jensen's inequality shows that \eqref{eqn:dueck-2a} implies \eqref{eqn:6b}. Conversely, suppose that $R$, $\spn \in \Delta^{n-1}$ and $\sabn \in [0,1]^{2n}$ satisfy \eqref{eqn:6a} and \eqref{eqn:6b}. Define
  \ifieee
  \begin{gather*}
    (p_1', \dots, p_{2n}') := (p_1/2, \dots, p_n/2, p_1/2, \dots, p_n/2), \\
    (a_1', \dots, a_{2n}') := (a_1, \dots, a_n, \bb_1, \dots, \bb_n), \\
    (b_1', \dots, b_{2n}') := (b_1, \dots, b_n, \aaa_1, \dots, \aaa_n).
  \end{gather*}
  \else
  \begin{gather*}
    (p_1', \dots, p_{2n}') := (p_1/2, \dots, p_n/2, p_1/2, \dots, p_n/2), \\
    (a_1', \dots, a_{2n}') := (a_1, \dots, a_n, \bb_1, \dots, \bb_n), \quad
    (b_1', \dots, b_{2n}') := (b_1, \dots, b_n, \aaa_1, \dots, \aaa_n).
  \end{gather*}
  \fi
  One can check that $R$, $(p_1', \dots, p_{2n}')$ and $(a_1', \dots, a_{2n}')$, $(b_1', \dots, b_{2n}')$ would satisfy \eqref{eqn:dueck-1a} and \eqref{eqn:dueck-2a} (with $n$ replaced by $2n$).
\end{proof}

For completeness we include Belokopytov's proof of the lower bound on $R(\Of)$ using our notations.

\begin{proof}[Proof of \cref{lem:belokopytov}]
  By setting $n = 1$ and $a_1 = b_1 = 1/2 - \delta$ in \cref{thm:dueck-concrete}, where
  \[
    \delta = 1/(2\log(2+\sqrt3)),
  \]
  we reduce \eqref{eqn:h3-h4} to
  \ifieee
  \begin{multline*}
    M(c) := S_3(1/2-2\delta^2 + 2c) - S_4(1/2-\delta, 1/2-\delta, c) \ge 0 \\
    \text{for every }c \in C,
  \end{multline*}
  \else
  \[
    M(c) := S_3(1/2-2\delta^2 + 2c) - S_4(1/2-\delta, 1/2-\delta, c) \ge 0, \quad \text{for every }c \in C,
  \]
  \fi
  where $C = [-(1/4-\delta^2), (1/2-\delta)^2]$. One can compute that the derivative of $M$ satisfies
  \[
    2^{M'(c)} = \frac{(1/4+\delta^2-c)^2}{4((1/2-\delta)^2-c)((1/2+\delta)^2-c)}.
  \]
  One can then check that the right hand side is increasing on $C$, and it equals $1$ only at
  \[
    c = 1/4 - 2\delta/\sqrt3 + \delta^2 =: c^*.
  \]
  Therefore $M$ is minimized at $c = c^*$ on $C$. Finally, one can check directly that
  \ifieee
  \begin{align*}
    \phantom{{}={}} M(c^*) & = H_3{\left(\tfrac{2\delta}{\sqrt3}, \tfrac{\sqrt3-4\delta}{\sqrt3}, \tfrac{2\delta}{\sqrt3}\right)} \\
    & \phantom{{}={}} - H_4{\left(\tfrac{(2-\sqrt3)\delta}{\sqrt3}, \tfrac{1}{2}-\tfrac{2\delta}{\sqrt3}, \tfrac12-\tfrac{2\delta}{\sqrt3}, \tfrac{(2+\sqrt3)\delta}{\sqrt3}\right)} \\
    & = \biggl(-\tfrac{4}{\sqrt3}+\tfrac{2-\sqrt3}{\sqrt3}\log(2-\sqrt3) \\
    & \phantom{{}={}} + \tfrac{2+\sqrt3}{\sqrt3}\log(2+\sqrt3)+\tfrac{4}{\sqrt3}\biggr)\delta - 1,
  \end{align*}
  \else
  \begin{align*}
    M(c^*) & = H_3{\left(\frac{2\delta}{\sqrt3}, \frac{\sqrt3-4\delta}{\sqrt3}, \frac{2\delta}{\sqrt3}\right)} - H_4{\left(\frac{(2-\sqrt3)\delta}{\sqrt3}, \frac{1}{2}-\frac{2\delta}{\sqrt3}, \frac12-\frac{2\delta}{\sqrt3}, \frac{(2+\sqrt3)\delta}{\sqrt3}\right)} \\
    & = \left(-\frac{4}{\sqrt3}+\frac{2-\sqrt3}{\sqrt3}\log(2-\sqrt3) + \frac{2+\sqrt3}{\sqrt3}\log(2+\sqrt3)+\frac{4}{\sqrt3}\right)\delta - 1,
  \end{align*}
  \fi
  which is exactly $0$ according to the choice of $\delta$.
\end{proof}

\section{Quantifier elimination} \label{sec:simplification}

In this section we reduce \eqref{eqn:h3-h4} in \cref{thm:dueck-concrete} to a single constraint using standard tools from multi-variable calculus. We need the following properties of $\cab(x)$.

\begin{proposition} \label{lem:cab-prop}
  For every $a, b$ and $x \in [0,1]$, the equation
  \begin{subequations} \label{eqn:cab-prop-c-def}
    \begin{gather}
      (ab - c)(\aaa\bb - c)(2x)^2 = (a\bb + c)(\aaa b + c)(1-x)^2, \label{eqn:cab-prop-c-def-a} \\
      -\min(a\bb, \aaa b) \le c \le \min(ab, \aaa\bb)
    \end{gather}
  \end{subequations}
  has a unique solution $c = \cab(x)$, and moreover
  \begin{equation} \label{eqn:cab-prop-c-01}
    \cab(0) = -\min(a\bb, \aaa b) \text{ and }\cab(1) = \min(ab, \aaa \bb).
  \end{equation}
\end{proposition}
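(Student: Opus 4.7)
The plan is to analyze the two auxiliary quadratics
\[
  f(c) := (ab - c)(\aaa\bb - c), \qquad g(c) := (a\bb + c)(\aaa b + c)
\]
on the interval $I := [-\min(a\bb, \aaa b), \min(ab, \aaa\bb)]$. Observe that the left endpoint of $I$ is the larger of the two roots of $g$, while the right endpoint is the smaller of the two roots of $f$; consequently $f \ge 0$ and $g \ge 0$ throughout $I$, with $f$ vanishing at the right endpoint and $g$ vanishing at the left endpoint. After these preliminaries, the defining equation \eqref{eqn:cab-prop-c-def-a} can be rewritten as $h(c) := 4x^2 f(c) - (1-x)^2 g(c) = 0$.

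First I would dispatch the boundary values \eqref{eqn:cab-prop-c-01} by plugging $x = 0$ and $x = 1$ into \eqref{eqn:cab-prop-c-def-a}. Setting $x = 0$ reduces the equation to $g(c) = 0$, whose only root in $I$ is $-\min(a\bb, \aaa b)$ (the other root lies at or below the left endpoint). Setting $x = 1$ reduces it to $f(c) = 0$, whose only root in $I$ is $\min(ab, \aaa\bb)$.

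For $x \in (0, 1)$, I would show $h$ has a unique zero in $I$. Existence comes from the intermediate value theorem, since $h(-\min(a\bb, \aaa b)) = 4x^2 f(-\min(a\bb, \aaa b)) \ge 0$ and $h(\min(ab, \aaa\bb)) = -(1-x)^2 g(\min(ab, \aaa\bb)) \le 0$. Uniqueness reduces to showing that $h$ is strictly decreasing on $I$. A direct computation gives $f'(c) = 2c - (ab + \aaa\bb)$, which is non-positive on $I$ because $c \le \min(ab, \aaa\bb) \le \tfrac12(ab + \aaa\bb)$, and symmetrically $g'(c) = 2c + (a\bb + \aaa b) \ge 0$ on $I$. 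The key identity
\[
  g'(c) - f'(c) = (a\bb + \aaa b) + (ab + \aaa\bb) = (a + \aaa)(b + \bb) = 1
\]
prevents $f'$ and $g'$ from vanishing at the same point, so $h'(c) = 4x^2 f'(c) - (1-x)^2 g'(c) < 0$ on $I$ whenever $x \in (0,1)$, completing the uniqueness argument.

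The main thing to watch will be the degenerate case in which $a$ or $b$ lies in $\{0,1\}$, so that $I$ collapses to a single point; there the claim is vacuously true, with both endpoints of $I$ coinciding with the unique value of $\cab(x)$. Outside this degenerate scenario the argument above applies uniformly, and no hidden analytic subtlety is expected.
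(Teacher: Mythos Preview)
Your argument is correct and follows essentially the same route as the paper's proof: both establish existence and uniqueness for $x\in(0,1)$ by observing that the left-hand side of \eqref{eqn:cab-prop-c-def-a} is non-increasing in $c$ on $I$ while the right-hand side is non-increasing, with the boundary cases $x\in\{0,1\}$ handled directly. Your explicit identity $g'(c)-f'(c)=1$ is a clean way to secure \emph{strict} monotonicity of $h$ (the paper leaves this implicit), and your remark on the degenerate interval $I=\{0\}$ when $a$ or $b\in\{0,1\}$ is a useful clarification, but neither constitutes a genuinely different approach.
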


\begin{proof}
  In the extremal cases, it is easy to check that when $x = 0$, $c = -\min(a\bb, \aaa b)$ is the unique solution; when $x = 1$, $c = \min(ab, \aaa\bb)$ is the unique solution. In both extremal cases, \eqref{eqn:cab-prop-c-def} has a unique solution and \eqref{eqn:cab-prop-c-01} holds. For $0 < x < 1$, as $c$ increases from $-\min(a\bb, \aaa b)$ to $\min(ab, \aaa\bb)$, the left hand side of \eqref{eqn:cab-prop-c-def-a} decreases to $0$ whereas the right hand side increases from $0$, and so \eqref{eqn:cab-prop-c-def} has a unique solution.
\end{proof}

\begin{lemma} \label{lem:q-elim}
  For every $\spn \in \Delta^{n-1}$, $\san$ and $\sbn \in [0,1]$, set
  \[
    C_i := \cinterval.
  \]
  Then the inequality
  \[
    S_3\left(\sumnpabc\right) \ge \sumn p_i S_4(a_i, b_i, c_i)
  \]
  holds for every $c_1 \in C_1, \dots, c_n \in C_n$ if and only if the inequality
  \[
    S_3(x) \ge \sumn p_iS_4(a_i,b_i,c_{a_i,b_i}(x))
  \]
  holds for every solution $x$ to the equation
  \begin{equation} \label{eqn:unique-x}
    x = \sumnpabcx, \quad x \in (0,1],
  \end{equation}
  where $\cabi(x)$ is the unique solution to the following equation for $c$ in $C_i$:
  \[
    (a_ib_i - c)(\aaai\bbi - c)(2x)^2 = (a_i\bbi + c)(\aaai b_i + c)(1-x)^2.
  \]
\end{lemma}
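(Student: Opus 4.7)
The forward direction is immediate by specialising $c_i := \cabi(x)$ at any fixed-point solution $x \in (0,1]$. For the reverse direction, my plan is to recast the universally quantified inequality as a one-variable inequality in $y := \sumnpabc$, identify its interior critical points with the fixed-point solutions in \cref{lem:unique-solution}, and rule out the endpoints of the $y$-range as non-trivial minimisers. Because $c \mapsto S_4(a,b,c)$ is strictly concave (its second derivative is $-\tfrac{1}{\ln 2}\bigl(\tfrac{1}{ab-c}+\tfrac{1}{\aaa\bb-c}+\tfrac{1}{a\bb+c}+\tfrac{1}{\aaa b+c}\bigr) < 0$), for every $y$ in the image interval $Y := [\sumn p_i\abs{a_i-b_i},\, \sumn p_i(1-\abs{a_i+b_i-1})]$ the maximum
\[
    g(y) := \max\biggl\{\sumn p_i S_4(a_i,b_i,c_i) : c_i \in C_i,\ \sumnpabc = y\biggr\}
\]
is uniquely attained, and the original inequality is equivalent to $\phi(y) := S_3(y) - g(y) \ge 0$ on $Y$.

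Next I would characterise the critical points of $\phi$ on the interior of $Y$. For $y \in (y_{\min}, y_{\max})$ the unique maximiser lies in the interior of $\prod_i C_i$ (otherwise $\partial S_4/\partial c_i$ diverges, violating KKT), so the Lagrange conditions read
\[
    \log\frac{(a_ib_i-c_i)(\aaai\bbi-c_i)}{(a_i\bbi+c_i)(\aaai b_i+c_i)} = 2\nu
\]
for a common multiplier $\nu$; comparison with the defining equation of $\cabi$ in \cref{lem:unique-solution} forces $c_i = \cabi(x)$ for a common $x \in (0,1)$ satisfying $(1-x)^2/(2x)^2 = 2^{2\nu}$. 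The envelope theorem then delivers $g'(y) = \nu = \log\frac{1-x(y)}{2x(y)}$, and combined with $S_3'(y) = \log\frac{1-y}{2y}$ this produces $\phi'(y) = \log\frac{x(y)(1-y)}{y(1-x(y))}$, which vanishes iff $x(y) = y$, i.e.\ iff $y$ solves the fixed-point equation in \cref{lem:unique-solution}. In particular any interior minimiser of $\phi$ has $y \in (0,1) \subseteq (0,1]$ and is covered by the reduced hypothesis.

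The remaining work is the endpoint analysis. As $y \to y_{\min}^+$ some $c_i \to -\min(a_i\bbi, \aaai b_i)$, driving $x(y) \to 0$ and $g'(y) \to +\infty$; if $y_{\min} > 0$ then $S_3'(y_{\min})$ is finite, so $\phi'(y) \to -\infty$ and $\phi$ falls away from $y_{\min}$. The degenerate case $y_{\min} = 0$ forces $a_i = b_i$ for every $i$ with $p_i > 0$, whence $c_i = -a_i\aaai$ and each $S_4$ collapses to $H_2(a_i,\aaai)$, yielding $\phi(0) = 1 - \sumn p_i H_2(a_i,\aaai) \ge 0$ directly. Symmetrically, $x(y) \to 1$ and $g'(y) \to -\infty$ as $y \to y_{\max}^-$: if $y_{\max} < 1$ then $\phi'(y) \to +\infty$ rules out $y_{\max}$, while if $y_{\max} = 1$ then $x = 1$ is itself a fixed-point solution in $(0,1]$ which the reduced hypothesis covers. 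Together these cases show $\phi \ge 0$ on $Y$, proving the reverse direction.

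The step I expect to be the main obstacle is this endpoint analysis: one must carefully track the divergence of the Lagrange multiplier $\nu(y)$ as the slice polytope $\{c \in \prod_i C_i : \sumnpabc = y\}$ degenerates at the ends of $Y$, and verify $\phi(0) \ge 0$ in the degenerate $y_{\min}=0$ case via the concavity of entropy. By contrast, the interior analysis reduces to a clean envelope/Lagrange-multiplier computation that ties critical points of $\phi$ directly to the fixed-point equation.
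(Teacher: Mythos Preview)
Your argument is correct and reaches the same conclusion as the paper's, but via a two-level optimization rather than a direct minimization. The paper defines
\[
  M(c_1,\dots,c_n) := S_3\Bigl(\sumn p_i(a_i\bbi+\aaai b_i+2c_i)\Bigr) - \sumn p_i S_4(a_i,b_i,c_i)
\]
and works directly on the box $C = \prod_i C_i$: at a minimum, for each $i$ either $\partial M/\partial c_i = 0$, which after one line of calculus gives the defining equation of $c_{a_i,b_i}(x)$ with $x$ equal to the argument of $S_3$ (so $x$ is automatically a fixed point), or $c_i \in \partial C_i$, which is ruled out by the same logarithmic divergence you invoke; the case $x=0$ is then checked by hand exactly as you do. Your route instead first maximizes out the $c_i$'s at fixed level $y$ to form $g(y)$, then uses the envelope theorem to obtain $g'(y) = \nu(y)$ and recognize $\phi'(y)=0$ as the fixed-point condition $y=x(y)$. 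This buys a clean one-variable picture and the explicit formula $\phi'(y)=\log\frac{x(y)(1-y)}{y(1-x(y))}$, at the cost of the regularity needed for the envelope theorem and the longer endpoint analysis at $y_{\min}$ and $y_{\max}$ that you correctly anticipate. The paper's direct approach sidesteps the envelope theorem entirely, and its boundary handling is a one-line limit computation rather than a separate case split; in substance, though, both proofs unwind the same first-order conditions.
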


\begin{proof}
  Fix $\spn \in \Delta^{n-1}$, $\san$, and $\sbn \in [0,1]$. Note that the ``if and only if'' statement does not change at all if we drop the summands with $p_i = 0$ in the sums. Without loss of generality, we may assume that $p_i > 0$ for every $i \in [n]$. Define $M\colon C \to \mathbb{R}$ by
  \ifieee
  \begin{multline*}
    M(c_1, \dots, c_n) := S_3{\left(\sumnpabc\right)} \\ - \sumn p_iS_4(a_i,b_i,c_i), \quad C := C_1 \times \dots \times C_n.
  \end{multline*}
  \else
  \[
    M(c_1, \dots, c_n) := S_3{\left(\sumnpabc\right)} - \sumn p_iS_4(a_i,b_i,c_i), \quad C := C_1 \times \dots \times C_n.
  \]
  \fi
  The ``only if'' direction is trivial. For the ``if'' direction, we characterize the minimum points of $M$.
  
  \begin{claimm}
    Every minimum point $(c_1, \dots, c_n) \in C$ of the function $M$ satisfies
    \ifieee
    \begin{multline*}
      c_i = \cabi(x) \text{ for every }i\in[n], \\
      \text{where } x = \sumnpabc.
    \end{multline*}
    \else
    \[
      c_i = \cabi(x) \text{ for every }i\in[n], \text{ where } x = \sumnpabc.
    \]
    \fi
  \end{claimm}
  
  \begin{claimproof}
    Suppose that $(c_1, \dots, c_n)$ is a minimum point of $M$, and set
    \[
      x := \sumnpabc.
    \]
    From the assumption $c_i \in C_i$, it is easy to check $0 \le a_i\bbi + \aaai b_i + 2c_i \le 1$, and so $x \in [0,1]$. In the extremal cases, if $x = 0$ then $a_i\bbi + c_i = \aaai b_i + c_i = 0$, and so $c_i = -\min(a_i\bbi, \aaai b_i)$; and if $x = 1$ then $a_i\bbi + c_i = \aaai b_i + c_i = 1$, and so $c_i = \min(a_ib_i, \aaai\bbi)$. In both extremal cases, in view of \eqref{eqn:cab-prop-c-01}, $c_i = \cabi(x)$ for every $i \in [n]$.
  
    Hereafter we may assume that $0 < x < 1$. Since $M$ is a differentiable function on the interior of $C$, for every $i \in [n]$, $\partial M(c_1, \dots, c_n)/\partial c_i = 0$ or $c_i$ is on the boundary of $C_i$. Take an arbitrary $i \in [n]$. We break down the rest of the proof into two cases.

    \ifieee
      \medskip
      \textit{Case 1:} $\partial M(c_1, \dots, c_n) / \partial c_i = 0$.
    \else
      \paragraph{Case 1: $\partial M(c_1, \dots, c_n) / \partial c_i = 0$.}
    \fi
    We can compute
    \ifieee
    \begin{multline} \label{eqn:partial-di}
      \fpp{M}{c_i}(c_1, \dots, c_n) = p_i\biggl(2\log\mathopen{}\left(\frac{1-x}{2x}\right)\mathclose{} \\
      - \log\frac{(a_ib_i-c_i)(\aaai\bbi-c_i)}{(a_i\bbi + c_i)(\aaai b_i + c_i)}\biggr).
    \end{multline}
    \else
    \begin{equation} \label{eqn:partial-di}
      \fpp{M}{c_i}(c_1, \dots, c_n) = p_i\left(2\log\mathopen{}\left(\frac{1-x}{2x}\right)\mathclose{} - \log\frac{(a_ib_i-c_i)(\aaai\bbi-c_i)}{(a_i\bbi + c_i)(\aaai b_i + c_i)}\right).
    \end{equation}
    \fi
    Thus we know that
    \[
      \frac{(a_ib_i-c_i)(\aaai\bbi-c_i)}{(a_i\bbi + c_i)(\aaai b_i + c_i)} = \left(\frac{1-x}{2x}\right)^2,
    \]
    which implies that $c_i = \cabi(x)$.

    \ifieee
      \medskip
      \textit{Case 2:} $c_i \in \partial C_i$.
    \else
      \paragraph{Case 2: $c_i \in \partial C_i$.}
    \fi
    When $C_i = \set{0}$ is a degenerate interval, we have $a_ib_i\aaai \bbi = 0$ and $c_i = 0$, the former of which implies $\cabi(x) = 0$, and so $c_i = \cabi(x)$. Hereafter we consider the case where $C_i$ is a proper interval. For the $c_i = -\min(a_i\bbi, \aaai b_i)$ case, \eqref{eqn:partial-di} implies that
    \ifieee
    \begin{align*}
      & \phantom{{}={}} \lim_{c_i' \to c_i^+} \fpp{M}{c_i} (c_1, \dots, c_{i-1}, c_i', c_{i+1}, \dots, c_n) \\
      & = \lim_{c_i' \to c_i^+} p_i\left(2\log\mathopen{}\left(\frac{1-x}{2x}\right)\mathclose{}-\log\frac{(a_ib_i-c_i)(\aaai\bbi-c_i)}{(a_i\bbi + c_i)(\aaai b_i + c_i)}\right) \\
      & = -\infty,
    \end{align*}
    \else
    \begin{multline*}
      \lim_{c_i' \to c_i^+} \fpp{M}{c_i} (c_1, \dots, c_{i-1}, c_i', c_{i+1}, \dots, c_n) \\
      = \lim_{c_i' \to c_i^+} p_i\left(2\log\mathopen{}\left(\frac{1-x}{2x}\right)\mathclose{}-\log\frac{(a_ib_i-c_i)(\aaai\bbi-c_i)}{(a_i\bbi + c_i)(\aaai b_i + c_i)}\right) = -\infty,
    \end{multline*}
    \fi
    which contradicts with the assumption that $(c_1, \dots, c_n)$ is a minimum point. For the $c_i = \min(a_ib_i, \aaai\bbi)$ case, a similar computation gives
    \[
      \lim_{c_i' \to c_i^-} \fpp{M}{c_i} (c_1, \dots, c_{i-1}, c_i', c_{i+1}, \dots, c_n)=\infty,
    \]
    which leads to the same contradiction.
  \end{claimproof}
  Coming back to the ``if'' direction, let $(c_1, \dots, c_n)$ be a global minimum point of $M$. From the claim, we know that $c_i = \cabi(x)$ for every $i \in [n]$, where
  \[
    x = \sumnpabc, \quad x \in [0,1].
  \]
  If $x = 0$, then $a_i\bbi + c_i = \aaai b_i + c_i = 0$, and so
  \ifieee
  \begin{multline*}
    M(c_1, \dots, c_n) = S_3(0) - \sumn p_iS_4(a_i, b_i, c_i) \\
    = 1 - \sumn p_i H_2(a_ib_i - c_i,\aaai\bbi - c_i) \ge 1 - \sumn p_i = 0.
  \end{multline*}
  \else
  \[
    M(c_1, \dots, c_n) = S_3(0) - \sumn p_iS_4(a_i, b_i, c_i) = 1 - \sumn p_i H_2(a_ib_i - c_i,\aaai\bbi - c_i) \ge 1 - \sumn p_i = 0.
  \]
  \fi
  Otherwise $x$ is a solution to \eqref{eqn:unique-x}, and our assumption ensures that $M(c_1, \dots, c_n) \ge 0$ as well. Therefore the minimum of $M$ over $C$ is at least $0$.
\end{proof}

Observe that the existence and the uniqueness of the solution to \eqref{eqn:unique-x} is guaranteed by \cref{lem:unique-solution} for almost all $\spn \in \Delta^{n-1}$, $\san$ and $\sbn \in [0,1]$. Now we are ready to prove \cref{thm:main}.

\begin{proof} [Proof of \cref{thm:main}]
  In view of \cref{thm:dueck-concrete,lem:q-elim}, we know that the average zero-error capacity $R(\Of)$ is the optimum of the following optimization problem.
  \ifieee
  \begin{align*}
    \text{Maximize: } & \tfrac12 \sumnpaabb, \\
    \text{subject to: } & S_3(x) \ge \sumn p_iS_4(a_i, b_i, c_i(x)), \forall x\in (0,1] \\
    & \quad \text{s.t.\ } x = \sumnpabcx, \\
    & \spn \in \Delta^{n-1}, \\
    & \sabn \in [0,1]^{2n}.
  \end{align*}
  \else
  \begin{align*}
    \text{Maximize: } & \tfrac12 \sumnpaabb, \\
    \text{subject to: } & S_3(x) \ge \sumn p_iS_4(a_i, b_i, c_i(x)), \forall x\in (0,1] \text{ s.t.\ } x = \sumnpabcx, \\
    & \spn \in \Delta^{n-1}, \quad \sabn \in [0,1]^{2n}.
  \end{align*}
  \fi

  Recall from \cref{lem:unique-solution} that
  \ifieee
  \begin{multline*}
    D_n = \biggl\{\spabn \colon \\
    \sumn p_i\abs{a_i - b_i} > 0 \text{ or }\sumn p_i\sqrt{a_i\aaai} > 1/4\biggr\}.
  \end{multline*}
  \else
  \[
  D_n = \dset{\spabn}{\sumn p_i\abs{a_i - b_i} > 0 \text{ or }\sumn p_i\sqrt{a_i\aaai} > 1/4}.
  \]
  \fi
  If $\spabn \not\in D_n$, then $p_i = 0$ or $a_i = b_i$ for every $i \in [n]$ and $\sumn p_i\sqrt{a_i\aaa_i} \le 1/4$, hence the fact that $H_2(a, \aaa) \le 2\sqrt{a\aaa}$ for $a \in [0,1]$ implies
  \ifieee
  \begin{multline*}
    \tfrac12 \sumnpaabb \\
    = \sumn p_iH_2(a_i, \aaai) \le 2\sumn p_i\sqrt{a_i\aaai} \le 1/2.
  \end{multline*}
  \else
  \[
    \tfrac12 \sumnpaabb = \sumn p_iH_2(a_i, \aaai) \le 2\sumn p_i\sqrt{a_i\aaai} \le 1/2.
  \]
  \fi
  Since $R(\Of) > 1/2$ from \cref{lem:belokopytov}, we may assume that $\spabn \in D_n$ in the optimization problem. Thus \cref{lem:unique-solution} shows that the equation
  \[
    x = \sumnpabcx, \quad x \in (0,1]
  \]
  has a unique solution $x = x^*$. Therefore we can replace $x$ by the unique solution $x^*$ in the constraint:
  \ifieee
  \begin{align*}
    \text{Maximize: } & \tfrac12 \sumnpaabb, \\
    \text{subject to: } & S_3(x^*) \ge \sumnphabcxs, \\
    & \spabn \in D_n.
  \end{align*}
  \else
  \begin{align*}
    \text{Maximize: } & \tfrac12 \sumnpaabb, \\
    \text{subject to: } & S_3(x^*) \ge \sumnphabcxs, \\
    & \spabn \in D_n. \qedhere
  \end{align*}
  \fi
\end{proof}

\section{Convex combination of special functions} \label{sec:convex}

In this section we investigate the following special functions (see \cref{fig:phi-a-b} for their graphs). For every $a, b \in [0, 1]$, define the function $\phab\colon [0,1]\to\mathbb{R}$ by
\begin{equation} \label{eqn:phi-def}
  \phab(x) = a\bb + \aaa b + 2\cab(x) - x,
\end{equation}
where $\cab(x)$ is the unique solution to the following equation for $c$:
\begin{gather*}
  (ab-c)(\aaa \bb - c)(2x)^2 = (a\bb + c)(\aaa b+c)(1-x)^2, \\
  -\min(a\bb, \aaa b) \le \cab(x) \le \min(ab, \aaa\bb).
\end{gather*}

\begin{figure}[t]
  \centering
  \begin{tikzpicture}[baseline]
    \begin{axis}[general]
      \foreach \a in {0.0, 0.04, ..., 0.5} {
        \pgfmathsetmacro\b{\a}
        \pgfmathsetmacro\k{int((\a + \b) * 100)}
        \edef\AddPlot{\noexpand\addplot[smooth,color=red!\k!blue, domain=0:1,samples=50] {(sqrt(16 * x^4 - 8 * (\a + \b - 2 * \a * \b) * x^2 * (x + 1) * (3 * x - 1) + (\a - \b)^2 * (x + 1)^2 * (3 * x - 1)^2) + x * (3 * x + 1) * (x - 1))/((1 + x) * (1 - 3 * x))};}
        \AddPlot
      }
      \node at (0.5,-1) [below] {$t=0$};
    \end{axis}
  \end{tikzpicture}%
  \quad
  \begin{tikzpicture}[baseline]
    \begin{axis}[general]
      \foreach \a in {0.0, 0.04, ..., 0.4} {
        \pgfmathsetmacro\b{\a + 0.2}
        \pgfmathsetmacro\k{int((\a + \b - 0.2) / 0.8 * 100)}
        \edef\AddPlot{\noexpand\addplot[smooth,color=red!\k!blue, domain=0:1,samples=50] {(sqrt(16 * x^4 - 8 * (\a + \b - 2 * \a * \b) * x^2 * (x + 1) * (3 * x - 1) + (\a - \b)^2 * (x + 1)^2 * (3 * x - 1)^2) + x * (3 * x + 1) * (x - 1))/((1 + x) * (1 - 3 * x))};}
        \AddPlot
      }
      \node at (0.5,-1) [below] {$t = 0.2$};
    \end{axis}
  \end{tikzpicture}%
  \ifieee
  \\ \bigskip
  \else
  \quad
  \fi
  \begin{tikzpicture}[baseline]
    \begin{axis}[general]
      \foreach \a in {0.0, 0.04, ..., 0.3} {
        \pgfmathsetmacro\b{\a + 0.4}
        \pgfmathsetmacro\k{int((\a + \b - 0.4) / 0.6 * 100)}
        \edef\AddPlot{\noexpand\addplot[smooth,color=red!\k!blue, domain=0:1,samples=50] {(sqrt(16 * x^4 - 8 * (\a + \b - 2 * \a * \b) * x^2 * (x + 1) * (3 * x - 1) + (\a - \b)^2 * (x + 1)^2 * (3 * x - 1)^2) + x * (3 * x + 1) * (x - 1))/((1 + x) * (1 - 3 * x))};}
        \AddPlot
      }
      \node at (0.5,-1) [below] {$t = 0.4$};
    \end{axis}
  \end{tikzpicture}%
  \quad
  \begin{tikzpicture}[baseline]
    \begin{axis}[general]
      \foreach \a in {0.0, 0.04, ..., 0.2} {
        \pgfmathsetmacro\b{\a + 0.6}
        \pgfmathsetmacro\k{int((\a + \b - 0.6) / 0.4 * 100)}
        \edef\AddPlot{\noexpand\addplot[smooth,color=red!\k!blue, domain=0:1,samples=50] {(sqrt(16 * x^4 - 8 * (\a + \b - 2 * \a * \b) * x^2 * (x + 1) * (3 * x - 1) + (\a - \b)^2 * (x + 1)^2 * (3 * x - 1)^2) + x * (3 * x + 1) * (x - 1))/((1 + x) * (1 - 3 * x))};}
        \AddPlot
      }
      \node at (0.5,-1) [below] {$t = 0.6$};
    \end{axis}
  \end{tikzpicture}
  \caption{Graph families of $\phab$ with $t := \abs{a-b}$ fixed.}
  \label{fig:phi-a-b}
\end{figure}

Observe that \cref{lem:unique-solution} concerns the solutions to the equation
\ifieee
\begin{multline*}
  \sumnpabcx = x \\
  \text{or equivalently }\sumn p_i\ph_{a_i,b_i}(x) = 0.
\end{multline*}
\else
\[
  \sumnpabcx = x \quad\text{or equivalently}\quad\sumn p_i\ph_{a_i,b_i}(x) = 0.
\]
\fi
We need the following properties of $\phab$, the proof of which is done in \cref{sec:f-prime-over-f}.

\begin{proposition} \label{lem:phi-prop}
  For every $(a,b) \in [0,1]^2$, the function $\phab\colon [0,1]\to\mathbb{R}$ defined by \eqref{eqn:phi-def} has the following properties:
  \begin{enumerate}[label=(\alph*),nosep]
    \item $\phab(0) \ge 0$, equality holds if and only if $a = b$, and $\phab(1) \le 0$, equality holds if and only if $a + b = 1$; \label{lem:us-0}
    \item $\phab$ is a continuously differentiable function on $[0,1]$; \label{lem:us-c2}
    \item when $a = b$,
    \ifieee
    \[
      \phab(x) =
      \begin{dcases}
        -x & \text{ if } a \in \set{0,1}, \\
        \begin{multlined}
          (4\sqrt{a\aaa}  - 1)x \\ +(4\sqrt{a\aaa}- 4)x^2 + o(x^2)
        \end{multlined} & \text{ if } a \in (0,1);
      \end{dcases} 
    \]
    \else
    \[
      \phab(x) =
      \begin{cases}
        -x & \text{ if } a \in \set{0,1}, \\
        (4\sqrt{a\aaa}  - 1)x+(4\sqrt{a\aaa}- 4)x^2 + o(x^2) & \text{ if } a \in (0,1);
      \end{cases} 
    \]
    \fi \label{lem:us-a}
    \item if $(a,b) \not\in \oooo$, then the equation
    \[
      \phab(x) = x\phab'(x), \quad x \in (0,1]
    \]
  has no solution for $x$. \label{lem:us-b}
  \end{enumerate}
\end{proposition}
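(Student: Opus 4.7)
The plan is to tackle (a)--(c) with direct calculations and reserve the bulk of the effort for (d).

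For (a), I would substitute the endpoint values from \cref{lem:cab-prop} into the definition $\phab(x) = a\bb + \aaa b + 2\cab(x) - x$. Using $\cab(0) = -\min(a\bb, \aaa b)$ together with the identity $a\bb - \aaa b = a - b$ gives $\phab(0) = \abs{a - b}$, and using $\cab(1) = \min(ab, \aaa\bb)$ together with $a\bb + \aaa b + ab + \aaa\bb = 1$ and $ab - \aaa\bb = a + b - 1$ gives $\phab(1) = -\abs{a + b - 1}$. The sign and equality conditions then drop out.

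For (b), note that $F(x, c) := (ab - c)(\aaa\bb - c)(2x)^2 - (a\bb + c)(\aaa b + c)(1 - x)^2$ is polynomial in $(x, c)$. A direct calculation shows $\partial F / \partial c < 0$ throughout $x \in (0, 1)$, so the implicit function theorem yields $C^\infty$ regularity of $\cab$ on $(0, 1)$. The partial derivative can vanish at an endpoint only in the two degenerate configurations $a = b$ at $x = 0$ and $a + b = 1$ at $x = 1$; these exceptional cases I would handle by appealing to the closed-form solution of the defining quadratic in $c$ and verifying $C^1$ behaviour at the endpoint by a one-sided expansion.

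For (c), when $a = b \in (0, 1)$, the simplification $a\bb = \aaa b = a\aaa$ together with the substitution $u := \cab(x) + a\aaa$ brings the defining equation into the form $(a - u)(\aaa - u)(2x)^2 = u^2 (1 - x)^2$. Expanding $(a - u)(\aaa - u) = a\aaa - u + u^2$ via $a + \aaa = 1$, plugging in the ansatz $u = \alpha x + \beta x^2 + o(x^2)$, and matching the $x^2$- and $x^3$-coefficients yields $\alpha = 2\sqrt{a\aaa}$ and $\beta = 2\sqrt{a\aaa} - 2$. Since $\phab(x) = 2u(x) - x$ in this case, the claimed expansion follows. The boundary case $a \in \set{0, 1}$ forces $\cab \equiv 0$, whence $\phab(x) = -x$ immediately.

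Part (d) is the main obstacle. Since $\phab(x) = a\bb + \aaa b + 2\cab(x) - x$, the equation $\phab(x) = x\phab'(x)$ simplifies to $(a\bb + \aaa b) + 2\cab(x) = 2x\cab'(x)$. I would use implicit differentiation on $F(x, c) = 0$ to express $\cab'(x)$ in terms of $x$ and $\cab(x)$, and then eliminate $\cab(x)^2$ via the defining quadratic rewritten as $E(c^2 + Qc + S) = 4x^2 c$ with $E := (3x - 1)(1 + x)$, $Q := a\bb + \aaa b$, and $S := a\aaa b\bb$. After the key algebraic simplification $2Q + 2xP + (1-x)(P - Q) = 1 + x$, where $P := ab + \aaa\bb$, the equation $\phab(x) = x\phab'(x)$ on $(0, 1)$ reduces to the linear-in-$\cab(x)$ relation
\[
    8x^2(3x^2 + 1)\cab(x) + E(x)^2 (a - b)^2 - 4x^2 E(x) Q = 0.
\]
Combining this with the defining quadratic eliminates $\cab(x)$ and yields a single polynomial equation in $x$ alone, whose roots in $(0, 1]$ I would then show do not exist for $(a, b) \notin \oooo$ by careful sign analysis. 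Two subtleties need separate treatment: the apparent root at $x = 1/3$, which is introduced when clearing the factor $E$ from denominators and is spurious; and the boundary behaviour at $x = 1$ when $a + b = 1$, where $\partial F / \partial c$ itself vanishes and the implicit differentiation formula becomes singular, so a direct limit computation of $\phab(x) - x\phab'(x)$ as $x \to 1^-$ is needed to exclude that value.
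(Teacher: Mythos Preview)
Your outline for parts (a)--(c) is correct and close to the paper's own arguments. For (b) the paper handles the two endpoint degeneracies ($a=b$ at $x=0$, $a+b=1$ at $x=1$) not via the quadratic formula but by passing to the square-root factorizations $m_0(c,x)=(a\aaa+c)(1-x)-\sqrt{(a^2-c)(\aaa^2-c)}(2x)$ and $m_1$, whose $c$-derivatives are manifestly nonzero at the relevant endpoint; your closed-form route should work as well. For (c) the paper uses the explicit formula $\ph(x)=u+\sqrt{v}-x$ with $u=-4x^2/((1+x)(1-3x))$ and expands, which is equivalent to your ansatz computation.

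For (d), your reduction is correct: implicit differentiation plus the defining quadratic does collapse $\ph(x)=x\ph'(x)$ to the linear relation
\[
  8x^2(3x^2+1)\,\cab(x) + E(x)^2(a-b)^2 - 4x^2E(x)Q = 0,
\]
and substituting back into the quadratic gives a single polynomial equation in $x$. The paper's route is parallel but uses the explicit branch formula $\ph=u\pm\sqrt{v}-x$, splitting at $x=1/3$ according to the sign. Either way, the entire difficulty lies in your final ``careful sign analysis,'' and here your proposal is missing the key idea. The resulting expression depends on two parameters, and the paper's device is to reparametrize by $s:=a\bb+\aaa b$ and $t:=\abs{a-b}$, observe the constraint $t\le s\le(1+t^2)/2$, and note that the quantity to be shown positive is quadratic in $s$ with negative leading coefficient; it therefore suffices to check the two boundary values $s=t$ and $s=(1+t^2)/2$, each of which reduces to a one-variable polynomial inequality in $x$ (with explicit factorizations). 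Without this reduction, the two-parameter positivity problem has no obvious structure.

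Two smaller points. First, $x=1/3$ is not a spurious root introduced by clearing $E$; rather, your derivation divides by $E$ and hence says nothing at $x=1/3$, so a direct computation of $\ph(1/3)$ and $\ph'(1/3)$ is still required (the paper does this as a separate case). Second, the general inequality degenerates to equality at the parameter values corresponding to $(a,b)\in\{(1/2,1/2),(0,1),(1,0)\}$; these survive the main argument and must be dispatched by hand (the paper computes $\ph_{1/2,1/2}(x)=x(1-x)/(1+x)$ and $\ph_{0,1}(x)=1-x$ explicitly). Your flag about $x=1$ when $a+b=1$ is reasonable but can be handled by continuity once (b) is in place.
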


The following lemma deals with the uniqueness of the solution to $\ph(x) = 0$. For technical reasons, which will become clear in \cref{sec:proof-of-main}, we also study the derivative $\ph'(x)$ at the unique solution.

\begin{lemma} \label{lem:uniqueness}
  For every $\spn \in \Delta^{n-1}$, $\san$ and $\sbn \in [0,1]$, the equation
  \begin{equation} \label{eqn:uniqueness-ph-equals-zero}
    \ph(x) := \sumn p_i\ph_{a_i,b_i}(x) = 0, \quad x \in (0,1]
  \end{equation}
  has at most one solution for $x$, where $\ph_{a_i,b_i}$ is defined by \eqref{eqn:phi-def}. Moreover, if \eqref{eqn:uniqueness-ph-equals-zero} has a unique solution $x = x^*$, then $\ph'(x^*) < 0$.\footnote{\cref{lem:phi-prop}\ref{lem:us-c2} ensures that $\ph'$ is well-defined on $[0,1]$.}
\end{lemma}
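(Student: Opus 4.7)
\medskip
\noindent\textbf{Proof proposal.}
The plan is to renormalize the problem by studying $g_{a,b}(x) := \phab(x)/x$ on $(0,1]$ and reduce the lemma to the statement that each $g_{a,b}$ is strictly decreasing (with the obvious exceptions). Since $\phab$ is $C^1$ on $[0,1]$ by \cref{lem:phi-prop}\ref{lem:us-c2}, the quotient $g_{a,b}$ is $C^1$ on $(0,1]$ with
\[
  g_{a,b}'(x) = \frac{x\phab'(x) - \phab(x)}{x^2}.
\]
The claimed monotonicity of $g_{a,b}$ translates the equation $\phab(x) = x\phab'(x)$ from \cref{lem:phi-prop}\ref{lem:us-b} into the assertion that $g_{a,b}'$ never vanishes on $(0,1]$ when $(a,b) \notin \oooo$.

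First I would dispose of the trivial cases $(a,b) \in \oooo$: a direct unpacking of \eqref{eqn:phi-def} shows $\phab(x) = -x$, so $g_{a,b} \equiv -1$ there. For $(a,b) \notin \oooo$, part \ref{lem:us-b} of \cref{lem:phi-prop} gives $g_{a,b}'(x) \ne 0$ on $(0,1]$, and by continuity $g_{a,b}'$ has a constant sign. To pin down that sign I would compare the boundary behavior: $g_{a,b}(1) = \phab(1) \le 0$ by part \ref{lem:us-0}, while the limit $\lim_{x\to 0^+} g_{a,b}(x)$ is $+\infty$ if $a \ne b$ (again by \ref{lem:us-0}) and equals $4\sqrt{a\aaa} - 1$ if $a = b \in (0,1)$ by the expansion in \ref{lem:us-a}. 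A quick computation of $\phab(1)$ in the $a=b$ case shows $\phab(1) = -|1 - 2a|$, and elementary inequalities (e.g. $\sqrt{a\aaa} > \min(a, \aaa)$ for $a \in (0,1)$ with $a \ne 1/2$) verify $4\sqrt{a\aaa} - 1 > -|1-2a|$. In every case $\lim_{x\to 0^+} g_{a,b}(x) > g_{a,b}(1)$, so the constant sign of $g_{a,b}'$ must be negative, i.e.\ $g_{a,b}$ is strictly decreasing on $(0,1]$.

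The lemma now follows by convex combination. Writing $g(x) := \ph(x)/x = \sum_{i=1}^n p_i\, g_{a_i,b_i}(x)$, either every $(a_i,b_i)$ with $p_i > 0$ lies in $\oooo$, in which case $g \equiv -1$ and $\ph(x) = 0$ has no solution in $(0,1]$ (so the moreover clause is vacuous), or at least one index contributes a strictly decreasing $g_{a_i,b_i}$, making $g$ itself strictly decreasing. A strictly decreasing function has at most one zero, giving the uniqueness. Finally, at any such zero $x^*$, strict monotonicity forces $g'(x^*) < 0$, and the identity $g'(x^*) = \ph'(x^*)/x^*$ combined with $x^* > 0$ yields $\ph'(x^*) < 0$.

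The only substantive obstacle is the sign determination of $g_{a,b}'$ in the pesky range $a = b \in (0,1)\setminus\{1/2\}$: the boundary value $4\sqrt{a\aaa} - 1$ can be negative, so the ``starts positive, ends non-positive'' heuristic does not directly apply and one has to verify by hand that the net change of $g_{a,a}$ over $(0,1]$ is strictly negative. Everything else is either an application of \cref{lem:phi-prop} or routine calculus.
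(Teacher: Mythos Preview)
Your proposal is correct and takes a cleaner route than the paper. The paper proceeds by induction on $n$: the $n=1$ case via an ad hoc polynomial identity $2m(x)-xm'(x)=2(1-x)\bigl((a-b)^2+3x^2\bigr)$, the $n=2$ case via a quotient argument comparing $\ph_i'/\ph_i$ to $1/x$ (which is \cref{lem:phi-prop}\ref{lem:us-b} in disguise), and the step $n\ge 3$ via a linear-dependency reduction back to $n-1$. You short-circuit the induction entirely by observing that \cref{lem:phi-prop}\ref{lem:us-b} says precisely that $g_{a,b}'(x)=(x\phab'(x)-\phab(x))/x^2$ never vanishes on $(0,1]$, and then you pin down the sign once and for all from the endpoints. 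Since each $g_{a_i,b_i}$ is either constant $-1$ or has $g_{a_i,b_i}'<0$ everywhere, the convex combination $g=\sum p_i g_{a_i,b_i}$ is either identically $-1$ or has $g'<0$ everywhere; uniqueness and the sign of $\ph'(x^*)$ drop out at once. This is both shorter and more conceptual than the paper's proof.

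One small wording fix: the phrase ``strict monotonicity forces $g'(x^*)<0$'' is not literally correct (consider $-x^3$ at $0$). What your argument actually gives is the stronger pointwise statement $g'<0$ on all of $(0,1]$, since each nontrivial summand contributes a strictly negative derivative; that is what yields $\ph'(x^*)=x^*g'(x^*)<0$. Your boundary check $4\sqrt{a\aaa}-1>-\lvert 1-2a\rvert$ for $a=b\in(0,1)$ is fine: without loss of generality $a\le 1/2$, and then it reduces to $2\sqrt{a\aaa}>a$, i.e.\ $4(1-a)>a$, which holds throughout $(0,1/2]$.
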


\begin{proof}
  We prove by induction on $n$.

  \ifieee
    \medskip
    \textit{The base case:} $n = 1$.
  \else
    \paragraph{The base case: $n = 1$.}
  \fi
  It suffices to show that for every $a$ and $b\in [0,1]$, the following system of equations has at most one solution for $x$ in $(0,1]$:
  \ifieee
  \begin{gather*}
    a\bb + \aaa b + 2c - x = 0, \\
    (ab-c)(\aaa\bb - c)(2x)^2 = (a\bb + c)(\aaa b + c)(1-x)^2.
  \end{gather*}
  \else
  \[
    a\bb + \aaa b + 2c - x = 0, \quad
    (ab-c)(\aaa\bb - c)(2x)^2 = (a\bb + c)(\aaa b + c)(1-x)^2.
  \]
  \fi
  Substituting $c$ by $(x - a\bb - \aaa b)/2$ in the second equation, we obtain an equivalent equation:
  \[
    (a + b - x)(\aaa + \bb - x)(2x)^2 + (a-b+x)(a-b-x)(1-x)^2 = 0.
  \]
  Let $m(x)$ be the left hand side of the above equation.
  One can compute
  \begin{equation} \label{eqn:mxmx}
    2m(x) - xm'(x) = 2(1-x)((a-b)^2+3x^2).
  \end{equation}
  
  Assume for the sake of contradiction that $m(x) = 0$ has two solutions $x_1 < x_2$ in $(0,1]$. If the minimum value of the function $m$ over the interval $[0, x_2]$ is zero, then $x_1$ is a minimum point of $m$ over $(0,x_2)$, and so $m'(x_1) = 0$, which implies $2m(x_1) - x_1m'(x_1) = 0$, contradicting \eqref{eqn:mxmx} for $x_1 < 1$. Otherwise $m(x_0) < 0$, where $x_0$ is a minimum point of $m$ over $[0,x_2]$. Because $m(0) = (a-b)^2 \ge 0$ and $m(x_2) = 0$, $x_0 \in (0, x_2)$ and so $m'(x_0) = 0$, which implies $2m(x_0) - x_0m'(x_0) < 0$, contradicting \eqref{eqn:mxmx} directly.

  For the ``moreover'' part, suppose that $x = x^*$ is the unique solution to $\phab(x) = 0$ in $(0,1]$. Since $\phab(x) = -x$ when $(a,b) \in \oooo$ (via \cref{lem:phi-prop}\ref{lem:us-a}), which has no solution in $(0,1]$, we may assume that $(a,b) \not\in \oooo$. As $\phab(0) \ge 0$ (via \cref{lem:phi-prop}\ref{lem:us-0}) and $\phab(x) \neq 0$ for $x \in (0,x^*)$, $\phab(x) > 0$ for all $x \in (0, x^*)$, and so $\phab'(x^*) \le 0$. \cref{lem:phi-prop}\ref{lem:us-0} implies that $\phab'(x^*) \neq 0$ and so $\phab'(x^*) < 0$.
  
  \ifieee
    \medskip
    \textit{The base case:} $n = 2$.
  \else
    \paragraph{The base case: $n = 2$.}
  \fi
  We use the shorthand $\ph_1$ and $\ph_2$ for $\ph_{a_1,b_1}$ and $\ph_{a_2,b_2}$ respectively. Since $\ph_i(0) \ge 0 \ge \ph_i(1)$ (via \cref{lem:phi-prop}\ref{lem:us-0}), the intermediate value theorem implies that the equation $\ph_i(x) = 0$ has at least one solution for $x$ in $[0,1]$. Let $x = x_i$ be the largest solution to $\ph_i(x) = 0$ in $[0,1]$.

  \begin{claim} \label{claim:case2}
    For every $i \in [2]$, $\ph_i(x) > 0$ for $x \in (0, x_i)$ and $\ph_i(x) < 0$ for $x \in (x_i, 1]$.
  \end{claim}

  \begin{claimproof}[Proof of \cref{claim:case2}]
    If $x_i = 0$, then using the assumption that $x_i$ is the largest solution in $[0,1]$ and the fact that $\ph_i(1) \le 0$, we know that $\ph_i(x) < 0$ for $x\in (0,1]$. We are left to deal with the case where $x_i \in (0,1]$. In the base case where $n = 1$, we have already shown that $x = x_i$ is the unique solution to $\ph_i(x) = 0$ in $(0,1]$. The claim then follows immediately from \cref{lem:phi-prop}\ref{lem:us-0}.
  \end{claimproof}

  Without loss of generality, we may assume that $p_1, p_2 > 0$ and $x_1 \le x_2$. Assume for the sake of contradiction that the equation
  \[
    p_1\ph_1(x) + p_2\ph_2(x) = 0, \quad x \in (0,1]
  \]
  has two solutions $x_1^* < x_2^*$. \cref{claim:case2} shows that $x_1 < x_1^* < x_2^* < x_2$, and $\ph_1(x) < 0 < \ph_2(x)$ for $x \in (x_1, x_2)$. Consider the quotient $q(x)\colon (x_1, x_2) \to \mathbb{R}$ defined by
  \[
    q(x) := \ph_1(x)/\ph_2(x).
  \]
  Since $q(x_1^*) = -p_2 / p_1 = q(x_2^*)$, we obtain a contradiction with the following claim.
  
  \begin{claim} \label{claim:case2-d}
    If $x_1 < x_2$ and $\ph_1(x) < 0 < \ph_2(x)$ for $x\in (x_1,x_2)$, then $q'(x) < 0$ for $x \in (x_1, x_2)$.
  \end{claim}

  \begin{claimproof}[Proof of \cref{claim:case2-d}]
    It suffices to show that for $x \in (x_1, x_2)$,
    \ifieee
    \[
      \ph_1'(x)\ph_2(x) < \ph_1(x)\ph_2'(x)
      \text{ or equivalently }
      \frac{\ph_1'(x)}{\ph_1(x)} > \frac{\ph_2'(x)}{\ph_2(x)}.
    \]
    \else
    \[
      \ph_1'(x)\ph_2(x) < \ph_1(x)\ph_2'(x)
      \quad \text{or equivalently} \quad
      \frac{\ph_1'(x)}{\ph_1(x)} > \frac{\ph_2'(x)}{\ph_2(x)}.
    \]
    \fi
    Since $x_2 \in (0,1]$, the base case, where $n = 1$, says that $\ph_2'(x_2) < 0$, and so
    \begin{equation} \label{eqn:ph2-less-than-1-over-x}
      \lim_{x\to x_2^-} \frac{\ph_2'(x)}{\ph_2(x)} = -\infty < \frac{1}{x_2}.
    \end{equation}
    Clearly $(a_2, b_2) \not\in \oooo$ for otherwise $\ph_2(x) = -x$ (via \cref{lem:phi-prop}\ref{lem:us-0}) which has no solution in $(0,1]$. As $\ph_2$ is continuously differentiable (via \cref{lem:phi-prop}\ref{lem:us-c2}), the function $\ph_2'/\ph_2$ is continuous on $(x_1, x_2)$. Thus \cref{lem:phi-prop}\ref{lem:us-b} and \eqref{eqn:ph2-less-than-1-over-x} imply that
    \[
      \ph_2'(x)/\ph_2(x) < 1/x \quad \text{for }x \in (x_1, x_2).
    \]
    
    It suffices to prove that $\ph_1'(x)/\ph_1(x) \ge 1/x$ for every $x \in (x_1, x_2)$. We break into two cases. If $x_1 \in (0,1]$, then
    \[
      \lim_{x\to x_1^+}\ph_1'(x)/\ph_1(x) = \infty > 1/x_1,
    \]
    and a similar reasoning yields that
    \[
      \ph_1'(x)/\ph_1(x) > 1/x \quad \text{for } x \in (x_1, x_2).
    \]
    Otherwise $x_1 = 0$. Since $\ph_1(0) = 0$, \cref{lem:phi-prop}\ref{lem:us-0} implies that $a_1 = b_1$. From \cref{claim:case2}, we know that $\ph_1(x) < 0$ for $x \in (0,1]$, and so $\ph_1'(0) \le 0$, which implies through \cref{lem:phi-prop}\ref{lem:us-a} that $\sqrt{a_1\aaa_1} \le 1/4$. If $\sqrt{a_1\aaa_1} = 0$, that is, $(a_1, b_1) \in \oooo$, then $\ph_1(x) = -x$, hence $\ph_1'(x)/\ph_1(x) = 1/x$ and we are done. If $0 < \sqrt{a_1\aaa_1} \le 1/4$, we conclude from \cref{lem:phi-prop}\ref{lem:us-a} that
    \ifieee
    \begin{multline*}
      \frac{x\ph_1'(x)}{\ph_1(x)} = \frac{(4\sqrt{a_1\aaa_1}-1) + (8\sqrt{a_1\aaa_1}-8)x + o(x)}{(4\sqrt{a_1\aaa_1}-1) + (4\sqrt{a_1\aaa_1}-4)x + o(x)} \\
      = \begin{cases}
        1 + \left(\frac{4\sqrt{a_1\aaa_1}-4}{4\sqrt{a_1\aaa_1}-1}\right)x + o(x) & \text{if }0<\sqrt{a_1\aaa_1} < 1/4; \\
        2 + o(1) & \text{if }\sqrt{a_1\aaa_1} = 1/4,
      \end{cases}
    \end{multline*}
    \else
    \begin{align*}
      \frac{x\ph_1'(x)}{\ph_1(x)}
      & = \frac{(4\sqrt{a_1\aaa_1}-1) + (8\sqrt{a_1\aaa_1}-8)x + o(x)}{(4\sqrt{a_1\aaa_1}-1) + (4\sqrt{a_1\aaa_1}-4)x + o(x)} \\
      & = \begin{cases}
        1 + \left(\frac{4\sqrt{a_1\aaa_1}-4}{4\sqrt{a_1\aaa_1}-1}\right)x + o(x) & \text{if }0<\sqrt{a_1\aaa_1} < 1/4; \\
        2 + o(1) & \text{if }\sqrt{a_1\aaa_1} = 1/4,
      \end{cases}
    \end{align*}
    \fi
    and so there exists $\eps \in (0,1)$ such that
    \[
      x\ph_1'(x)/\ph_1(x) > 1\quad \text{for }x \in (0, \eps).
    \]
    Thus \cref{lem:phi-prop}\ref{lem:us-b} implies that
    \ifieee
    \begin{equation*}
      \ph_1'(x) / \ph_1(x) > 1/x\quad \text{for } x\in(0,1].
    \end{equation*}
    \else
    \begin{equation*}
      \ph_1'(x) / \ph_1(x) > 1/x\quad \text{for } x\in(0,1].\qedhere
    \end{equation*}
    \fi
  \end{claimproof}

  For the ``moreover'' part, suppose that $x = x^*$ is the unique solution to
  \[
    \ph(x) := p_1\ph_1(x) + p_2\ph_2(x) = 0, \quad x \in (0,1].
  \]
  If $\ph_1(x^*) = \ph_2(x^*) = 0$, then the base case where $n = 1$ says that both $\ph_1'(x^*)$ and $\ph_2'(x^*)$ are negative, and so is $\ph'(x^*)$. Without loss of generality we may assume that $\ph_1(x^*) < 0 < \ph_2(x^*)$, which implies that $x_1 < x^* < x_2$ via \cref{claim:case2}. Because $\ph_i(0) \ge 0$ for $i \in [2]$ (via \cref{lem:phi-prop}\ref{lem:us-0}), $\ph(0) \ge 0$ and so $\ph(x) > 0$ for $x\in (0,x^*)$, which implies $\ph'(x^*) \le 0$. Assume for the sake of contradiction that $\ph'(x^*) = 0$. This means $p_1\ph_1(x^*) + p_2\ph_2(x^*) = 0$ and $p_1\ph_1'(x^*) + p_2\ph_2'(x^*) = 0$, which implies that $\ph_1'(x^*)\ph_2(x^*) = \ph_1(x^*)\ph_2'(x^*)$. Thus $q'(x^*) = 0$ which contradicts \cref{claim:case2-d}.

  \ifieee
    \medskip
    \textit{The induction step.}
  \else
    \paragraph{The induction step.}
  \fi
  Suppose that $n \ge 3$. For convenience, we denote
  \[
    \ph(x; p) := \sumn p_i\ph_i(x),
  \]
  where $p := \spn$ and $\ph_i$ is the shorthand for $\ph_{a_i,b_i}$. For the sake of contradiction, we assume that $\ph(x; p) = 0$ has two solutions $x_1^* < x_2^*$ for $x$ in $(0,1]$. As $n \ge 3$, we use linear dependency to construct a nonzero vector $v := (v_1, \dots, v_n)$ such that
  \[
    \sumn v_i = 0, \quad \sumn v_i\ph_i(x_1^*) = 0,
  \]
  and we can assume in addition that
  \[
    \sumn v_i\ph_i(x_2^*) \ge 0.
  \]
  
  We can choose $t \ge 0$ such that $p + vt \in \Delta^{n-1}$ and $p_i + v_it = 0$ for some $i \in [n]$. Without loss of generality, assume that $p_n + v_nt = 0$. Set $p' := (p_1 + v_1t, \dots, p_{n-1} + v_{n-1}t)$. According to our choice of $v$, one can check that
  \[
    \ph(x_1^*; p') = 0, \quad \ph(x_2^*; p') \ge 0.
  \]
  Because $\ph_i(1) \le 0$ for $i \in [n]$ (via \cref{lem:phi-prop}\ref{lem:us-0}), we always have $\ph(1; p') \le 0$. The intermediate value theorem says that there exists a solution to $\ph(x; p') = 0$ in $[x_2, 1]$, hence the equation $\ph(x; p') = 0$ has at least two solutions in $(0,1]$, which contradicts with the inductive hypothesis because $\ph(\cdot; p')$ is a convex combination of $\ph_1, \dots, \ph_{n-1}$.

  For the ``moreover'' part, suppose that $x = x^*$ is the unique solution to $\ph(x; p) = 0$ in $(0,1]$, and assume for the sake of contradiction that $\ph'(x^*; p) \ge 0$. As $n \ge 3$, we use linear dependency to construct a nonzero vector $v := (v_1, \dots, v_n)$ such that
  \[
    \sumn v_i = 0, \quad \sumn v_i\ph_i(x^*) = 0,
  \]
  and we can assume in addition that
  \[
    \sumn v_i\ph_i'(x^*) \ge 0.
  \]
  Similarly, we can choose $t \ge 0$ such that $p + vt \in \Delta^{n-1}$ and $p_n + v_nt = 0$ without loss of generality. By setting $p' := (p_1 + v_1t, \dots, p_{n-1} + v_{n-1}t)$, one can check that $\ph(x^*; p') = 0$ and $\ph'(x^*; p') \ge 0$, which contradicts with the inductive hypothesis.
\end{proof}

Lastly we finish the proof of \cref{lem:unique-solution}.

\begin{proof}[Proof of \cref{lem:unique-solution}]
  The existence and uniqueness of the solution to the equation for $c$ has been done in \cref{lem:cab-prop}. Fix $\spn \in \Delta^{n-1}$, $\san$ and $\sbn \in [0,1]$ such that
  \[
    \sumn p_i\abs{a_i - b_i} > 0 \quad \text{or} \quad \sumn p_i\sqrt{a_i\aaa_i} > 1/4.
  \]
  It remains to show that the following equation has a unique solution:
  \[
    \ph(x) := \sumn p_i\ph_i(x) = 0, \quad x \in (0,1].
  \]
  
  In view of \cref{lem:uniqueness}, we only need to establish the existence of a solution. If $\sumn p_i\abs{a_i - b_i} > 0$, then $\ph(0) > 0$ and $\ph(1) \le 0$ (via \cref{lem:phi-prop}\ref{lem:us-0}), and so $\ph(x) = 0$ has a solution for $x$ in $(0,1]$ by the intermediate value theorem. Otherwise
  \ifieee
  \[
    p_i = 0 \text{ or }a_i = b_i \text{ for every }i \in [n], \text{ and } \sumn p_i\sqrt{a_i\aaa_i} > 1/4.
  \]
  \else
  \[
    p_i = 0 \text{ or }a_i = b_i \text{ for every }i \in [n] \quad \text{and} \quad \sumn p_i\sqrt{a_i\aaa_i} > 1/4.
  \]
  \fi
  For this special case, \cref{lem:phi-prop}\ref{lem:us-a} implies that
  \[
    \ph'(0) = \sumn p_i\ph_i'(0) = \sumn p_i\left(4\sqrt{a_i\aaai}-1\right) > 0.
  \]
  Because $\ph(0) = 0$, $\ph(1) \le 0$ (via \cref{lem:phi-prop}\ref{lem:us-0}) and $\ph'(0) > 0$, $\ph(x) = 0$ has a solution for $x$ in $(0,1]$.
\end{proof}

\section{Upper bound on \texorpdfstring{$R(\Of)$}{R(Of)}} \label{sec:proof-of-main}

In this section we investigate the function $L\colon D_n \to \mathbb{R}$, defined for every $\la \in (0,1/2)$ as follows:
\ifieee
\begin{multline*}
  L\spabn := \\
  \tfrac12 \sumn p_i\left(H_2(a_i, \aaai) + H_2(b_i, \bbi)\right) \\
  + \la\left(S_3(x^*) - \sumnphabcxs\right),
\end{multline*}
\else
\begin{multline*}
  L\spabn := \tfrac12 \sumn p_i\left(H_2(a_i, \aaai) + H_2(b_i, \bbi)\right) \\
  + \la\left(S_3(x^*) - \sumnphabcxs\right),
\end{multline*}
\fi
where $\cabi(x)$, $x^*$ and $D_n$ are defined as in \cref{lem:unique-solution}. In order to talk about the partial derivatives of $L$, we need the following properties of $\cabi$ and $x^*$.

\begin{proposition} \label{lem:diff}
  For every $\spn \in \Delta^{n-1}$, $\san$, $\sbn$ and $x \in [0,1]$, let $\cabi(x)$, $x^*$ and $D_n$ be defined as in \cref{lem:unique-solution}. Then
  \begin{enumerate}[label=(\alph*),nosep]
    \item $\cabi(x)$ is a continuously differentiable function of $x$ on $[0,1]$; \label{item:diff-cab-diff}
    \item $-\min(a_i\bbi,\aaai b_i) < \cabi(x) < \min(a_ib_i, \aaai\bbi)$ for $(a_i, b_i, x) \in (0,1)^3$, and $\cabi(x)$ is a continuously differentiable function of $(a_i, b_i, x)$ on $(0,1)^3$; \label{item:diff-cab-diff-3}
    \item $x^*$ is a continuously differentiable function of $\spabn$ on $D_n$; \label{item:diff-x-star-diff}
    \item $x^*$ converges to $0$ as $\spabn$ approaches a point outside $D_n$; \label{item:diff-x-star-0}
    \item if $p_i > 0$ for every $i \in [n]$, then $x^* = 1$ is equivalent to $a_i + b_i = 1$ for every $i \in [n]$. \label{item:diff-x-star-1}
  \end{enumerate}
\end{proposition}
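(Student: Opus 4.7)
The plan is to deduce (a) from \cref{lem:phi-prop}\ref{lem:us-c2} via the identity \eqref{eqn:phi-def}; to prove (b) and (c) by two applications of the implicit function theorem; to derive (e) from \cref{lem:phi-prop}\ref{lem:us-0}; and to establish (d) by a compactness argument combined with the Taylor expansion \cref{lem:phi-prop}\ref{lem:us-a}. For (a), rearranging \eqref{eqn:phi-def} gives $c_{a,b}(x) = (\ph_{a,b}(x) + x - a\bb - \aaa b)/2$, so $C^1$-ness of $c_{a,b}$ on $[0,1]$ is equivalent to and hence follows from that of $\ph_{a,b}$. For (b), let $F(c, a, b, x)$ denote the difference of the two sides of \eqref{eqn:cab-prop-c-def-a}; $F$ is a polynomial and hence jointly smooth. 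I would first rule out the endpoints of the $c$-interval when $(a, b, x) \in (0, 1)^3$: at $c = -\min(a\bb, \aaa b)$ the right-hand side of \eqref{eqn:cab-prop-c-def-a} vanishes while the left-hand side $(ab-c)(\aaa\bb - c)(2x)^2$ is strictly positive (as $c \le 0 < ab, \aaa\bb$ and $x > 0$), and the upper endpoint is ruled out symmetrically. A direct calculation gives
\[
  \fpp{F}{c} = -[ab + \aaa\bb - 2c](2x)^2 - [a\bb + \aaa b + 2c](1-x)^2,
\]
which is strictly negative at any strictly interior $c$ for $(a, b, x) \in (0, 1)^3$; the implicit function theorem then yields joint $C^1$ dependence of $c_{a,b}(x)$ on $(a, b, x) \in (0, 1)^3$.

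Part (c) is another implicit function theorem application: $\ph(x; \spabn) := \sum_i p_i\ph_{a_i, b_i}(x)$ is jointly $C^1$ in $(x, \spabn)$ near $(x^*, \spabn)$ by (b), and $\partial\ph/\partial x|_{x = x^*} = \ph'(x^*) < 0$ by the moreover clause of \cref{lem:uniqueness}, delivering $x^*$ as a $C^1$ function on $D_n$. Part (e) follows by inspecting $\ph(1)$: when every $p_i > 0$, the nonpositivity of each summand of $\ph(1) = \sum_i p_i\ph_{a_i, b_i}(1)$ from \cref{lem:phi-prop}\ref{lem:us-0} makes $\ph(1) = 0$ equivalent to $a_i + b_i = 1$ for every $i$; and $\ph(1) = 0$ is in turn equivalent to $x^* = 1$ by the uniqueness of the zero in $(0, 1]$.

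The main obstacle is (d). Take $\spabn^{(k)} \to \spabn^{(\infty)} \notin D_n$ and, by compactness, pass to a subsequence along which $x^{(k)} := x^*(\spabn^{(k)}) \to x_\infty \in [0, 1]$. Continuity of $\ph$ forces $\ph^{(\infty)}(x_\infty) = 0$, so it suffices to rule out a zero of $\ph^{(\infty)}$ in $(0, 1]$. Since $\spabn^{(\infty)} \notin D_n$, every $i$ with $p_i > 0$ satisfies $a_i = b_i$, so $\ph^{(\infty)}(0) = 0$ via \cref{lem:phi-prop}\ref{lem:us-0}, and \cref{lem:phi-prop}\ref{lem:us-a} yields
\[
  \ph^{(\infty)}(x) = \Bigl[\sum_i p_i(4\sqrt{a_i\aaai} - 1)\Bigr] x + \Bigl[\sum_{i: a_i \in (0, 1)} p_i(4\sqrt{a_i\aaai} - 4)\Bigr] x^2 + o(x^2).
\]
The coefficient of $x$ is at most zero by the failure of the $D_n$ condition; if it vanishes, then $\sum_i p_i\sqrt{a_i\aaai} = 1/4 > 0$, so at least one $i$ contributes to the $x^2$-sum and the latter is strictly negative. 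In either sub-case $\ph^{(\infty)} < 0$ on some right neighborhood of $0$. If a zero $x_\infty \in (0, 1]$ of $\ph^{(\infty)}$ existed, uniqueness from \cref{lem:uniqueness} would keep $\ph^{(\infty)} < 0$ throughout $(0, x_\infty)$, forcing $(\ph^{(\infty)})'(x_\infty) \ge 0$ and contradicting the moreover clause of \cref{lem:uniqueness}.
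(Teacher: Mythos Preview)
Your proof is correct and follows essentially the same route as the paper's for all five parts: (a), (b), (c), (e) are virtually identical, and for (d) both arguments rest on the Maclaurin expansion from \cref{lem:phi-prop}\ref{lem:us-a} together with the sign information $\ph'(x^*)<0$ from \cref{lem:uniqueness}. The only organizational difference is that for (d) the paper fixes an $x^-\in(0,\eps)$ with $\sum_i p_i\ph_{a_i^*,b_i^*}(x^-)<0$ and transfers this inequality by continuity to nearby points in $D_n$, whereas you pass to a subsequential limit and rule out $x_\infty\in(0,1]$ directly at the boundary point.
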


\begin{proof}[Proof of \cref{lem:diff}\ref{item:diff-cab-diff}]
  \cref{lem:phi-prop}\ref{lem:us-c2} already shows that $\ph_{a_i,b_i}(x)$ is a continuously differentiable function of $x$ on $[0,1]$, where $\ph_{a_i,b_i}(x) = a_i\bbi + \aaai b_i + 2\cabi(x) - x$, and so is $\cabi(x)$.
\end{proof}

\begin{proof}[Proof of \cref{lem:diff}\ref{item:diff-cab-diff-3}]
  Recall that $c = \cabi(x)$ is the unique solution to
  \ifieee
  \begin{gather*}
    \begin{multlined}
      m(c, a_i, b_i, x) := (a_i\bbi + c)(\aaai b_i + c)(1-x)^2 \\
    - (a_ib_i - c)(\aaai\bbi - c)(2x)^2 = 0,
    \end{multlined} \\
    -\min(a_i\bbi, \aaai b_i) \le c \le \min(a_ib_i, \aaai\bbi).
  \end{gather*}
  \else
  \begin{gather*}
    m(c, a_i, b_i, x) := (a_i\bbi + c)(\aaai b_i + c)(1-x)^2 - (a_ib_i - c)(\aaai\bbi - c)(2x)^2 = 0, \\
    -\min(a_i\bbi, \aaai b_i) \le c \le \min(a_ib_i, \aaai\bbi).
  \end{gather*}
  \fi
  It is easy to check that $m(-\min(a_i\bbi, \aaai b_i), a_i, b_i, x) < 0$, and $m(\min(a_ib_i, \aaai\bbi), a_i, b_i, x) > 0$ whenever $(a_i, b_i, x) \in (0,1)^3$, hence
  \[
    -\min(a_i\bbi, \aaai b_i) < \cabi(x) < \min(a_ib_i, \aaai\bbi).
  \]
  One can compute
  \[
    \fpp{m}{c} = (a_i\bbi + \aaai b_i + 2c)(1-x)^2 + (a_ib_i + \aaai\bbi - 2c)(2x)^2,
  \]
  which is positive for $c \in (-\min(a_i\bbi, \aaai b_i), \min(a_ib_i, \aaai\bbi))$, and in particular at $c = \cabi(x)$ with $(a_i, b_i, x) \in (0,1)^3$. Thus the implicit function theorem implies that $\cabi$ is a continuously differentiable function of $(a_i, b_i, x)$ on $(0,1)^3$.
\end{proof}

\begin{proof}[Proof of \cref{lem:diff}\ref{item:diff-x-star-diff}]
  Since \cref{lem:uniqueness} says that $\ph'(x^*) \allowbreak \neq 0$, the implicit function theorem says that the unique solution $x = x^*$ to $\ph(x) = 0$ is a continuously differentiable function of $p_1, \dots, p_n$, $a_1, \dots, a_n$, and $b_1, \dots, b_n$.
\end{proof}

\begin{proof}[Proof of \cref{lem:diff}\ref{item:diff-x-star-0}]
  Pick $\eps > 0$ and a point $(p_1^*, \dots, \allowbreak p_n^*, a_1^*, \dots, a_n^*, b_1^*, \dots, b_n^*)$ outside $D_n$, that is,
  \ifieee
  \begin{multline*}
    p_i = 0 \text{ or }a_i^* = b_i^* \text{ for every }i \in [n], \\
    \text{and } \sumn p_i \sqrt{a_i^*\aaai^*} \le 1/4.
  \end{multline*}
  \else
  \[
    p_i = 0 \text{ or }a_i^* = b_i^* \text{ for every }i \in [n] \quad \text{and} \quad \sumn p_i \sqrt{a_i^*\aaai^*} \le 1/4.
  \]
  \fi
  \begin{claimm}
    There exists $x^- \in (0, \eps)$ such that
    \[
      \sumn p_i\ph_{a_i^*, b_i^*}(x^-) < 0.
    \]
  \end{claimm}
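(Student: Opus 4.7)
The plan is to produce $x^-$ by Taylor-expanding $\ph_{a_i^*, b_i^*}$ at $0$ and using the fact that the point lies outside $D_n$. Since $p_i = 0$ or $a_i^* = b_i^*$ for every $i$, only indices with $p_i > 0$ (and hence $a_i^* = b_i^*$) contribute to $\sum_i p_i \ph_{a_i^*, b_i^*}(x)$. For each such $i$, \cref{lem:phi-prop}\ref{lem:us-a} gives the expansion
\[
  \ph_{a_i^*, a_i^*}(x) = (4\sqrt{a_i^*\aaai^*} - 1)\,x + (4\sqrt{a_i^*\aaai^*} - 4)\,x^2 + o(x^2),
\]
valid for $a_i^* \in (0,1)$ and trivially (with both coefficients $-1$ and $0$ reading as $\ph = -x$) for $a_i^* \in \{0,1\}$.

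Summing and using $\spn \in \Delta^{n-1}$, I would obtain
\[
  \sumn p_i \ph_{a_i^*, b_i^*}(x) = \left(4\sumn p_i \sqrt{a_i^*\aaai^*} - 1\right)x + Ax^2 + o(x^2),
\]
where $A = \sum_{i:\,p_i>0,\, a_i^*\in(0,1)} p_i (4\sqrt{a_i^*\aaai^*} - 4)$. The outside-$D_n$ hypothesis $\sumn p_i \sqrt{a_i^*\aaai^*} \le 1/4$ makes the linear coefficient non-positive. I would then split into two cases: if the linear coefficient is strictly negative, the sum is negative for every sufficiently small $x > 0$, so any such $x^- \in (0,\eps)$ works.

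The only delicate case is equality $\sumn p_i \sqrt{a_i^*\aaai^*} = 1/4$. Here I would observe that equality forces at least one index $i$ with $p_i > 0$ and $a_i^* \in (0,1)$ (otherwise all the square roots would vanish, giving sum $0 \ne 1/4$). For such an $i$, $\sqrt{a_i^*\aaai^*} \le 1/2 < 1$, so its contribution to $A$ is strictly negative, while indices with $a_i^* \in \{0,1\}$ contribute $0$ to $A$. Hence $A < 0$, and with the linear term gone the expansion becomes $\sumn p_i \ph_{a_i^*, b_i^*}(x) = Ax^2 + o(x^2) < 0$ for all sufficiently small $x > 0$. In either case, shrinking the threshold so that it lies in $(0,\eps)$ furnishes the required $x^-$.

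I do not expect a real obstacle here; the only thing to be careful about is the equality case, where the correct reading of \cref{lem:phi-prop}\ref{lem:us-a} (and the forcing of a nondegenerate $a_i^*$) is needed to conclude that the quadratic coefficient is strictly negative rather than merely non-positive.
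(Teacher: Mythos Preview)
Your proposal is correct and follows essentially the same two-case split as the paper: strict inequality $\sumn p_i\sqrt{a_i^*\aaai^*} < 1/4$ handled by the negative linear coefficient, and the equality case handled by the quadratic term via \cref{lem:phi-prop}\ref{lem:us-a}. The only cosmetic difference is that in the equality case the paper derives the explicit bound $\sum_{a_i^*\in(0,1)} p_i \ge 1/2$ (from $2\sqrt{a_i^*\aaai^*}\le 1$) to conclude $A \le -1$ and hence $\sumn p_i\ph_{a_i^*,b_i^*}(x)\le -x^2+o(x^2)$, whereas you simply note each summand in $A$ is strictly negative and at least one exists; your qualitative $A<0$ is enough for the claim and is slightly cleaner.
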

  \begin{claimproof}
    We break the argument into two cases.

    \ifieee
      \medskip
      \textit{Case 1:} $\sumn p_i\sqrt{a_i^*\aaai^*} < 1/4$.
    \else
      \paragraph{Case 1: $\sumn p_i\sqrt{a_i^*\aaai^*} < 1/4$.}
    \fi
      We conclude from \cref{lem:phi-prop}\ref{lem:us-a} that
    \[
      \sumn p_i\ph_{a_i^*, b_i^*}'(0) = \sumn p_i\left(4\sqrt{a_i^*\aaai^*}-1\right) < 0,
    \]
    from which the claim follows immediately.
    
    \ifieee
      \medskip
      \textit{Case 2:} $\sumn p_i\sqrt{a_i^*\aaai^*} = 1/4$.
    \else
      \paragraph{Case 2: $\sumn p_i\sqrt{a_i^*\aaai^*} = 1/4$.}
    \fi
    Without loss of generality, we may assume that $a_1^*, \dots, a_m^* \in (0,1)$ and $a_{m+1}^*, \dots, a_n^* \in \set{0,1}$ for some $m \in [n]$. We conclude from \cref{lem:phi-prop}\ref{lem:us-a} that
    \ifieee
    \begin{multline*}
      \sumn p_i\ph_{a_i^*, b_i^*}(x) = \sumn p_i\left(4\sqrt{a_i^*\aaai^*}-1\right)x \\
      + \sum_{i=1}^m p_i\left(4\sqrt{a_i^*\aaai^*}-4\right)x^2 + o(x^2).
    \end{multline*}
    \else
    \[
      \sumn p_i\ph_{a_i^*, b_i^*}(x) = \sumn p_i\left(4\sqrt{a_i^*\aaai^*}-1\right)x + \sum_{i=1}^m p_i\left(4\sqrt{a_i^*\aaai^*}-4\right)x^2 + o(x^2).
    \]
    \fi
    As $a\aaa \le 1/4$ for all $a \in (0,1)$ and $a\aaa = 0$ for $a \in \set{0,1}$, we estimate
    \[
      \sum_{i=1}^m p_i \ge \sum_{i=1}^m p_i\left(2\sqrt{a_i^*\aaai^*}\right) = \sumn p_i\left(2\sqrt{a_i^*\aaai^*}\right) = 1/2,
    \]
    and so
    \[
      \sumn p_i\ph_{a_i^*, b_i^*}(x) \le -x^2 + o(x^2),
    \]
    from which the claim follows immediately.
  \end{claimproof}
  Now take $x^- \in (0, \eps)$ according to the claim. When a point $\spabn \in D_n$ approaches $(p_1^*, \dots, p_n^*, a_1^*, \dots, a_n^*, b_1^*, \dots, b_n^*)$, because $\ph_{a_i, b_i}(x)$ depends continuously on $a_i$ and $b_i$, $\sumn p_i\ph_{a_i, b_i}(x^-)$ converges to $\sumn p_i\ph_{a_i^*, b_i^*}(x^-)$. Therefore there exists $\delta > 0$ such that
  \[
    \sumn p_i\ph_{a_i, b_i}(x^-) < 0
  \]
  for every $\spabn \in D_n$ in the $\delta$-neighborhood of $(p_1^*, \dots, p_n^*, a_1^*, \dots, a_n^*, \allowbreak b_1^*, \dots, b_n^*)$. We already know that the equation
  \[
    \ph(x) := \sumn p_i\ph_{a_i, b_i}(x) = 0, \quad x \in (0,1]
  \]
  has a unique solution $x = x^*$. Assume for the sake of contradiction that $x^* \in (x^-, 1]$. \cref{lem:uniqueness} asserts that $\ph'(x^*) < 0$. Because $\ph(x^-) < 0$, $\ph(x^*) = 0$ and $\ph'(x^*) < 0$, there exists a solution to $\ph(x) = 0$ in $(x^-, x^*)$, which contradicts the uniqueness of $x^*$. Therefore $x^* \in (0, x^-) \subseteq (0, \eps)$.
\end{proof}

\begin{proof}[Proof of \cref{lem:diff}\ref{item:diff-x-star-1}]
  Suppose that $p_i > 0$ for every $i \in [n]$. \cref{lem:phi-prop}\ref{lem:us-0} says that
  $\ph_{a_i, b_i}(1) \ge 0$ and equality holds if and only if $a_i + b_i = 1$. Therefore $\sumn p_i\ph_{a_i,b_i}(1) \ge 0$ and equality holds, that is $x^* = 1$, if and only if $a_i + b_i = 1$ for every $i \in [n]$.
\end{proof}

We also need the following elementary algebraic fact.

\begin{proposition} \label{lem:e1234}
  For every $\la \in (0,1/2)$ and $(e_1, e_2, e_3, e_4) \in \Delta^{3}$ with $e_1, e_2, e_3, e_4 > 0$, set $y := \sqrt{e_1e_4}/\sqrt{e_2e_3}$ and define $\rho\colon (0,1)\cup(1,\infty) \to \mathbb{R}$ by $\rho(r) := \sqrt{r}(r^\la-1)/(r-r^\la)$. If
  \[
    \frac{(e_1e_2)^\la}{e_1+e_2} = \frac{(e_3e_4)^\la}{e_3+e_4}, \quad
    \frac{(e_1e_3)^\la}{e_1+e_3} = \frac{(e_2e_4)^\la}{e_2+e_4},
  \]
  then one of the following holds.
  \begin{enumerate}[label=(\alph*),nosep]
    \item $e_1 = e_4, e_2 = e_3$; \label{item:e1234b}
    \item $e_2 = e_3$ and $y = \rho(r)$, where $r = e_1 / e_4 \neq 1$; \label{item:e1234d}
    \item $e_1 = e_4$ and $y = 1/\rho(r)$, where $r = e_2 / e_3 \neq 1$. \label{item:e1234c}
  \end{enumerate}
  Moreover the function $\rho(r)$ is increasing on $(0,1)$ and decreasing on $(1,\infty)$, $\rho(r)$ takes values in $(0,\la/(1-\la))$, and if $\rho(r) = \rho(r')$ then $r = r'$ or $rr' = 1$.
\end{proposition}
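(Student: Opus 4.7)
My plan is to take logarithms to turn the two multiplicative hypotheses into additive ones, identify a hidden linear relation, and reduce the problem to a one-variable monotonicity question. First set $\ell_i := \log e_i$ for $i \in [4]$, $\mu := 1 - 2\la \in (0,1)$, and introduce $T_1 := (\ell_1 - \ell_2)/2$, $T_2 := (\ell_3 - \ell_4)/2$, $S_1 := (\ell_1 - \ell_3)/2$, $S_2 := (\ell_2 - \ell_4)/2$. The key identity $T_1 - T_2 = S_1 - S_2 = (\ell_1 - \ell_2 - \ell_3 + \ell_4)/2$ is immediate; write this common value as $-Y$ and set $X := T_1 + T_2$, $U := S_1 + S_2$. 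Applying the factorisation $\log(e^u + e^v) = (u+v)/2 + \log(2\cosh((u-v)/2))$ to the two hypotheses, they become
\[
\mu U = F(X, Y), \quad \mu X = F(U, Y), \quad \text{where } F(A, B) := \log\cosh\tfrac{A+B}{2} - \log\cosh\tfrac{A-B}{2}.
\]

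Multiplying the first equation by $X$ and the second by $U$ gives $H(X) = H(U)$, where $H(Z) := Z F(Z, Y)$. Since $F$ is odd in its first argument, $H$ is even; and for $Y \ne 0$, both $F(Z, Y)$ and $\partial_Z F(Z, Y)$ carry the sign of $Y$ whenever $Z > 0$, so $H$ is strictly monotonic on $(0, \infty)$. Hence $|X| = |U|$ (the degenerate case $Y = 0$ forces $F \equiv 0$ and so $X = U = 0$ trivially). Combined with $T_1 - T_2 = S_1 - S_2$, the alternative $X = U$ yields $T_i = S_i$ and thus $e_2 = e_3$, while $X = -U$ yields $T_1 = -S_2$, $T_2 = -S_1$ and thus $e_1 = e_4$. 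In case $e_2 = e_3 =: e$, the two hypotheses collapse to a single equation, and manipulating $e_1^\la(e + e_4) = e_4^\la(e_1 + e)$ with $r := e_1/e_4$ gives $e/e_4 = (r - r^\la)/(r^\la - 1)$, whence $y = \sqrt{e_1 e_4}/e = \rho(r)$; case $e_1 = e_4$ is symmetric and yields $y = 1/\rho(r)$.

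For the properties of $\rho$, the symmetry $\rho(1/r) = \rho(r)$ is direct, as are the limits $\rho \to 0$ as $r \to 0^+$ or $r \to \infty$ and $\rho \to \la/(1 - \la)$ as $r \to 1$ (l'H\^opital). Logarithmic differentiation yields
\[
2r(r^\la - 1)(r^{1-\la} - 1)(\log \rho)'(r) = (r^{1-\la} - r^\la) - (1 - 2\la)(r - 1),
\]
and using the integral representation $r^b - r^a = \int_a^b r^t \log r \, dt$, the right-hand side equals $(1 - 2\la)$ times the difference of the mean of $r^t \log r$ over $[\la, 1 - \la]$ and its mean over $[0, 1]$. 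Since $r^t \log r$ is convex in $t$ for $r > 1$ (and concave for $r < 1$), and $[\la, 1 - \la]$ is the central symmetric subinterval of $[0, 1]$, this difference is strictly negative for $r > 1$ and strictly positive for $r < 1$ by Hermite--Hadamard. Hence $\rho$ is strictly decreasing on $(1, \infty)$ and strictly increasing on $(0, 1)$; this, together with $\rho(1/r) = \rho(r)$, supplies the value range $(0, \la/(1 - \la))$ and the claim that $\rho(r) = \rho(r')$ forces $r = r'$ or $rr' = 1$.

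The main obstacle I anticipate is the structural reformulation at the start: spotting the change of variables $(T_1, T_2, S_1, S_2)$ together with its hidden linear relation $T_1 - T_2 = S_1 - S_2$, which is what collapses the four-variable system into an essentially two-variable one on which the monotonicity of $H$ can be brought to bear. Once this reformulation is made, the rest reduces to routine calculations and standard convexity (Hermite--Hadamard).
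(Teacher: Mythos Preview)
Your proof is correct and takes a genuinely different route from the paper's. The paper proceeds algebraically: setting $r_1=e_1/e_4$ and $r_2=e_2/e_3$, it rewrites the two hypotheses as a $2\times 2$ linear system in $e_3,e_4$, and shows the determinant factors as $(r_1^{2\la}-r_1)r_2^\la(1-r_2)+r_1^\la(1-r_1)(r_2^{2\la}-r_2)$, whose sign pattern forces $r_1=1$ or $r_2=1$. For $\rho$, the paper computes $\rho'(r)$ directly and applies the mean value theorem to the numerator. Your approach is more structural: the logarithmic change of variables together with the $\log\cosh$ identity exposes the hidden $X\leftrightarrow U$ symmetry of the system, after which the even, monotone function $H(Z)=ZF(Z,Y)$ forces $|X|=|U|$ in one stroke. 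This is arguably more illuminating and would generalise more readily to analogous systems. Your monotonicity argument for $\rho$ via the integral representation and convexity is also correct, though the inequality you invoke (mean over a centred subinterval versus the full interval) is a mild extension of Hermite--Hadamard rather than the classical statement itself; it follows because $s\mapsto \tfrac12(f(m+s)+f(m-s))$ is increasing for convex $f$, making the running average increasing as well. The paper's arguments are shorter and entirely elementary, while yours trade a bit of setup for a cleaner conceptual picture.
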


\begin{proof}
  By setting $r_1 := e_1 / e_4$ and $r_2 := e_2 / e_3$, we rewrite the equations as
  \ifieee
  \begin{equation} \label{eqn:e1234-lin}
    \begin{gathered}
      \left(r_1^\la r_2^\la - r_2\right)e_3 + \left(r_1^\la r_2^\la - r_1\right)e_4 = 0, \\
      \left(r_1^\la r_2 - r_2^\la\right)e_3 + \left(r_1^\la - r_1r_2^\la\right)e_4 = 0,
    \end{gathered}  
  \end{equation}
  \else
  \begin{equation} \label{eqn:e1234-lin}
    \left(r_1^\la r_2^\la - r_2\right)e_3 + \left(r_1^\la r_2^\la - r_1\right)e_4 = 0,
    \quad
    \left(r_1^\la r_2 - r_2^\la\right)e_3 + \left(r_1^\la - r_1r_2^\la\right)e_4 = 0,
  \end{equation}
  \fi
  which implies that
  \[
    \left(r_1^\la r_2^\la - r_2\right)\left(r_1^\la - r_1r_2^\la\right) - \left(r_1^\la r_2^\la - r_1\right)\left(r_1^\la r_2 - r_2^\la\right) = 0,
  \]
  which is equivalent to
  \[
    (r_1^{2\la} - r_1)r_2^\la(1 - r_2) + r_1^\la(1-r_1)(r_2^{2\la}-r_2) = 0.
  \]
  Since $r_i^{2\la} - r_i$ and $1-r_i$ have the same sign for $i \in [2]$, for the above equation to hold, it must be the case that $r_1 = 1$ or $r_2 = 1$. Case \ref{item:e1234b} corresponds to the case where $r_1 = 1$ and $r_2 = 1$. Case \ref{item:e1234d} corresponds to the case where $r_1 \neq 1$ and $r_2 = 1$. In this case, \eqref{eqn:e1234-lin} is equivalent to
  \[
    (r_1^\la - 1)e_3 + (r_1^\la - r_1)e_4 = 0,
  \]
  which implies that
  \[
    y = \frac{\sqrt{e_1e_4}}{\sqrt{e_2e_3}} = \frac{\sqrt{r_1}e_4}{e_3} = \frac{\sqrt{r_1}(r_1^\la-1)}{r_1-r_1^\la} = \rho(r_1).
  \]
  Case \ref{item:e1234c} corresponds to the case where $r_1 = 1$ and $r_2 \neq 1$, which can be dealt similarly.

  For the ``moreover'' part, one can compute
  \[
    \rho'(r) = \frac{r^\la((1-2\la)(1-r)-(r^\la-r^{1-\la}))}{2\sqrt{r}(r-r^\la)^2}.
  \]
  When $r \in (0,1)$, the mean value theorem states that there exists $s \in (r,1)$ such that
  \ifieee
  \begin{align*}
    \frac{r^\la-r^{1-\la}}{1-r} & = \frac{(1^{1-\la}-1^\la)-(r^{1-\la}-r^\la)}{1-r} \\
    & = (1-\la)s^{-\la}-\la s^{\la-1} \\
    & = (1-2\la)s^{-\la} - \la s^{\la-1}(1-s^{1-2\la}),
  \end{align*}
  \else
  \[
    \frac{r^\la-r^{1-\la}}{1-r} = \frac{(1^{1-\la}-1^\la)-(r^{1-\la}-r^\la)}{1-r} = (1-\la)s^{-\la}-\la s^{\la-1} = (1-2\la)s^{-\la} - \la s^{\la-1}(1-s^{1-2\la}),
  \]
  \fi
  which is clearly less than $1-2\la$ for $s \in (0,1)$. Thus $\rho'(r) > 0$ for $r \in (0,1)$, and so $\rho$ is increasing on $(0,1)$. One can check that $\rho(r) = \rho(1/r)$, and so $\rho$ is decreasing on $(1,\infty)$. Given $r' > 0$, the monotonicity of the function $\rho$ shows that the equation $\rho(r) = \rho(r')$ has at most two solutions, and so it has exactly two solutions $r = r'$ and $r = 1/r'$. Finally applying l'H\^{o}pital's rule results in $\lim_{r \to 1}\rho(r) = \la/(1-\la)$, and so $\rho(r)$ takes values in $(0, \la/(1-\la))$.
\end{proof}

\begin{proof}[Proof of \cref{thm:lag}]
  Suppose for a moment that $\spabn$ approaches a point $(p_1^*, \dots, p_n^*, a_1^*, \allowbreak \dots, a_n^*, b_1^*, \dots, b_n^*)$ outside $D_n$, that is
  \[
    a_i^* = b_i^* \text{ for }i \in [n] \text{ and }\sumn p_i \sqrt{a_i^*\aaai^*} \le 1/4.
  \]
  Then $a_i, b_i \to a_i^*$, and $x^* \to 0$ (via \cref{lem:diff}\ref{item:diff-x-star-0}), and so via \cref{lem:diff}\ref{item:diff-cab-diff}
  \[
    \cabi(x^*) \to \cabi(0) \stackrel{\eqref{eqn:cab-prop-c-01}}{=} -a_i^* \aaai^*,
  \]
  which yields
  \[
    S_4(a_i, b_i, \cabi(x^*)) \to H_2(a_i^*, \aaai^*).
  \]
  As a result, $L := L\spabn$ converges to
  \[
    \sumn p_iH_2(a_i^*, \aaai^*) + \la\left(1-\sumn p_iH_2(a_i^*, \aaai^*)\right) =: L^*.
  \]
  Using the fact $H_2(a, \aaa) \le 2\sqrt{a\aaa}$ for all $a \in [0,1]$, we know that
  \ifieee
  \begin{align*}
    L \to L^* & = \la + (1-\la)\sumn p_i H_2(a_i^*, \aaai^*) \\
    & \le \la + (1-\la)\sumn p_i\left(2\sqrt{a_i^*\aaai^*}\right) \\
    & \le \la + (1-\la)/2 = (1+\la)/2.
  \end{align*}
  \else
  \[
    L \to L^* = \la + (1-\la)\sumn p_i H_2(a_i^*, \aaai^*) \le \la + (1-\la)\sumn p_i\left(2\sqrt{a_i^*\aaai^*}\right) \le \la + (1-\la)/2 = (1+\la)/2.
  \]
  \fi
  Since the supremum of $L$ is greater than $(1+\la)/2$, $L$ must have a global maximum point in $D_n$.
    
  To characterize a global (or local) maximum point, we would like to compute $\partial L / \partial a_i$ and $\partial L / \partial b_i$. There are several caveats --- for example, the function $x \mapsto x\log x$ is not differentiable at $x = 0$. It turns out the differentiability issue could possibly arise only at the points in $D_n \setminus \wtdn$, where
  \ifieee
  \begin{multline*}
    \wtdn = D_n \cap \{\spabn \colon \\
    \forall i\ p_i > 0, a_i, b_i \in (0,1)\text{ and }\exists i\ a_i + b_i \neq 1\}.
  \end{multline*}
  \else
  \[
    \wtdn = D_n \cap \dset{\spabn}{\forall i\ p_i > 0, a_i, b_i \in (0,1)\text{ and }\exists i\ a_i + b_i \neq 1}.
  \]
  \fi
  Indeed because $x^* \in (0,1)$ for every $\spabn \in \wtdn$ (via \cref{lem:diff}\ref{item:diff-x-star-1}), $x^*$ is continuously differentiable on $\wtdn$. In addition, because $S_3(x)$ is continuously differentiable on $(0,1)$ and $S_4(a, b, c)$ is continuously differentiable for $(a,b,c)$ with $0 < a,b < 1$ and $-\min(a\bb, \aaa b)< c < \min(ab, \aaa\bb)$, \cref{lem:diff}\ref{item:diff-cab-diff-3} implies that $L$ is continuously differentiable on $\wtdn$. The following claim computes $\partial L / \partial a_i$ and $\partial L / \partial b_i$ on $\wtdn$. For brevity, denote
  \ifieee
  \begin{gather*}
    e_{1i} := a_ib_i - c_i^*, \quad e_{2i} := a_i\bbi + c_i^*, \\
    e_{3i} := \aaai b_i + c_i^*, \quad e_{4i} := \aaai\bbi - c_i^*, \\
    c_i^* := \cabi(x^*).
  \end{gather*}
  \else
  \[
    e_{1i} := a_ib_i - c_i^*, \quad e_{2i} := a_i\bbi + c_i^*, \quad e_{3i} := \aaai b_i + c_i^*, \quad e_{4i} := \aaai\bbi - c_i^*, \quad c_i^* := \cabi(x^*).
  \]
  \fi
  Note that the definition of $\cabi(x)$ implies that
  \begin{equation} \label{eqn:lag-ci-def-3}
    e_{1i}e_{4i}(2x^*)^2 = e_{2i}e_{3i}(1-x^*)^2.
  \end{equation}

  \setcounter{claim}{0}
  \begin{claim} \label{claim:partial-l}
    For every $\spabn \in \wtdn$ and $i \in [n]$,
    \begin{subequations}
      \begin{align}
        \fpp{L}{a_i} & = \frac{p_i}{2}\left(\log\frac{(e_{1i}e_{2i})^\la}{e_{1i}+e_{2i}} - \log\frac{(e_{3i}e_{4i})^\la}{e_{3i} + e_{4i}}\right), \label{eqn:partial-l-a} \\
        \fpp{L}{b_i} & = \frac{p_i}{2}\left(\log\frac{(e_{1i}e_{3i})^\la}{e_{1i}+e_{3i}} - \log\frac{(e_{2i}e_{4i})^\la}{e_{2i} + e_{4i}}\right). \label{eqn:partial-l-b}
      \end{align}        
    \end{subequations}
  \end{claim}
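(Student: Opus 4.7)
My plan is to exploit the fact that, for every $\spabn \in \wtdn$, the tuple $(c_1^*, \dots, c_n^*)$ is the unique stationary point, in the interior of $C_1 \times \dots \times C_n$, of the auxiliary function
\[
  \tilde M(c_1, \dots, c_n) := S_3\left(\sumnpabc\right) - \sumn p_iS_4(a_i, b_i, c_i),
\]
so that $\partial\tilde M/\partial c_j$ vanishes at $c = (c_1^*, \dots, c_n^*)$ for every $j$. This is exactly the computation already carried out in the proof of \cref{lem:q-elim} and amounts to the identity
\[
  \fpp{S_4}{c}(a_i, b_i, c_i^*) = 2S_3'(x^*) = \log\frac{e_{1i}e_{4i}}{e_{2i}e_{3i}},
\]
which is just \eqref{eqn:lag-ci-def-3} rewritten. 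Writing $L = \frac12\sumnpaabb + \la \tilde M(c_1^*, \dots, c_n^*)$ and invoking the envelope theorem with respect to the parameter $a_i$, every contribution coming from the implicit dependence of the $c_j^*$ (and hence of $x^*$) on $a_i$ drops out, leaving only the explicit partial.

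Concretely, I would compute, treating $c_1, \dots, c_n$ as fixed,
\[
  \fpp{L}{a_i} = \frac{p_i}{2}\fpp{H_2(a_i, \aaai)}{a_i} + \la\left(S_3'(x^*) \cdot p_i(1 - 2b_i) - p_i \fpp{S_4}{a}(a_i, b_i, c_i^*)\right).
\]
Routine differentiation yields $\partial H_2(a_i, \aaai)/\partial a_i = \log(\aaai/a_i)$, $S_3'(x^*) = \frac12\log(e_{1i}e_{4i}/(e_{2i}e_{3i}))$ (using the defining equation of $c_i^*$), and
\[
  \fpp{S_4}{a}(a_i, b_i, c_i^*) = b_i\log(e_{3i}/e_{1i}) + \bbi\log(e_{4i}/e_{2i})
\]
by applying the chain rule to $H_4(e_{1i}, e_{2i}, e_{3i}, e_{4i})$. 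Substituting these expressions and then collecting the coefficient of $\log e_{ki}$ for each $k \in \set{1,2,3,4}$, the $b_i$-dependent cross terms cancel and one obtains
\[
  \fpp{L}{a_i} = \frac{p_i}{2}\log\frac{\aaai}{a_i} + \frac{\la p_i}{2}\log\frac{e_{1i}e_{2i}}{e_{3i}e_{4i}}.
\]
The algebraic identities $e_{1i} + e_{2i} = a_i$ and $e_{3i} + e_{4i} = \aaai$ then repackage this as \eqref{eqn:partial-l-a}. Formula \eqref{eqn:partial-l-b} for $\partial L/\partial b_i$ follows by the same argument with the roles of $a$ and $b$ interchanged, using the analogous identities $e_{1i} + e_{3i} = b_i$, $e_{2i} + e_{4i} = \bbi$ and $\partial S_4/\partial b|_{(a_i, b_i, c_i^*)} = a_i\log(e_{2i}/e_{1i}) + \aaai\log(e_{4i}/e_{3i})$.

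The only real obstacle is justifying the envelope step: I need each $c_j^*$ to lie in the interior of $C_j$ and to depend in a $C^1$ fashion on $\spabn$ throughout $\wtdn$. Interiority is provided by \cref{lem:diff}\ref{item:diff-cab-diff-3}, while the $C^1$ dependence follows from the implicit function theorem applied to the defining equation of $\cabi$, composed with the $C^1$ dependence of $x^*$ furnished by \cref{lem:diff}\ref{item:diff-x-star-diff}. Once these regularity points are secured, the remainder is purely mechanical bookkeeping with logarithms.
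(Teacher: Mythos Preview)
Your proof is correct and rests on the same cancellation as the paper's, namely that $(c_1^*,\dots,c_n^*)$ is an interior critical point of $\tilde M$, which is exactly what \eqref{eqn:lag-ci-def-3} encodes. The paper carries this out by explicitly computing $\partial S_4(a_j,b_j,c_j^*)/\partial a_i$ for each $j$, rewriting every term via $\log\frac{e_{1j}e_{4j}}{e_{2j}e_{3j}}=2\log\frac{1-x^*}{2x^*}$, and then using $\partial x^*/\partial a_i=\sum_j p_j\,\partial(e_{2j}+e_{3j})/\partial a_i$ to see the implicit contributions cancel against $\partial S_3(x^*)/\partial a_i$; you instead recognize this cancellation as an instance of the envelope theorem and skip directly to the explicit partial. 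Your packaging is cleaner and makes the mechanism transparent, while the paper's line-by-line computation avoids having to separately justify that $\partial\tilde M/\partial c_j$ vanishes at $c^*$; in substance the two arguments coincide.
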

  \begin{claimproof}[Proof of \cref{claim:partial-l}]
    Fix $i \in [n]$. Because $a_i, b_i \in (0,1)$ and $x^* \in (0,1)$ (via \cref{lem:diff}\ref{item:diff-x-star-1}), \cref{lem:diff}\ref{item:diff-cab-diff-3} implies that $-\min(a_i\bbi, \aaai b_i) < c_i^* < \min(a_ib_i, \aaai\bbi)$, and so $e_{1i}, e_{2i}, e_{3i}, e_{4i} > 0$. We compute for $j \neq i$
    \ifieee
    \begin{align*}
      \fpp{S_4(a_j, b_j, c_j^*)}{a_i} & \stackrel{\phantom{\eqref{eqn:lag-ci-def-3}}}{=} \fpp{c_j^*}{a_i}\log\frac{e_{1j}e_{4j}}{e_{2j}e_{3j}} \\
      & \stackrel{\eqref{eqn:lag-ci-def-3}}{=} 2\left(\fpp{c_j^*}{a_i}\right)\log\frac{1-x^*}{2x^*} \\
      & \stackrel{\phantom{\eqref{eqn:lag-ci-def-3}}}{=} \fpp{(e_{2j}+e_{3j})}{a_i}\log\frac{1-x^*}{2x^*},
    \end{align*}
    \else
    \[
      \fpp{S_4(a_j, b_j, c_j^*)}{a_i} = \fpp{c_j^*}{a_i}\log\frac{e_{1j}e_{4j}}{e_{2j}e_{3j}}
      \stackrel{\eqref{eqn:lag-ci-def-3}}{=} 2\left(\fpp{c_j^*}{a_i}\right)\log\frac{1-x^*}{2x^*} = \fpp{(e_{2j}+e_{3j})}{a_i}\log\frac{1-x^*}{2x^*},
    \]
    \fi
    whereas for $j = i$
    \ifieee
    \begin{align*}
      \fpp{S_4(a_i, b_i, c_i^*)}{a_i} & \stackrel{\phantom{\eqref{eqn:lag-ci-def-3}}}{=} \left(\fpp{c_i^*}{a_i} - b_i\right) \log\frac{e_{1i}e_{4i}}{e_{2i}e_{3i}} + \log\frac{e_{4i}}{e_{2i}} \\
      & \stackrel{\eqref{eqn:lag-ci-def-3}}{=} 2\left(\fpp{c_i^*}{a_i}-b_i\right)\log\frac{1-x^*}{2x^*} + \log\frac{e_{4i}}{e_{2i}} \\
      & \stackrel{\phantom{\eqref{eqn:lag-ci-def-3}}}{=} \left(\fpp{(e_{2j}+e_{3j})}{a_i}-1\right)\log\frac{1-x^*}{2x^*} \\
      & \hspace{12.21em} + \log\frac{e_{4i}}{e_{2i}}.
    \end{align*}
    \else
    \begin{multline*}
      \fpp{S_4(a_i, b_i, c_i^*)}{a_i} = \left(\fpp{c_i^*}{a_i} - b_i\right) \log\frac{e_{1i}e_{4i}}{e_{2i}e_{3i}} + \log\frac{e_{4i}}{e_{2i}} \stackrel{\eqref{eqn:lag-ci-def-3}}{=} 2\left(\fpp{c_i^*}{a_i}-b_i\right)\log\frac{1-x^*}{2x^*} + \log\frac{e_{4i}}{e_{2i}} \\
      = \left(\fpp{(e_{2j}+e_{3j})}{a_i}-1\right)\log\frac{1-x^*}{2x^*} + \log\frac{e_{4i}}{e_{2i}}.
    \end{multline*}
    \fi
    Summing over $j \in [n]$ with weights $p_j$ yields
    \ifieee
    \begin{align*}
      & \phantom{{}={}} \fpp{\sum_{j=1}^n p_iS_4(a_j,b_j,c_j^*)}{a_i} \\
      & = \left(\sum_{j = 1}^n p_j\fpp{(e_{2j}+e_{3j})}{a_i} - p_i\right)\log\frac{1-x^*}{2x^*} + p_i\log\frac{e_{4i}}{e_{2i}}.
    \end{align*}
    \else
    \[
      \fpp{\sum_{j=1}^n p_iS_4(a_j,b_j,c_j^*)}{a_i} = \left(\sum_{j = 1}^n p_j\fpp{(e_{2j}+e_{3j})}{a_i} - p_i\right)\log\frac{1-x^*}{2x^*} + p_i\log\frac{e_{4i}}{e_{2i}}.
    \]
    \fi
    Recall that $x = x^*$ is a solution to the following equation:
    \[
      x = \sumnpabcx,
    \]
    which implies that
    \begin{equation} \label{eqn:lag-unique}
      x^* = \sumn p_i(a_i\bbi + \aaai b_i + 2c_i^*) = \sumn p_i(e_{2i}+e_{3i}).
    \end{equation}
    Taking the partial derivative of \eqref{eqn:lag-unique} with respect to $a_i$, we get that
    \[
      \fpp{x^*}{a_i} = \sum_{j=1}^n p_j\fpp{(e_{2i}+e_{3i})}{a_i},
    \]
    which gives the simplification
    \ifieee
    \begin{align*}
      & \phantom{{}={}} \fpp{\sum_{j=1}^n p_iS_4(a_j,b_j,c_j^*)}{a_i} \\
      & = \left(\fpp{x^*}{a_i}-p_i\right)\log\frac{1-x^*}{2x^*} + p_i\log\frac{e_{4i}}{e_{2i}}.
    \end{align*}
    \else
    \[
      \fpp{\sum_{j=1}^n p_iS_4(a_j,b_j,c_j^*)}{a_i} = \left(\fpp{x^*}{a_i}-p_i\right)\log\frac{1-x^*}{2x^*} + p_i\log\frac{e_{4i}}{e_{2i}}.
    \]
    \fi
    Therefore
    \ifieee
    \begin{align*}
      \frac{\partial L}{\partial a_i} & = \frac{p_i}{2}\left(\log\frac{\aaai}{a_i}\right) + \la\biggl(\frac{\partial x^*}{\partial a_i}\log\frac{1-x^*}{2x^*} \\
      & \phantom{{}={}} - \left(\fpp{x^*}{a_i}-p_i\right)\log\frac{1-x^*}{2x^*} - p_i\log\frac{e_{4i}}{e_{2i}}\biggr) \\
      & = p_i\left(\frac12\log\frac{\aaai}{a_i} + \la\left(\log\frac{1-x^*}{2x^*}-\log\frac{e_{4i}}{e_{2i}}\right)\right).
    \end{align*}
    \else
    \begin{align*}
      \frac{\partial L}{\partial a_i} & = \frac{p_i}{2}\left(\log\frac{\aaai}{a_i}\right) + \la\left(\frac{\partial x^*}{\partial a_i}\log\frac{1-x^*}{2x^*} - \left(\fpp{x^*}{a_i}-p_i\right)\log\frac{1-x^*}{2x^*} - p_i\log\frac{e_{4i}}{e_{2i}}\right) \\
      & = p_i\left(\frac12\log\frac{\aaai}{a_i} + \la\left(\log\frac{1-x^*}{2x^*}-\log\frac{e_{4i}}{e_{2i}}\right)\right).
    \end{align*}
    \fi
    Using \eqref{eqn:lag-ci-def-3} we obtain that
    \ifieee
    \begin{align*}
      \log\frac{1-x^*}{2x^*} - \log\frac{e_{4i}}{e_{2i}} & = \frac12\log\frac{e_{1i}e_{4i}}{e_{2i}e_{3i}} - \log\frac{e_{4i}}{e_{2i}} \\
      & = -\frac12\log\frac{e_{3i}e_{4i}}{e_{1i}e_{2i}}.
    \end{align*}
    \else
    \[
      \log\frac{1-x^*}{2x^*} - \log\frac{e_{4i}}{e_{2i}} = \frac12\log\frac{e_{1i}e_{4i}}{e_{2i}e_{3i}} - \log\frac{e_{4i}}{e_{2i}} = -\frac12\log\frac{e_{3i}e_{4i}}{e_{1i}e_{2i}}.
    \]
    \fi
    Together with the facts that $a_i = e_{1i}+e_{2i}$ and $\aaai = e_{3i}+e_{4i}$, we obtain \eqref{eqn:partial-l-a}, and we can obtain \eqref{eqn:partial-l-b} similarly.
  \end{claimproof}
  
  For the rest of the proof, suppose that
  \[
    z := \spabn
  \]
  is a global maximum point of $L$ in $D_n$. Since the value of $L$ does not change after we drop the summands with $p_i = 0$ in the sums appeared in $L$, without loss of generality, we may assume that $p_i > 0$ for every $i \in [n]$.

  We deal with the $x^* = 1$ case separately in the following claim.
  
  \begin{claim} \label{claim:x-star-is-1}
    If $p_i > 0$ for every $i \in [n]$ and $x^* = 1$, then $L(z) \le 1-\la$.
  \end{claim}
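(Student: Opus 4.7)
The plan is to exploit the structural collapse that happens when $x^* = 1$. By \cref{lem:diff}\ref{item:diff-x-star-1}, under the hypothesis $p_i > 0$ for all $i$, the condition $x^* = 1$ forces $b_i = \aaai$ (equivalently $\bbi = a_i$) for every $i \in [n]$. In particular $H_2(a_i, \aaai) = H_2(b_i, \bbi)$, so the first term of $L(z)$ becomes $\sumn p_i H_2(a_i, \aaai)$.

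Next, I would compute $S_3(1) = H_3(0,1,0) = 0$ directly from the definition of $S_3$. To handle the $S_4$ term, I would evaluate $\cabi(1)$: using $b_i = \aaai$, the quantities $a_ib_i$ and $\aaai\bbi$ both equal $a_i\aaai$, and the defining relation for $\cabi$ (together with \eqref{eqn:cab-prop-c-01} in \cref{lem:cab-prop}) gives $\cabi(1) = \min(a_ib_i, \aaai\bbi) = a_i\aaai$. Substituting this into the four entries of $S_4(a_i, b_i, \cabi(1))$ yields $(0,\,a_i,\,\aaai,\,0)$, so $S_4(a_i, b_i, \cabi(1)) = H_2(a_i, \aaai)$.

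Assembling these two computations into the definition of $L$ collapses everything neatly:
\[
  L(z) = \sumn p_i H_2(a_i, \aaai) + \la\left(0 - \sumn p_i H_2(a_i, \aaai)\right) = (1-\la)\sumn p_i H_2(a_i, \aaai).
\]
Since $H_2(a_i, \aaai) \le 1$ for each $i$ and $\sumn p_i = 1$, the bound $L(z) \le 1 - \la$ follows immediately. There is no real obstacle here: the entire claim is a boundary-case simplification, and the only thing one must be careful about is invoking \cref{lem:diff}\ref{item:diff-x-star-1} and \cref{lem:cab-prop} to pin down $b_i$ and $\cabi(1)$ before substituting.
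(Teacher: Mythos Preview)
Your proposal is correct and follows essentially the same route as the paper: invoke \cref{lem:diff}\ref{item:diff-x-star-1} to get $a_i+b_i=1$, use \eqref{eqn:cab-prop-c-01} to find $\cabi(1)=a_i\aaai$, observe that $S_4$ collapses to $H_2(a_i,\aaai)$ and $S_3(1)=0$, and conclude $L(z)=(1-\la)\sumn p_iH_2(a_i,\aaai)\le 1-\la$. The paper phrases the $S_4$ step via the $e_{ji}$ notation, but the computation is identical.
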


  \begin{claimproof}[Proof of \cref{claim:x-star-is-1}]
    Using $a_i + b_i = 1$ for every $i \in [n]$ from \cref{lem:diff}\ref{item:diff-cab-diff}, and $c_i^* = \cabi(1) = \min(a_i\bbi, \aaai b_i)$ from \eqref{eqn:cab-prop-c-01}, one can easily compute
    \[
      e_{1i} = 0, e_{2i} = a_i, e_{3i} = \aaai, e_{4i} = 0.
    \]
    Thus
    \[
      S_4(a_i, b_i, c_i^*) = H_4(e_{1i}, e_{2i}, e_{3i}, e_{4i}) = H_2(a_i, \aaai),
    \]
    which yields
    \ifieee
    \begin{align*}
      L(z) & = \sumn p_iH_2(a_i, \aaai) + \la\left(S_3(1) - \sumn p_iH_2(a_i, \aaai)\right) \\
      & = (1-\la)\sumn p_i H_2(a_i,\aaai) \le 1-\la.
    \end{align*}
    \else
    \begin{equation*}
      L(z) = \sumn p_iH_2(a_i, \aaai) + \la\left(S_3(1) - \sumn p_iH_2(a_i, \aaai)\right) = (1-\la)\sumn p_i H_2(a_i,\aaai) \le 1-\la. \qedhere
    \end{equation*}
    \fi
  \end{claimproof}
  
  Recall that $r = 2+\sqrt3$ and $x_1 = 1/(1+2r(r^{2\la})/(r^2-r^{2\la}))$. Because $x_1 \in (0,1)$, the upper bound $1 - \la$ in \cref{claim:x-star-is-1} is subsumed by $1-\la\log(1+x_1)$. Hereafter we may assume that $x^* \in (0,1)$ at the global maximum point $z$. Since we have assumed that $p_i > 0$ for every $i \in [n]$, \cref{lem:diff}\ref{item:diff-x-star-1} implies that $a_i + b_i \neq 1$ for some $i \in [n]$. We rule out the possibility that $z \in D_n \setminus \wtdn$ in the following claim.
  
  \begin{claim} \label{claim:z-is-in}
    If $p_i > 0$ for every $i \in [n]$ and $a_i + b_i \neq 1$ for some $i\in[n]$, then $z \in \wtdn$.
  \end{claim}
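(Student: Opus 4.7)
I would argue by contradiction: suppose $z \in D_n \setminus \wtdn$. Under the hypotheses ($p_i > 0$ for all $i$ and some $a_i + b_i \neq 1$), the only way $z$ can miss $\wtdn$ is that some $a_j$ or $b_j$ lies in $\set{0,1}$. The plan is to perturb all such boundary coordinates simultaneously toward the interior and show that $L$ strictly increases, contradicting global maximality.

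Set $d_j^a \in \set{-1, 0, +1}$ to point $a_j$ inward ($+1$ if $a_j = 0$, $-1$ if $a_j = 1$, else $0$), and $d_j^b$ analogously; by assumption at least one sign is nonzero. Let $z_t$ be obtained from $z$ by replacing $a_j$ with $a_j + t d_j^a$ and $b_j$ with $b_j + t d_j^b$. For small $t > 0$ every $a_j, b_j$ lies in $(0,1)$ by construction; membership in $D_n$ persists by continuity of its two defining inequalities; and ``some $a_j + b_j \neq 1$'' persists. Hence $z_t \in \wtdn$. Using that $L$ is continuous on all of $D_n$ (via $0\log 0 = 0$ together with \cref{lem:diff}) and $C^1$ on $\wtdn$, I will write
\[
  L(z_t) - L(z) = \int_0^t \frac{d}{ds} L(z_s)\, ds
\]
and show the integrand is positive and integrable for small $s > 0$.

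The core computation, combining \cref{claim:partial-l} with the implicit function theorem applied to the defining equation for $\cabi$, is that for each $j$ with $a_j = 0$, along $z_s$ we have $a_j = s$ and $c_j^* = O(s)$, giving $e_{1j}, e_{2j} = \Theta(s)$ while $e_{3j}, e_{4j}$ stay bounded away from $0$; hence
\[
  \left.\fpp{L}{a_j}\right|_{z_s} = \tfrac{p_j}{2}\log\tfrac{(e_{1j}e_{2j})^\la / (e_{1j}+e_{2j})}{(e_{3j}e_{4j})^\la / (e_{3j}+e_{4j})} \sim \tfrac{p_j(1-2\la)}{2}\log(1/s) \to +\infty,
\]
where $\la < 1/2$ is essential. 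The symmetric cases $a_j = 1$, $b_j = 0$, $b_j = 1$ (with signs absorbed into $d_j^a$ or $d_j^b$) each contribute $+\infty$ to the corresponding directional derivative, while interior indices contribute bounded amounts. Since $\log(1/s)$ is integrable near $0$, the integral is strictly positive for small $t > 0$, yielding $L(z_t) > L(z)$ and the desired contradiction.

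The main technical hurdle I anticipate is the asymptotic analysis of $\cabi(x^*)$ at the four boundary configurations: establishing $c_j^* = O(s)$ (not merely $o(1)$) requires implicit differentiation of a polynomial equation whose coefficients themselves degenerate as $s \to 0$, and this input simultaneously powers both the continuity of $L$ at $z$ and the precise $\log(1/s)$ blow-up rate. The structural driver is the assumption $\la \in (0, 1/2)$ from \cref{thm:lag}, which makes the entropic marginal gain from moving off the boundary strictly dominate the constraint penalty.
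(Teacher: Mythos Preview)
Your approach is essentially the paper's: perturb all boundary coordinates inward, compute the directional derivative of $L$ along the path via \cref{claim:partial-l}, and show it diverges to $+\infty$ because $\la < 1/2$. The integral formulation you use in place of the paper's mean value argument is fine.

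There is, however, a genuine gap in your asymptotic analysis. Your claimed behavior ``$e_{1j}, e_{2j} = \Theta(s)$ while $e_{3j}, e_{4j}$ stay bounded away from $0$'' is correct only when exactly one of $a_j, b_j$ lies in $\set{0,1}$. When \emph{both} do --- say $(a_j, b_j) = (0,0)$, so that $(a_j(s), b_j(s)) = (s,s)$ --- the bound $c_j^* \in [-s(1-s), s^2]$ forces $e_{3j} = s(1-s) + c_j^* = O(s)$, not bounded away from $0$, and in fact the defining relation \eqref{eqn:lag-ci-def-3} yields the finer asymptotics
\[
  e_{1j} \sim y^2 s^2, \qquad e_{2j} = e_{3j} \sim s, \qquad e_{4j} \sim 1,
\]
where $y = (1-x^*)/(2x^*)$. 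So your stated asymptotics are simply false here, and your computation of $\partial L/\partial a_j$ would proceed from a wrong premise. The paper splits into three cases according to whether none, one, or both of $a_j, b_j$ are on the boundary; in the ``both'' case the two contributions $u_j\,\partial L/\partial a_j$ and $v_j\,\partial L/\partial b_j$ each behave like $\tfrac{p_j}{2}\log s^{2\la - 1}$ (the exponents $3\la - 1$ from $e_{1j}e_{2j}$ and $\la$ from $e_{3j}e_{4j}$ combine to $2\la - 1$), so the total directional derivative still diverges with the same rate --- but this is a different computation from the one you sketched, and the case distinction is not optional.
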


  \begin{claimproof}[Proof of \cref{claim:z-is-in}]
    Assume for the sake of contradiction that $z \not\in \wtdn$, that is $a_i \in \set{0,1}$ or $b_i \in \set{0,1}$ for some $i \in [n]$. We shall get a contradiction by nudging $(\san, \sbn)$ towards a point in $\wtdn$. For every $i \in [n]$, define
    \ifieee
    \begin{align*}
      u_i & = \begin{cases}
        1 - 2a_i & \text{if }a_i \in \set{0,1}, \\
        0 & \text{if }a_i \in (0,1),
      \end{cases} \\
      v_i & = \begin{cases}
        1 - 2b_i & \text{if }b_i \in \set{0,1}, \\
        0 & \text{if }b_i \in (0,1).
      \end{cases}
    \end{align*}
    \else
    \[
      u_i = \begin{cases}
        1 - 2a_i & \text{if }a_i \in \set{0,1}, \\
        0 & \text{if }a_i \in (0,1),
      \end{cases}
      \quad
      v_i = \begin{cases}
        1 - 2b_i & \text{if }b_i \in \set{0,1}, \\
        0 & \text{if }b_i \in (0,1).
      \end{cases}
    \]
    \fi
    It is easy to check that there exists $\eps > 0$ such that
    \ifieee
    \begin{multline*}
      z(t) := (p_1, \dots, p_n, a_1(t), \dots, a_n(t), b_1(t), \dots, b_n(t)), \\
      \text{where }a_i(t) = a_i + u_it \text{ and }b_i(t) = b_i + v_it,
    \end{multline*}
    \else
    \[
      z(t) := (p_1, \dots, p_n, a_1(t), \dots, a_n(t), b_1(t), \dots, b_n(t)), \text{where }a_i(t) = a_i + u_it \text{ and }b_i(t) = b_i + v_it,
    \]
    \fi
    is in $\wtdn$ for every $t \in (0, \eps)$. We break the computation of the limit of
    \[
      d_i(t) := u_i\fpp{L}{a_i}(z(t)) + v_i\fpp{L}{b_i}(z(t))
    \]
    as $t \to 0^+$ into three cases. For brevity,
    \ifieee
    \begin{gather*}
      e_{1i}(t) := a_i(t)b_i(t) - c_i^*(t), \quad e_{2i}(t) := a_i(t)\bbi(t) + c_i^*(t), \\
      e_{3i}(t) := \aaai(t)b_i(t) + c_i^*(t), \quad e_{4i}(t) := \aaai(t)\bbi(t) - c_i^*(t), \\
      c_i^*(t) := \cabi(x^*(z(t))).
    \end{gather*}
    \else
    \begin{gather*}
      e_{1i}(t) := a_i(t)b_i(t) - c_i^*(t), \quad e_{2i}(t) := a_i(t)\bbi(t) + c_i^*(t), \\
      e_{3i}(t) := \aaai(t)b_i(t) + c_i^*(t), \quad e_{4i}(t) := \aaai(t)\bbi(t) - c_i^*(t), \quad c_i^*(t) := \cabi(x^*(z(t))).
    \end{gather*}
    \fi

    \ifieee
      \medskip
      \textit{Case 1:} $a_i \in \set{0,1}$ and $b_i \in \set{0,1}$.
    \else
      \paragraph{Case 1: $a_i \in \set{0,1}$ and $b_i \in \set{0,1}$.}
    \fi
    We only demonstrate how to deal with the case where $(a_i, b_i) = (0,0)$ as other cases are similar. In this case, $u_i = v_i = 1$, and via \cref{claim:partial-l}, we have
    \ifieee
    \begin{gather*}
      \begin{multlined}
        u_i\fpp{L}{a_i}(z(t)) = \frac{p_i}{2}\biggl(\log\frac{(e_{1i}(t)e_{2i}(t))^\la}{e_{1i}(t)+e_{2i}(t)} \\
        - \log\frac{(e_{3i}(t)e_{4i}(t))^\la}{e_{3i}(t) + e_{4i}(t)}\biggr),
      \end{multlined} \\
      \begin{multlined}
        v_i\fpp{L}{b_i}(z(t)) = \frac{p_i}{2}\biggl(\log\frac{(e_{1i}(t)e_{3i}(t))^\la}{e_{1i}(t)+e_{3i}(t)} \\
        - \log\frac{(e_{2i}(t)e_{4i}(t))^\la}{e_{2i}(t) + e_{4i}(t)}\biggr).
      \end{multlined}
    \end{gather*}
    \else
    \begin{align*}
      u_i\fpp{L}{a_i}(z(t)) & = \frac{p_i}{2}\left(\log\frac{(e_{1i}(t)e_{2i}(t))^\la}{e_{1i}(t)+e_{2i}(t)} - \log\frac{(e_{3i}(t)e_{4i}(t))^\la}{e_{3i}(t) + e_{4i}(t)}\right), \\
      v_i\fpp{L}{b_i}(z(t)) & = \frac{p_i}{2}\left(\log\frac{(e_{1i}(t)e_{3i}(t))^\la}{e_{1i}(t)+e_{3i}(t)} - \log\frac{(e_{2i}(t)e_{4i}(t))^\la}{e_{2i}(t) + e_{4i}(t)}\right).
    \end{align*}
    \fi
    Since $a_i(t) = b_i(t) = t$, the definition of $\cabi(x)$ implies that $-t(1-t)\le c_i^*(t) \le t^2$. Thus $c_i^*(t) = O(t)$, and so $e_{4i}(t) = (1-t)^2 - c_i^*(t) \sim 1$. Because $e_{2i}(t) = t(1-t) + c_i^* = e_{3i}(t)$, \eqref{eqn:lag-ci-def-3} implies
    \ifieee
    \begin{multline*}
      \frac{\sqrt{e_{1i}(t)}}{e_{2i}(t)} \sim \frac{\sqrt{e_{1i}(t)e_{4i}(t)}}{\sqrt{e_{2i}(t)e_{3i}(t)}} = \frac{1-x^*(z(t))}{2x^*(z(t))} \\
      \sim \frac{1-x^*}{2x^*} =: y, \text{ as }t \to 0^+.
    \end{multline*}
    \else
    \[
      \frac{\sqrt{e_{1i}(t)}}{e_{2i}(t)} \sim \frac{\sqrt{e_{1i}(t)e_{4i}(t)}}{\sqrt{e_{2i}(t)e_{3i}(t)}} = \frac{1-x^*(z(t))}{2x^*(z(t))} \sim \frac{1-x^*}{2x^*} =: y, \text{ as }t \to 0^+.
    \]
    \fi
    Together with the fact that $e_{1i}(t) + e_{2i}(t) = a_i(t) = t$, we can solve
    \[
      e_{1i}(t) \sim y^2t^2, \quad e_{2i}(t) = e_{3i}(t) \sim t, \quad e_{4i}(t) \sim 1,
    \]
    which implies that
    \[
      u_i\fpp{L}{a_i}(z(t)) = v_i\fpp{L}{b_i}(z(t)) \sim \frac{p_i}{2}\log t^{2\la-1},
    \]
    and so $\lim_{t\to 0^+} d_i(t) = \infty$.

    \ifieee
      \medskip
      \textit{Case 2:} $(a_i, b_i) \in \set{0,1} \times (0,1) \cup (0,1) \times \set{0,1}$.
    \else
      \paragraph{Case 2: $(a_i, b_i) \in \set{0,1} \times (0,1) \cup (0,1) \times \set{0,1}$.}
    \fi
    We only demonstrate how to deal with the case where $a_i = 0$ and $b_i \in (0,1)$ as other cases are similar. In this case, $u_i = 1$ and $v_i = 0$, and via \cref{claim:partial-l}, we have
    \ifieee
    \begin{align*}
      d_i(t) & = u_i\fpp{L}{a_i}(z(t)) \\
      & = \frac{p_i}{2}\left(\log\frac{(e_{1i}(t)e_{2i}(t))^\la}{e_{1i}(t)+e_{2i}(t)} - \log\frac{(e_{3i}(t)e_{4i}(t))^\la}{e_{3i}(t) + e_{4i}(t)}\right).
    \end{align*}
    \else
    \[
      d_i(t) = u_i\fpp{L}{a_i}(z(t)) = \frac{p_i}{2}\left(\log\frac{(e_{1i}(t)e_{2i}(t))^\la}{e_{1i}(t)+e_{2i}(t)} - \log\frac{(e_{3i}(t)e_{4i}(t))^\la}{e_{3i}(t) + e_{4i}(t)}\right).
    \]
    \fi
    Since $a_i(t) = t$ and $b_i(t) = b_i$, the definition of $\cabi(x)$ implies that $-t\bbi \le c_i^* \le tb_i$. Thus $c_i^* = O(t)$, and so $e_{3i} = (1-t)b_i + c_i^* = b_i + o(1)$ and $e_{4i} = (1-t)\bbi - c_i^* = \bbi + o(1)$. Therefore
    \ifieee
    \begin{multline*}
      \frac{\sqrt{e_{1i}\bbi}}{\sqrt{e_{2i}b_i}} \sim \frac{\sqrt{e_{1i}(t)e_{4i}(t)}}{\sqrt{e_{2i}(t)e_{3i}(t)}} = \frac{1-x^*(z(t))}{2x^*(z(t))} \\
      \sim \frac{1-x^*}{2x^*} =: y, \text{ as }t \to 0^+.
    \end{multline*}
    \else
    \[
      \frac{\sqrt{e_{1i}\bbi}}{\sqrt{e_{2i}b_i}} \sim \frac{\sqrt{e_{1i}(t)e_{4i}(t)}}{\sqrt{e_{2i}(t)e_{3i}(t)}} = \frac{1-x^*(z(t))}{2x^*(z(t))} \sim \frac{1-x^*}{2x^*} =: y, \text{ as }t \to 0^+.
    \]
    \fi
    Together with the fact that $e_{1i}(t) + e_{2i}(t) = a_i(t) = t$, we can solve
    \[
      e_{1i}(t) \sim \left(\frac{y^2b_i^2}{y^2b_i^2 + \bbi^2}\right)t, \quad e_{2i}(t) \sim \left(\frac{\bbi^2}{y^2b_i^2 + \bbi^2}\right)t,
    \]
    which implies that
    \[
      u_i\fpp{L}{a_i}(z(t)) \sim \frac{p_i}{2}\log t^{2\la-1},
    \]
    and so $\lim_{t \to 0^+}d_i(t) = \infty$.

    \ifieee
      \medskip
      \textit{Case 3:} $(a_i, b_i) \in (0,1)^2$.
    \else
      \paragraph{Case 3: $(a_i, b_i) \in (0,1)^2$.}
    \fi
    In this case $u_i = v_i = 0$, and so $d_i(t) = 0$.

    \ifieee
    \medskip
    \else
    \bigskip
    \fi

    Because $a_i \in \set{0,1}$ and $b_i \in \set{0,1}$ for some $i \in [n]$, we know from the computation done in the above three cases that
    \[
      \frac{\dd}{\dd t}L(z(t)) = \sumn d_i(t) \to \infty, \text{ as }t \to 0^+,
    \]
    in particular there exists $\tilde{\eps} \in (0,\eps)$ such that $\dd L(z(t))/ \dd t > 0$ for every $t \in (0, \tilde\eps)$. However because $L(z(0)) \ge L(z(\tilde\eps))$, the mean value theorem says that there exists $t \in (0, \tilde\eps)$ such that $\dd L(z(t))/ \dd t \le 0$, which yields a contradiction.
  \end{claimproof}
  
  According to \cref{claim:z-is-in}, the global maximum point $z$ is in $\wtdn$, that is $a_i, b_i \in (0,1)$ for $i \in [n]$, and $a_1 + b_1 \neq 1$ without loss of generality, which implies that $e_{11} \neq e_{41}$ because $e_{11} - e_{41} = a_1 + b_1 - 1$. In addition, for every $i \in [n]$, $\partial L/\partial a_i = 0$ and $\partial L/\partial b_i = 0$, which imply via \cref{claim:partial-l} that
  \[
    \frac{(e_{1i}e_{2i})^\la}{e_{1i}+e_{2i}} = \frac{(e_{3i}e_{4i})^\la}{e_{3i}+e_{4i}}, \quad \frac{(e_{1i}e_{3i})^\la}{e_{1i}+e_{3i}} = \frac{(e_{2i}e_{4i})^\la}{e_{2i}+e_{4i}}.
  \]
  Set
  \[
    y := \frac{1-x^*}{2x^*} \quad \text{and} \quad \rho(r) := \frac{\sqrt{r}(r^\la-1)}{r-r^\la}.
  \]
  Because $x^* \in (0,1)$, we know that $y > 0$. We also know from \eqref{eqn:lag-ci-def-3} that
  \begin{equation} \label{eqn:lag-thm-y-def}
    y = \sqrt{e_{1i}e_{4i}}/\sqrt{e_{2i}e_{3i}} \text{ for every } i \in [n].
  \end{equation}
  Define $r_i = e_{1i} / e_{4i}$ for all $i$. By \cref{lem:e1234} we have $y = \rho(r_1) \in (0, \la/(1-\la)) \subseteq (0,1)$ because $e_{11} \neq e_{41}$. This allows us to rule out case~\ref{item:e1234c} in \cref{lem:e1234}. For each $i \in [n]$, the remaining two cases \ref{item:e1234b} and \ref{item:e1234d} could apply to $(e_{1i}, e_{2i}, e_{3i}, e_{4i})$, that is, one of the following two holds.
  \begin{enumerate}[label=(\alph*),nosep]
    \item $e_{1i} = e_{4i}$, $e_{2i} = e_{3i}$. Using \eqref{eqn:lag-thm-y-def}, $e_{1i} = e_{4i}$, $e_{2i} = e_{3i}$ and $e_{1i} + e_{2i} + e_{3i} + e_{4i} = 1$, we can solve
    \[
      e_{1i} = e_{4i} = \frac{y}{2y+2}, \quad e_{2i} = e_{3i} = \frac{1}{2y+2}.
    \]
    Thus $a_i = e_{1i} + e_{2i} = 1/2$ and $b_i = e_{1i} + e_{3i} = 1/2$, and so
    \ifieee
    \begin{gather*}
      H_2(a_i, \aaai) = H_2(b_i, \bbi) = 1, \\
      c_i^* = a_ib_i - e_{1i} = \frac{-y+1}{4y+4} = \frac{3x^*-1}{4x^*+4}.
    \end{gather*}
    \else
    \[
      H_2(a_i, \aaai) = H_2(b_i, \bbi) = 1, \quad c_i^* = a_ib_i - e_{1i} = \frac{-y+1}{4y+4} = \frac{3x^*-1}{4x^*+4}.
    \]
    \fi
    \item $e_{2i} = e_{3i}$, $y = \rho(r_i)$ and $r_i \neq 1$. Since $y = \rho(r_1)$, \cref{lem:e1234} implies that $r_i \in \set{r_1, 1/r_1}$. Using \eqref{eqn:lag-thm-y-def}, $e_{1i}/e_{4i} = r_i$, $e_{2i} = e_{3i}$ and $e_{1i} + e_{2i} + e_{3i} + e_{4i} = 1$, we can solve
    \[
      (e_{1i},e_{2i},e_{3i},e_{4i}) = \begin{cases}
        (e_{11},e_{21},e_{31},e_{41}) & \text{if }r_i = r_1, \\
        (e_{41},e_{31},e_{21},e_{11}) & \text{if }r_i = 1/r_1.
      \end{cases}
    \]
    In either case we always have
    \ifieee
    \begin{gather*}
      H_2(a_i, \aaai) = H_2(b_i,\bbi) = H_2(a_1, \aaa_1), \\
      \begin{aligned}
        S_4(a_i, b_i, c_i^*) & = H_4(e_{1i},e_{2i},e_{3i},e_{4i}) \\
        & = H_4(e_{11},e_{21},e_{31},e_{41}) = S_4(a_1, a_1, c_1^*),
      \end{aligned} \\
      e_{2i} + e_{3i} = e_{21} + e_{31} = 2a_1\aaa_1 + 2c_1^*.
    \end{gather*}
    \else
    \begin{gather*}
      H_2(a_i, \aaai) = H_2(b_i,\bbi) = H_2(a_1, \aaa_1), \\
      S_4(a_i, b_i, c_i^*) = H_4(e_{1i},e_{2i},e_{3i},e_{4i}) = H_4(e_{11},e_{21},e_{31},e_{41}) = S_4(a_1, a_1, c_1^*), \\
      e_{2i} + e_{3i} = e_{21} + e_{31} = 2a_1\aaa_1 + 2c_1^*.
    \end{gather*}
    \fi
  \end{enumerate}
  
  After combining similar terms, $L(z)$ and \eqref{eqn:lag-unique} become
  \begin{subequations}
    \ifieee
    \begin{gather}
      \begin{aligned}
        L(z) & = (1-p) + pH_2(a_1, \aaa_1) \\
        & \phantom{{}={}} + \la\left(S_3(x^*) - (1-p)S_{40} - pS_{41}\right),
      \end{aligned} \\
      x^* = (1-p)(2a_0\aaa_0 + 2c_0^*) + p(2a_1\aaa_1 + 2c_1^*), \label{eqn:x-half}
    \end{gather}
    \else
      \begin{gather} 
      L(z) = (1-p) + pH_2(a_1, \aaa_1) + \la\left(S_3(x^*) - (1-p)S_{40} - pS_{41}\right), \\
      x^* = (1-p)(2a_0\aaa_0 + 2c_0^*) + p(2a_1\aaa_1 + 2c_1^*), \label{eqn:x-half}
    \end{gather}
    \fi
  \end{subequations}
  where
  \ifieee
  \begin{gather*}
    p = \sum_{i\colon r_i \neq 1}p_i, \\
    S_{40} = S_4(a_0, a_0, c_0^*), \quad S_{41} = S_4(a_1, a_1, c_1^*), \\
    a_0 = \frac12, \quad c_0^* = \frac{3x^*-1}{4x^*+4}.
  \end{gather*}
  \else
  \[
    p = \sum_{i\colon r_i \neq 1}p_i, \quad S_{40} = S_4(a_0, a_0, c_0^*), \quad S_{41} = S_4(a_1, a_1, c_1^*), \quad a_0 = \frac12, \quad c_0^* = \frac{3x^*-1}{4x^*+4}.
  \]
  \fi
  Because $r_1 \neq 1$ we know that $p \ge p_1 > 0$. We break the rest of the argument into two cases.
  
  \ifieee
    \medskip
    \textit{Case 1:} $p \in (0,1)$.
  \else
    \paragraph{Case 1: $p \in (0,1)$.}
  \fi
  We obtain for $i \in \set{0,1}$,
  \[
    \fpp{H_{4i}}{p} = \fpp{c_i^*}{p}\log\frac{(a_i^2-c_i)^2}{(a_i\aaai - c_i)^2} = \fpp{c_i^*}{p}\left(2\log\frac{1-x^*}{2x^*}\right).
  \]
  Therefore
  \ifieee
  \begin{align*}
    \fpp{L}{p} & = -1 + H_2(a_1, \aaa_1) \\
    & \hspace{1.1em} + \la\biggl(\fpp{x^*}{p} \log\frac{1-x^*}{2x^*} + S_{40} \\
    & \hspace{3em} - (1-p)\fpp{c_0^*}{p}\left(2\log\frac{1-x^*}{2x^*}\right) \\
    & \hspace{3em} - S_{41} - p\fpp{c_1^*}{p}\left(2\log\frac{1-x}{2x}\right)\biggr).
  \end{align*}
  \else
  \begin{multline*}
    \fpp{L}{p} = -1 + H_2(a_1, \aaa_1) + \la\left(\fpp{x^*}{p} \log\frac{1-x^*}{2x^*} + S_{40} - (1-p)\fpp{c_0^*}{p}\left(2\log\frac{1-x^*}{2x^*}\right) \right. \\
    \left. - S_{41} - p\fpp{c_1^*}{p}\left(2\log\frac{1-x}{2x}\right)\right).
  \end{multline*}
  \fi
  Taking the partial derivative of \eqref{eqn:x-half} with respect to $p$, we get that
  \ifieee
  \begin{multline*}
    \fpp{x^*}{p} = - (2a_0\aaa_0 + 2c_0^*) + (1-p)\left(2\fpp{c_0^*}{p}\right) \\
    + (2a_1\aaa_1 + 2c_1^*) + p\left(2\fpp{c_1^*}{p}\right),
  \end{multline*}
  \else
  \[
    \fpp{x^*}{p} = - (2a_0\aaa_0 + 2c_0^*) + (1-p)\left(2\fpp{c_0^*}{p}\right) + (2a_1\aaa_1 + 2c_1^*) + p\left(2\fpp{c_1^*}{p}\right),
  \]
  \fi
  which leads to the simplification
  \ifieee
  \begin{multline*}
    \fpp{L}{p} = -1 + H_2(a_1,\aaa_1) + \la\biggl((- 2a_0\aaa_0 - 2c_0^* \\
    + 2a_1\aaa_1 + 2c_1^*)\log\frac{1-x^*}{2x^*} + S_{40} - S_{41}\biggr).
  \end{multline*}
  \else
  \[
    \fpp{L}{p} = -1 + H_2(a_1,\aaa_1) + \la\left(\left(- 2a_0\aaa_0 - 2c_0^* + 2a_1\aaa_1 + 2c_1^*\right)\log\frac{1-x^*}{2x^*} + S_{40} - S_{41}\right).
  \]
  \fi
  As $\partial L / \partial p = 0$ at the global maximum point $z$ with $p \in (0,1)$, we get
  \ifieee
  \begin{align*}
    L(z) & \stackrel{\phantom{\eqref{eqn:x-half}}}{=} L(z) - p\fpp{L}{p}(z) \\
    & \stackrel{\phantom{\eqref{eqn:x-half}}}{=} 1 + \la\biggl(S_3(x^*) - S_{40} - p(-2a_0\aaa_0 \\
    & \phantom{\stackrel{\phantom{\eqref{eqn:x-half}}}{=}} \hspace{3em} - 2c_0^* + 2a_1\aaa_1 + 2c_1^*)\log\frac{1-x^*}{2x^*}\biggr) \\
    & \stackrel{\eqref{eqn:x-half}}{=} 1+\la\biggl(S_3(x^*) - S_{40} \\
    & \phantom{\stackrel{\phantom{\eqref{eqn:x-half}}}{=}} \hspace{3em} - (x^* - 2a_0\aaa_0 - 2c_0^*)\log\frac{1-x^*}{2x^*}\biggr).
  \end{align*}
  \else
  \begin{align*}
    L(z) & \stackrel{\phantom{\eqref{eqn:x-half}}}{=} L(z) - p\fpp{L}{p}(z) \\
    & \stackrel{\phantom{\eqref{eqn:x-half}}}{=} 1 + \la\left(S_3(x^*) - S_{40} - p\left(-2a_0\aaa_0 - 2c_0^* + 2a_1\aaa_1 + 2c_1^*\right)\log\frac{1-x^*}{2x^*}\right) \\
    & \stackrel{\eqref{eqn:x-half}}{=} 1+\la\left(S_3(x^*) - S_{40} - \left(x^* - 2a_0\aaa_0 - 2c_0^*\right)\log\frac{1-x^*}{2x^*}\right).
  \end{align*}
  \fi
  Because
  \ifieee
  \begin{align*}
      S_3(x^*) & = H_3{\left(\frac{1-x^*}{2}, x^*, \frac{1-x^*}{2}\right)} \\
      & = H_2(x^*, 1-x^*) + 1 - x^*, \\
      S_{40} & = H_4{\left(a_0^2-c_0^*, a_0\aaa_0+c_0^*, \aaa_0 a_0+c_0^*, \aaa_0^2-c_0^*\right)} \\
      & = H_2{\left(\frac{1-x^*}{1+x^*},\frac{2x^*}{1+x^*}\right)}+1,
  \end{align*}
  \else
  \begin{gather*}
    S_3(x^*) = H_3{\left(\frac{1-x^*}{2}, x^*, \frac{1-x^*}{2}\right)} = H_2(x^*, 1-x^*) + 1 - x^*, \\
    S_{40} = H_4{\left(a_0^2-c_0^*, a_0\aaa_0+c_0^*, \aaa_0 a_0+c_0^*, \aaa_0^2-c_0^*\right)} = H_2{\left(\frac{1-x^*}{1+x^*},\frac{2x^*}{1+x^*}\right)}+1,
  \end{gather*}
  \fi
  we simplify
  \ifieee
  \begin{align*}
    L(z) & = 1 + \la\biggl(H_2(x^*,1-x^*)+1-x^* \\
    & \hspace{3.1em} - H_2{\biggl(\frac{1-x^*}{1+x^*},\frac{2x^*}{1+x^*}\biggr)} - 1 \\
    & \hspace{3.1em} - \left(x^*-\frac12-\frac{3x^*-1}{2x^*+2}\right)\log\frac{1-x^*}{2x^*}\biggr) \\
    & = 1 - \la \log(1+x^*).
  \end{align*}
  \else
  \begin{multline*}
    L(z) = 1 + \la\left(H_2(x^*,1-x^*)+1-x^* - H_2{\left(\frac{1-x^*}{1+x^*},\frac{2x^*}{1+x^*}\right)}-1 \right. \\ \left. - \left(x^*-\frac12-\frac{3x^*-1}{2x^*+2}\right)\log\frac{1-x^*}{2x^*}\right) = 1 - \la \log(1+x^*).
  \end{multline*}
  \fi
  
  It suffices to show that $x^* > 1/(1+2\rho(r^2))$, where $r = 2 + \sqrt3$. Since $x^* \in (0,1)$, we know that $2a_0\aaa_0+2c_0^* = 2x^*/(1+x^*) > x^*$. We know from \eqref{eqn:x-half} that $x^*$ is a convex combination of $2a_0\aaa_0 + 2c_0^*$ and $2a_1\aaa_1 + 2c_1^*$, and so
  \[
    e_{21} + e_{31} = 2a_1\aaa_1 + 2c_1^* < x^* = \frac{1}{1+2y}.
  \]
  Using \eqref{eqn:lag-thm-y-def}, $e_{11}/e_{41} = r_1$, $e_{21} = e_{31}$ and $e_{11} + e_{21} + e_{31} + e_{41} = 1$, we can solve
  \[
    e_{21} + e_{31} = \frac{2\sqrt{r_1}}{(r_1+1)y+2\sqrt{r_1}},
  \]
  which together with the previous inequality gives
  \[
    \frac{2\sqrt{r_1}}{(r_1+1)y+2\sqrt{r_1}} < \frac{1}{1+2y},
  \]
  which is equivalent to
  \[
    4\sqrt{r_1} < r_1 + 1,
  \]
  and so $r_1 < 1/r^2$ or $r_1 > r^2$. From \cref{lem:e1234} we get that $\rho(r_1) < \rho(r^2)$, and so $x^* = 1/(1+2y) = 1/(1+2\rho(r_1)) > 1/(1+2\rho(r^2))$.

  \ifieee
    \medskip
    \textit{Case 2:} $p = 1$.
  \else
    \paragraph{Case 2: $p = 1$.}
  \fi
  Because $e_{11}/e_{41} = r_1$, $e_{21} = e_{31}$, $x^* = 2a_1\aaa_1 + 2c_1 = e_{21} + e_{31}$, and $e_{11} + e_{21} + e_{31} + e_{41} = 1$, we can solve
  \[
    e_{11} = \frac{r_1(1-x^*)}{r_1 + 1}, \quad e_{21} = e_{31} = \frac{x^*}{2}, \quad e_{41} = \frac{1-x^*}{r_1 + 1}.
  \]
  We know from \eqref{eqn:lag-thm-y-def} that
  \[
    \frac{\sqrt{r_1}(1-x^*)}{(r_1+1)(x^*/2)} = y = \frac{1-x^*}{2x^*},
  \]
  which is equivalent to
  \[
    r_1 + 1 = 4\sqrt{r_1}.
  \]
  Thus $r_1 \in \set{r^2, 1/r^2}$, where $r = 2+\sqrt3$, and
  \[
    e_{11} = \frac{r(1-x^*)}{4}, \quad e_{41} = \frac{1-x^*}{4r}.
  \]
  Therefore
  \ifieee
  \begin{align*}
    L(z) & = H_2(e_{11}+e_{21}, e_{31}+e_{41}) \\
    & \phantom{{}={}} + \la(S_3(x^*) - H_4(e_{11}, e_{21}, e_{31}, e_{41})) \\
    & = H_2{\left(\tfrac{r(1-x^*)}{4} + \tfrac{x^*}{2}, \tfrac{x^*}{2} + \tfrac{1-x^*}{4r}\right)} \\
    & \phantom{{}={}} + \la\left(H_3{\left(\tfrac{1-x^*}{2},x^*,\tfrac{1-x^*}{2}\right)} \right. \\
    & \hspace{2.8em} \left. - H_4{\left(\tfrac{r(1-x^*)}{4}, \tfrac{x^*}{2}, \tfrac{x^*}{2}, \tfrac{1-x^*}{4r}\right)}\right) \\
    & = H_2{\left(\tfrac{r}{4} - \left(\tfrac{r}{4}-\tfrac12\right)x^*, \tfrac{1}{4r} + \left(\tfrac12-\tfrac{1}{4r}\right)x^*\right)} \\
    & \phantom{{}={}} + \la\left(-1 + \left(\tfrac{r}{4} - \tfrac{1}{4r}\right) (\log r) (1-x^*)\right).
  \end{align*}
  \else
  \begin{align*}
    L(z) & = H_2(e_{11}+e_{21}, e_{31}+e_{41}) + \la(S_3(x^*) - H_4(e_{11}, e_{21}, e_{31}, e_{41})) \\
    & = H_2{\left(\tfrac{r(1-x^*)}{4} + \tfrac{x^*}{2}, \tfrac{x^*}{2} + \tfrac{1-x^*}{4r}\right)} + \la\left(H_3{\left(\tfrac{1-x^*}{2},x^*,\tfrac{1-x^*}{2}\right)} - H_4{\left(\tfrac{r(1-x^*)}{4}, \tfrac{x^*}{2}, \tfrac{x^*}{2}, \tfrac{1-x^*}{4r}\right)}\right) \\
    & = H_2{\left(\tfrac{r}{4} - \left(\tfrac{r}{4}-\tfrac12\right)x^*, \tfrac{1}{4r} + \left(\tfrac12-\tfrac{1}{4r}\right)x^*\right)} + \la\left(-1 + \left(\tfrac{r}{4} - \tfrac{1}{4r}\right) (\log r) (1-x^*)\right).
  \end{align*}
  \fi
  Finally notice that $r/4 - 1/2 = 1/2-1/(4r) = \sqrt{3}/4$, $r/4-1/(4r) = \sqrt{3}/2$, and $x^* = 1/(1+2y) = 1/(1+2\rho(r^2))$.
\end{proof}

\section{An open problem} \label{sec:open}

At the moment, the Dueck zero-error capacity region $\Of$ of the binary adder channel is far from fully understood. \cref{thm:exact} gives the exact value of
\[
  \sup\dset{\frac12 R_1 + \frac12 R_2}{(R_1, R_2) \in \Of}.
\]
In general, the \emph{weighted} average zero-error capacity of the binary adder channel with complete feedback is defined by
\[
  R(\Of; c_1, c_2) := \sup\dset{c_1 R_1 + c_2 R_2}{(R_1, R_2) \in \Of},
\]
for every $(c_1, c_2) \in \Delta^1$. One can derive from \cref{thm:dueck-rv} that $R(\Of; c_1, c_2)$ is given by the optimum of an optimization problem with $n \in \mathbb{N}$ as a variable. Our numerical experiments suggest that $n=1$ readily gives the optimal value.

\begin{conjecture}
  For every $(c_1, c_2) \in \Delta^1$, the weighted average zero-error capacity $R(\Of; c_1, c_2)$ of the binary adder channel with complete feedback equals the optimum of the following optimization problem.
  \ifieee
  \begin{align*}
    \text{Maximize: } & c_1 H_2(a, \aaa) + c_2 H_2(b,\bb), \\
  \text{subject to: } & S_3\left(a\bb + \aaa b + 2c\right) \ge S_4(a, b, c) \\
  & \text{for every }c \in [-\min(a\bb, \aaa b), \min(ab, \aaa\bb)], \\
  & a \in [0,1], b \in [0,1].
  \end{align*}
  \else
  \begin{align*}
    \text{Maximize: } & c_1 H_2(a, \aaa) + c_2 H_2(b,\bb), \\
  \text{subject to: } & S_3\left(a\bb + \aaa b + 2c\right) \ge S_4(a, b, c) \text{ for every }c \in [-\min(a\bb, \aaa b), \min(ab, \aaa\bb)], \\
  & a \in [0,1], b \in [0,1].
  \end{align*}
  \fi
\end{conjecture}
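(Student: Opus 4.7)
The plan is to mirror the three-step strategy of the proof of \cref{thm:exact}. The first step is to derive a weighted analog of \cref{thm:dueck-concrete}: for $(c_1, c_2) \in \Delta^1$, one has
\[
  R(\Of; c_1, c_2) = \sup\left(c_1 \sumn p_iH_2(a_i, \aaai) + c_2 \sumn p_iH_2(b_i, \bbi)\right)
\]
where the supremum ranges over $n \in \mathbb{N}$, $\spn \in \Delta^{n-1}$ and $\san, \sbn \in [0,1]$ subject to the same constraint \eqref{eqn:h3-h4}. Since \eqref{eqn:h3-h4} does not involve $(c_1, c_2)$, the arguments of \cref{sec:reformulation,sec:simplification} go through with only cosmetic changes, and in particular \cref{lem:q-elim} lets one replace \eqref{eqn:h3-h4} by the single inequality $S_3(x^*) \ge \sumnphabcxs$. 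The conjecture is then that this reformulated supremum is attained at $n = 1$.

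For the second step I would introduce the Lagrangian
\[
  L := c_1 \sumn p_iH_2(a_i,\aaai) + c_2 \sumn p_iH_2(b_i,\bbi) + \la\left(S_3(x^*)-\sumnphabcxs\right)
\]
and repeat the critical-point analysis of \cref{thm:lag}. The analog of \cref{claim:partial-l} gives, at any interior critical point,
\[
  \frac{(e_{1i}e_{2i})^{\la/2}}{(e_{1i}+e_{2i})^{c_1}} = \frac{(e_{3i}e_{4i})^{\la/2}}{(e_{3i}+e_{4i})^{c_1}}, \quad \frac{(e_{1i}e_{3i})^{\la/2}}{(e_{1i}+e_{3i})^{c_2}} = \frac{(e_{2i}e_{4i})^{\la/2}}{(e_{2i}+e_{4i})^{c_2}},
\]
together with the $i$-independent relation $\sqrt{e_{1i}e_{4i}}/\sqrt{e_{2i}e_{3i}} = (1-x^*)/(2x^*)$ coming from the definition of $c_{a_i,b_i}(x^*)$. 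A weighted generalization of \cref{lem:e1234} would then show that these three equations cut out a finite union of one-parameter families in $(e_{1i},e_{2i},e_{3i},e_{4i})$, each indexed by a single real branch parameter. Combined with $\partial L/\partial p_i = 0$ (forcing the contributions of all atoms on a given branch to coincide, as in Case~1 of the proof of \cref{thm:lag}), this would collapse the optimum onto a single branch, hence onto $n = 1$. Finally one chooses $\la = \la(c_1, c_2)$ so that the Lagrangian upper bound coincides with the $n = 1$ optimum.

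The main obstacle is the weighted \cref{lem:e1234}. In the symmetric case, setting $r_1 = e_1/e_4$ and $r_2 = e_2/e_3$ and cross-multiplying yields the clean identity
\[
  (r_1^{2\la} - r_1) r_2^\la (1-r_2) + r_1^\la (1-r_1)(r_2^{2\la} - r_2) = 0,
\]
which, by a sign-of-factors argument, forces $r_1 = 1$ or $r_2 = 1$. With $c_1 \neq c_2$ the exponents in the numerator and denominator no longer match, and the quadratic-in-$r_1^\la r_2^\la$ structure that drove the collapse is lost. Overcoming this likely requires either a two-variable monotonicity argument on the implicit curves defined by each equation separately, or a reformulation with two Lagrange multipliers $(\la_1, \la_2)$ paired with $R_1$ and $R_2$ that restores a symmetry between the two defining equations. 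A secondary difficulty is that no explicit closed form for $R(\Of; c_1, c_2)$ is conjectured, so matching the upper bound to an $n = 1$ witness requires a structural or continuity argument across $(c_1, c_2) \in \Delta^1$ rather than the explicit computation used in \cref{thm:exact}.
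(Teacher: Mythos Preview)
The statement you are attempting to prove is \emph{Conjecture~A} in the paper's concluding section, and the paper gives no proof of it: it is explicitly posed as an open problem, supported only by the remark that ``our numerical experiments suggest that $n=1$ readily gives the optimal value.'' There is therefore no proof in the paper to compare your proposal against.

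That said, your proposal is a sensible outline of how one might try to extend the paper's method, and you have correctly located the main difficulty. The reformulation step (weighted \cref{thm:dueck-concrete}) and the quantifier elimination (\cref{lem:q-elim}, \cref{thm:main}) indeed go through verbatim because the constraint \eqref{eqn:h3-h4} is independent of the weights $(c_1,c_2)$. The genuine obstruction, as you note, is the weighted analog of \cref{lem:e1234}: when $c_1\neq c_2$ the two critical-point equations no longer have matching exponents, the factorization that forces $r_1=1$ or $r_2=1$ disappears, and one cannot conclude that each atom falls into one of finitely many rigid families. Without that collapse, the $\partial L/\partial p_i=0$ argument does not reduce $n$ to $1$. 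Your suggested remedies (a two-multiplier Lagrangian, or a monotonicity argument on the implicit curves) are plausible directions but are not worked out, so the proposal remains a sketch rather than a proof --- which is consistent with the paper's own assessment that the problem is open.
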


Once the conjecture is established, it would be interesting to compute the optimal value of the above optimization problem. The explicit formula of $R(\Of; c_1, c_2)$ for $(c_1, c_2) \in \Delta^1$ would in turn yield a complete description of the Dueck zero-error capacity region $\Of$.

\section*{Acknowledgements}

This research was conducted while the first and second authors were participants and the third author was a mentor in the PRIMES-USA program of the MIT Mathematics Department. We thank Prof.\ Pavel Etingof, Dr.\ Slava Gerovitch and Dr.\ Tanya Khovanova for their role in advising the program. We thank Nikita Polyanskii and Rahul Thomas for inspirational discussion at the early stage of the project. We are very thankful to the referees for extensive comments on the manuscript.

\ifieee
\appendices
\else
\appendix
\fi

\section{Proof of \texorpdfstring{\cref{lem:phi-prop}}{Proposition 13}} \label{sec:f-prime-over-f}

Throughout this section, we fix $(a, b) \in [0,1]^2$, and we use the shorthand $\ph$ for $\phab$.

\begin{proof}[Proof of \cref{lem:phi-prop}\ref{lem:us-0}]
  Since \eqref{eqn:cab-prop-c-01} says that $\cab(0) = -\min(a\bb, \aaa b)$ and $\cab(1) = \min(ab, \aaa\bb)$, we obtain $\ph(0) = a\bb + \aaa b - 2\min(a\bb, \aaa b) = \abs{a-b} \ge 0$ and $\ph(1) = a\bb + \aaa b + 2\min(ab, \aaa \bb) - 1 = -\abs{a + b - 1} \le 0$. Furthermore, $\ph(0) = 0$ if and only if $a = b$, and $\ph(1) = 0$ if and only if $a + b = 1$.
\end{proof}

\begin{proof}[Proof of \cref{lem:phi-prop}\ref{lem:us-c2}]
  Recall that $c = \cab(x)$ is the unique solution to
  \begin{subequations} \label{eqn:phi-prop-cab}
    \ifieee
    \begin{gather}
      \begin{multlined}
        m(c, x) := (a\bb + c)(\aaa b+c)(1-x)^2 \\
        - (ab-c)(\aaa \bb - c)(2x)^2 = 0,
      \end{multlined} \\
      -\min(a\bb, \aaa b) \le c \le \min(ab, \aaa\bb).
    \end{gather}
    \else
    \begin{gather}
      m(c, x) := (a\bb + c)(\aaa b+c)(1-x)^2 - (ab-c)(\aaa \bb - c)(2x)^2 = 0, \\
      -\min(a\bb, \aaa b) \le c \le \min(ab, \aaa\bb).
    \end{gather}
    \fi
  \end{subequations}
  Because $\ph(x) = a\bb + \aaa b + 2\cab(x) - x$, it suffices to prove $\cab$ is continuously differentiable on $[0,1]$.
  
  When $ab\aaa\bb = 0$, it is easy to check that $c = 0$ is a solution to \eqref{eqn:phi-prop-cab}, hence $\cab(x) = 0$ and so
  \begin{equation} \label{eqn:ph-formula-special}
    \ph(x) = a\bb + \aaa b - x
  \end{equation}
  is clearly continuously differentiable on $[0,1]$.
  
  Hereafter, we consider the case where $(a, b) \in (0,1)^2$. One can compute
  \[
    \fpp{m}{c} = (a\bb + \aaa b + 2c)(1-x)^2 + (ab + \aaa\bb - 2c)(2x)^2.
  \]
  When $0 < x < 1$, \cref{lem:diff}\ref{item:diff-cab-diff-3} says that $-\min(a\bb, \aaa b) < \cab(x) < \min(ab,\aaa\bb)$, and so $\partial m/\partial c(\cab(x), x) > 0$. Moreover $\partial m/\partial c(\cab(0), 0) = \partial m/\partial c(-\min(a\bb, \aaa b), 0) = \abs{a-b}$, which is positive if $a \neq b$, and and $\partial m/\partial c(\cab(1), 1) = \partial m/\partial c(\min(ab, \aaa \bb), 1) = \abs{1-a-b}$, which is positive if $a + b \neq 1$. The implicit function theorem then implies that
  \begin{enumerate}[label=(\alph*)]
    \item if $a \neq b$ then $\cab$ is continuously differentiable on $[0,1)$; and
    \item if $a + b \neq 1$ then $\cab$ is continuously differentiable on $(0,1]$.
  \end{enumerate}

  It suffices to show that if $a = b$ then $\cab(x)$ is continuously differentiable about $x = 0$; and if $a + b = 1$ then $\cab(x)$ is continuously differentiable about $x = 1$. We break the rest of the argument into two cases.

  \ifieee
    \medskip
    \textit{Case 1:} $a = b$.
  \else
    \paragraph{Case 1: $a = b$.}
  \fi
  Observe that $c = c_{a,b}(x)$ is also the unique solution to
  \ifieee
  \begin{gather*}
    m_0(c, x) := (a\aaa + c)(1-x) - \sqrt{(a^2-c)(\aaa^2-c)}(2x) = 0, \\
    -a\aaa \le c \le \min(a^2, \aaa^2),
  \end{gather*}
  \else
  \[
    m_0(c, x) := (a\aaa + c)(1-x) - \sqrt{(a^2-c)(\aaa^2-c)}(2x) = 0, \quad -a\aaa \le c \le \min(a^2, \aaa^2),
  \]
  \fi
  whose partial derivative with respect to $c$ is
  \[
    \fpp{m_0}{c} = (1-x) + \frac{a^2 + \aaa^2 - 2c}{\sqrt{(a^2-c)(\aaa^2-c)}}x.
  \]
  When $0 < x < 1$, $\partial m_0/\partial c(\cab(x), x) > 0$ because $-a\aaa < \cab(x) < \min(a^2,\aaa^2)$ (via \cref{lem:diff}\ref{item:diff-cab-diff-3}). Moreover $\partial m_0/\partial c(\cab(0), 0) = \partial m_0/\partial c(-a\aaa, 0) = 1$. The implicit function theorem then implies that $\cab$ is continuously differentiable on $[0,1)$.

  \ifieee
    \medskip
    \textit{Case 2:} $a + b = 1$.
  \else
    \paragraph{Case 2: $a + b = 1$.}
  \fi
  Observe that $c = c_{a,b}(x)$ is also the unique solution to
  \ifieee
  \begin{gather*}
    m_1(c, x) := \sqrt{(a^2+c)(\aaa^2 + c)}(1-x) - (a\aaa - c)(2x), \\
    -\min(a^2, \aaa^2) \le c \le a\aaa,
  \end{gather*}
  \else
  \[
    m_1(c, x) := \sqrt{(a^2+c)(\aaa^2 + c)}(1-x) - (a\aaa - c)(2x), \quad -\min(a^2, \aaa^2) \le c \le a\aaa,
  \]
  \fi
  whose partial derivative with respect to $c$ is
  \[
    \fpp{m_1}{c} = \frac{a^2 + \aaa^2 + 2c}{2\sqrt{(a^2+c)(\aaa^2+c)}}(1-x) + 2x.
  \]
  When $0 < x < 1$, $\partial m_1/\partial c(\cab(x), x) > 0$ because $-a\aaa < \cab(x) < \min(a^2,\aaa^2)$ (via \cref{lem:diff}\ref{item:diff-cab-diff-3}). Moreover $\partial m_1/\partial c(\cab(1), 1) = \partial m_1/\partial c(a\aaa, 0) = 2$. The implicit function theorem then implies that $\cab$ is continuously differentiable on $(0,1]$.
\end{proof}

\begin{proof}[Proof of \cref{lem:phi-prop}\ref{lem:us-a}]
  In view of \eqref{eqn:ph-formula-special}, $\ph(x) = -x$ when $a = b \in \set{0,1}$. Hereafter we consider the case where $a = b \in (0,1)$. 

  When $x \neq 1/3$, $m(c, x)$ can be rewritten as a quadratic function in $c$:
  \ifieee
  \begin{multline*}
    m(c, x) = \left((1-x)^2 - (2x)^2\right)c^2 + ((ab + \aaa\bb)(2x)^2 \\
    + (a\bb + \aaa b)(1-x)^2)c + ab\aaa\bb\left((1-x)^2 - (2x)^2\right),
  \end{multline*}
  \else
  \[
    m(c, x) = \left((1-x)^2 - (2x)^2\right)c^2 + ((ab + \aaa\bb)(2x)^2 + (a\bb + \aaa b)(1-x)^2)c + ab\aaa\bb\left((1-x)^2 - (2x)^2\right),
  \]
  \fi
  whose solutions can be routinely computed as follows:
  \begin{equation} \label{eqn:plus-minus}
    c = \tfrac12\left(- (a\bb + \aaa b) + u \pm \sqrt{v}\right),
  \end{equation}
  where
  \[
    u = \frac{-4x^2}{(1+x)(1-3x)}, \quad v = u^2 - 2(a\bb + \aaa b)u + (a-b)^2.
  \]
  When $0 \le x < 1/3$, since $u \le 0$ and $c \ge -\min(a\bb, \aaa b) \ge -(a\bb + \aaa b)/2$, we choose the plus sign in \eqref{eqn:plus-minus}. Therefore for $0 \le x < 1/3$,
  \begin{equation} \label{eqn:phi-formula-zero-to-third}
    \ph(x) = u + \sqrt{v} - x.
  \end{equation}
  We compute the Maclaurin series of $u$ and $v$ as follows:
  \ifieee
  \begin{align*}
    u & = -4x^2 - 8x^3 + o(x^3), \\
    v & = u^2 - 4a\aaa u = 16a\aaa x^2 + 32a\aaa x^3 + o(x^3).
  \end{align*}
  \else
  \[
    u = -4x^2 - 8x^3 + o(x^3), \quad v = u^2 - 4a\aaa u = 16a\aaa x^2 + 32a\aaa x^3 + o(x^3).
  \]
  \fi
  As $a \in (0,1)$, we obtain the Maclaurin series of $\sqrt{v}$ and $\ph(x)$:
  \ifieee
    \begin{align*}
      \sqrt{v} &= 4\sqrt{a\aaa}x+4\sqrt{a\aaa}x^2+o(x^2), \\
      \ph(x) &= (4\sqrt{a\aaa}-1)x + (4\sqrt{a\aaa}-4)x^2 + o(x^2). 
    \end{align*}
  \else
  \begin{equation*}
    \sqrt{v} = 4\sqrt{a\aaa}x+4\sqrt{a\aaa}x^2+o(x^2), \quad \ph(x) = (4\sqrt{a\aaa}-1)x + (4\sqrt{a\aaa}-4)x^2 + o(x^2). \qedhere
  \end{equation*}
  \fi
\end{proof}

\begin{proof}[Proof of \cref{lem:phi-prop}\ref{lem:us-b}]
  Suppose $(a,b) \not\in \oooo$. We introduce
  \[
    s := a\bb + \aaa b, \quad t := \abs{a-b}.
  \]
  One can check that for $(a,b) \in [0,1]^2$,
  \[
    2(a\bb + \aaa b) = 1+\abs{a-b}^2-\abs{a+b-1}^2, \quad \abs{a+b-1} \le 1 - \abs{a-b},
  \]
  which imply that $2s \le 1+t^2$, and $2s \ge 1 + t^2 - (1-t)^2 = 2t$. Thus we can estimate $t$ and $s$ as follows:
  \[
    0 \le t \le 1, \quad t \le s \le \tfrac12(1+t^2).
  \]
  It is also easy to see that $(a,b) \not\in \oooo$ is equivalent to
  \[
    (s,t) \neq (0,0).
  \]

  We consider the solutions of the equation
  \[
    \ph(x) = x\ph'(x), \quad x \in (0,1],
  \]
  and we split the rest of the proof into three cases.

  \ifieee
    \medskip
    \textit{Case 1:} $x = 1/3$.
  \else
    \paragraph{Case 1: $x = 1/3$.}
  \fi
  It is easy to check that $c = 0$ is a solution to \eqref{eqn:phi-prop-cab}. Thus $\cab(1/3) = 0$, and so
  \[
    \ph(1/3) = a\bb + \aaa b - 1/3 = s - 1/3.
  \]
  One can compute
  \[
    \fpp{m}{x} = -2(a\bb + c)(\aaa b + c)(1-x) - 4(ab-c)(\aaa\bb-c)(2x).
  \]
  The implicit function theorem then implies that
  \[
    \cab'(1/3) = -\fpp{m}{x}(0,1/3)\bigg/\fpp{m}{c}(0, 1/3) = 9ab\aaa\bb.
  \]
  One can check that $4ab\aaa\bb = (a\bb + \aaa b)^2 - \abs{a-b}^2$. Thus $\cab'(1/3) = 9(s^2-t^2)/4$, and
  \[
    \ph'(1/3) = 2c'(1/3) - 1 = 9(s^2-t^2)/4 - 1.
  \]
  The following claim immediately shows that $\ph(x) > x\ph'(x)$ at $x = 1/3$.
  \setcounter{claim}{0}
  \begin{claim} \label{claim:third}
    For every $s, t$ such that $0 \le t \le 1$, $t \le s \le (1+t^2)/2$ and $(s,t) \neq (0,0)$,
    \[
      q := s - 3(s^2 - t^2)/4 > 0.
    \]
  \end{claim}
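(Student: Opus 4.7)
The plan is to exploit the rather generous constraints on $s$ and $t$ and rewrite $q$ as a manifestly positive combination. First, I would observe that $t \le s \le (1+t^2)/2$ together with $0 \le t \le 1$ forces $s \le (1+1)/2 = 1$, and $(s,t) \neq (0,0)$ combined with $t \le s$ forces $s > 0$ (because $s = 0$ would pin $t = 0$).

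Next, I would rewrite the quantity as
\[
  q = s - \tfrac{3}{4}(s^2 - t^2) = s\bigl(1 - \tfrac{3s}{4}\bigr) + \tfrac{3t^2}{4}.
\]
From $s \le 1$ we get $1 - \tfrac{3s}{4} \ge \tfrac{1}{4} > 0$, and combined with $s > 0$ the first summand is strictly positive; the second summand is nonnegative. Hence $q > 0$.

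There is essentially no obstacle here: the inequality is so loose (the tightest feasibility range for $s$ is $[0,1]$, well inside the interval $[0, 4/3)$ where $s(1 - 3s/4) \ge 0$) that the only real task is to notice that the hypothesis $(s,t) \neq (0,0)$ prevents the trivial zero. In particular, no case analysis on whether $s^2 \ge t^2$ is needed, since the decomposition above keeps the signs transparent.
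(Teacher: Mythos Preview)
Your proof is correct and in fact cleaner than the paper's. The paper proceeds by case analysis: when $t=0$ it notes $q = s(1-3s/4) > 0$ for $0 < s \le 1/2$; when $t\in(0,1]$ it uses that $q$ is concave in $s$, so it suffices to check the endpoints $s=t$ (giving $q=t>0$) and $s=(1+t^2)/2$ (giving $q=(1+3t^2)(5-t^2)/16>0$). Your decomposition $q = s(1-3s/4)+3t^2/4$, together with the observation $0 < s \le 1$, dispatches everything in one stroke without splitting on $t$ or evaluating endpoints. The paper's endpoint-checking pattern is what it reuses in the harder Claims later in the same proof, which may explain why it was written that way here too; your argument is more direct for this particular inequality.
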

  \begin{claimproof}[Proof of \cref{claim:third}]
    When $t = 0$, $q = s(1-3s/4)$, which clearly is positive for $0 < s \le 1/2$. Hereafter we deal with the case where $t \in (0,1]$. Since $q$ is a quadratic polynomial of $s$ with a negative leading coefficient, it suffices to check $q > 0$ for $s \in \set{t, (1+t^2)/2}$. When $s = t$, $q = t$ which is positive; when $s = (1+t^2)/2$, $q = (1+3t^2)(5-t^2)/16$ which is also positive.
  \end{claimproof}
  
  \ifieee
    \medskip
    \textit{Case 2:} $x \in (0,1/3)$.
  \else
    \paragraph{Case 2: $x \in (0,1/3)$.}
  \fi
  Recall from \eqref{eqn:phi-formula-zero-to-third} that for $x \in (0,1/3)$
  \[
    \ph(x) = u + \sqrt{v} - x,
  \]
  where
  \begin{equation} \label{eqn:u-v}
    u = \frac{-4x^2}{(1+x)(1-3x)}, \quad v = u^2 - 2su + t^2.
  \end{equation}
  We can compute
  \begin{equation} \label{eqn:u-v-prime}
    u' = \frac{-8x(1-x)}{(1+x)^2(1-3x)^2}, \quad v' = 2uu' - 2su'.
  \end{equation}

  We can then rewrite the equation $\ph(x) = x\ph'(x)$ as
  \[
    u + \sqrt{v}-x = x(u' + (\sqrt{v})' -1).
  \]
  After multiplying both sides by $2\sqrt{v}$, using the fact that $2\sqrt{v}(\sqrt{v})' = v'$, we can rearrange the above equation to
  \[
    2\left(u-xu'\right)\sqrt{v} = -2v + xv'.
  \]
  After completely replacing $u, u', v, v'$ according to \eqref{eqn:u-v} and \eqref{eqn:u-v-prime} except for $\sqrt{v}$, we obtain after simplification that
  \ifieee
  \begin{multline*}
    \frac{8x^2(1+3x^2)}{(1+x)^2(1-3x)^2}\sqrt{v} \\
    = \frac{32x^4(1+3x^2)}{(1+x)^3(1-3x)^3}+\frac{16sx^3(1+3x)}{(1+x)^2(1-3x)^2}-2t^2,
  \end{multline*}
  \else
  \[
    \frac{8x^2(1+3x^2)}{(1+x)^2(1-3x)^2}\sqrt{v} = \frac{32x^4(1+3x^2)}{(1+x)^3(1-3x)^3}+\frac{16sx^3(1+3x)}{(1+x)^2(1-3x)^2}-2t^2,
  \]
  \fi
  which after multiplying by $(1+x)^2(1-3x)^2/(8x^2)$ gives
  \ifieee
  \begin{align*}
    \left(1+3x^2\right)\sqrt{v} & \stackrel{\phantom{\eqref{eqn:u-v}}}{=} \frac{4x^2\left(1+3x^2\right)}{(1+x)(1-3x)} + 2sx(1+3x) \\
    & \hspace{1.5em} - \frac{t^2(1+x)^2(1-3x)^2}{4x^2} \\
    & \stackrel{\eqref{eqn:u-v}}{=} -(1+3x^2)u + 2sx(1+3x) - \frac{4t^2x^2}{u^2}.
  \end{align*}
  \else
  \begin{align*}
    \left(1+3x^2\right)\sqrt{v} & \stackrel{\phantom{\eqref{eqn:u-v}}}{=} \frac{4x^2\left(1+3x^2\right)}{(1+x)(1-3x)} + 2sx(1+3x) - \frac{t^2(1+x)^2(1-3x)^2}{4x^2} \\
    & \stackrel{\eqref{eqn:u-v}}{=} -(1+3x^2)u + 2sx(1+3x) - \frac{4t^2x^2}{u^2}.
  \end{align*}
  \fi
  The following claim immediately shows that $\ph(x) \neq x\ph'(x)$ for $x \in (0,1/3)$.
  \begin{claim} \label{claim:le-third}
    For every $x \in (0,1/3)$ and $s, t$ such that $0 \le t \le 1$, $t \le s \le (1+t^2)/2$ and $(s,t)\neq(0,0)$,
    \[
      \left(1+3x^2\right)\sqrt{v} > -(1+3x^2)u + 2sx(1+3x).
    \]
  \end{claim}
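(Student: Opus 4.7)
The strategy is to square both sides of the claimed inequality and then reduce the resulting polynomial inequality via two successive applications of the fact that a downward-opening quadratic which is positive at both endpoints of an interval is positive throughout. For $x \in (0, 1/3)$ we have $u < 0$, so both sides of the claim are nonnegative: the RHS equals $-(1+3x^2) u + 2sx(1+3x) > 0$, and the LHS $(1+3x^2)\sqrt{v}$ is visibly nonneg. Squaring and expanding using $v = u^2 - 2su + t^2$ and the identity $u(3x-1)(x+1) = 4x^2$ (obtained by clearing denominators in the formula for $u$), one obtains after simplification
\[
  A^2 - B^2 = t^2(1+3x^2)^2 + 8s x^2(1+3x^2) - 4s^2 x^2 (1+3x)^2 =: P(s, t, x),
\]
where $A$ and $B$ denote the two sides of the claim; the goal becomes $P > 0$ on the admissible region.

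Now $P$ is a quadratic in $s$ with negative leading coefficient $-4x^2(1+3x)^2$, so the set $\{s : P(s, t, x) > 0\}$ is an open interval. To place the admissible range $s \in [0, (1+t^2)/2]$ strictly inside this interval (with the single degenerate point $(s, t) = (0, 0)$ excluded by hypothesis), it suffices to verify $P(0, t, x) \ge 0$ and $P((1+t^2)/2, t, x) > 0$. The former holds at once: $P(0, t, x) = t^2(1+3x^2)^2 \ge 0$, with equality only at $(s, t) = (0, 0)$. For the latter, expanding with the identities $4(1+3x^2) - (1+3x)^2 = 3(1-x)^2$ and $2(1+3x^2) - (1+3x)^2 = 1 - 6x - 3x^2$ yields
\[
  P\bigl(\tfrac{1+t^2}{2}, t, x\bigr) = 3x^2(1-x)^2 + t^2(1 + 8x^2 - 12x^3 + 3x^4) - x^2(1+3x)^2 t^4 =: Q(t^2, x).
\]

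The final step is to show $Q(T, x) > 0$ for $T = t^2 \in [0, 1]$. Since $Q$ is again a downward-opening quadratic in $T$, positivity at the two endpoints suffices. We have $Q(0, x) = 3x^2(1-x)^2 > 0$ on $(0, 1/3)$, while $Q(1, x) = 1 + 10x^2 - 24x^3 - 3x^4$ satisfies $Q(1, x) \ge 1 - 8/9 - 1/27 + 10x^2 = 2/27 + 10x^2 > 0$ on $[0, 1/3]$ (using the crude bounds $24x^3 \le 8/9$ and $3x^4 \le 1/27$). Therefore $Q > 0$ on $[0, 1]$, which gives $P > 0$ on the admissible region and proves the claim.

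I expect the main difficulty to be the symbolic bookkeeping: correctly deriving the clean closed form $A^2 - B^2 = P$ (where the key cancellation rides on $(3x-1)(x+1) u = 4x^2$) and subsequently the evaluation $P((1+t^2)/2, t, x) = Q(t^2, x)$. These are routine but lengthy manipulations, and one must be careful with signs since $u < 0$ and $(1-3x) > 0$ on the domain $x \in (0, 1/3)$.
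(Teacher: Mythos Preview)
Your proof is correct and follows essentially the same route as the paper: square both sides, exploit the identity $u(3x-1)(x+1)=4x^2$ to collapse $A^2-B^2$ to a quadratic in $s$ with negative leading coefficient, check the endpoints, and then iterate the same quadratic trick in $t^2$ at the upper endpoint $s=(1+t^2)/2$. The one substantive difference is your choice of lower endpoint: you test $s=0$ (slightly outside the admissible range $[t,(1+t^2)/2]$, but giving the instant verification $P(0,t,x)=t^2(1+3x^2)^2\ge 0$), whereas the paper tests $s=t$ and must check that $(8x^2+24x^4)$ and $(1+7x^2-24x^3-12x^4)$ are positive on $(0,1/3)$. Your shortcut is legitimate because a downward-opening parabola nonnegative at $0$ and strictly positive at $(1+t^2)/2$ is strictly positive on the whole open interval in between, which contains $[t,(1+t^2)/2]\setminus\{(0,0)\}$; it buys you a cleaner argument at no cost.
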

  \begin{claimproof}[Proof of \cref{claim:le-third}]
    Since the left hand side of the inequality is non-negative, squaring both sides strengthens the inequality, which after substituting $v$ according to \eqref{eqn:u-v} gives that
    \[
      \left(1+3x^2\right)(u^2-2su + t^2) > \left(-(1+3x^2)u + 2sx(1+3x)\right)^2.
    \]
    The difference between the two sides of the last inequality, after substituting $u$ according to \eqref{eqn:u-v}, equals
    \[
      q := 4sx^2\left(2+6x^2-s(1+3x)^2\right) + \left(1+3x^2\right)t^2.
    \]
    When $t = 0$, $q = 2sx^2(3(1-x)^2+(1-2s)(1+3x)^2)$ which is clearly positive for $0 < s \le 1/2$ and $x \in (0,1/3)$. Hereafter we deal with the case where $t \in (0,1]$. Since $q$ is a quadratic polynomial of $s$ with a negative leading coefficient, it suffices to check $q > 0$ for $s \in \set{t,(1+t^2)/2}$. When $s = t$,
    \[
      q = t\left(\left(8x^2+ 24x^4\right)(1-t) + \left(1+7x^2-24x^3-12x^4\right)t\right).
    \]
    One can check that both $8x^2 + 24x^4$ and $1+7x^2-24x^3-12x^4$ are positive for $x \in (0,1/3)$. Thus $q > 0$ for $x \in (0,1/3)$. When $s = (1+t^2)/2$,
    \ifieee
    \begin{multline*}
      q = (1+t^2)x^2\left(4+12x^2-(1+t^2)(1+3x)^2\right) \\
      + \left(1+3x^2\right)t^2,
    \end{multline*}
    \else
    \[
      q = (1+t^2)x^2\left(4+12x^2-(1+t^2)(1+3x)^2\right) + \left(1+3x^2\right)t^2,
    \]
    \fi
    which can be seen as a quadratic polynomial of $t^2$ with a negative leading coefficient. It suffices to check $q > 0$ for $t^2 = 0$ and $t^2 = 1$. Note that
    \[
      q = \begin{cases}
        3x^2(1-x)^2 & \text{when }t = 0; \\
        1+6x^2-6x^3+3x^4 & \text{when }t = 1.
      \end{cases}
    \]
    In either case, one can check that $q > 0$ for $x \in (0,1/3)$.
  \end{claimproof}
  
  \ifieee
    \medskip
    \textit{Case 3:} $x \in (1/3, 1]$.
  \else
    \paragraph{Case 3: $x \in (1/3, 1]$.}
  \fi
  Recall from \eqref{eqn:plus-minus} that for $x \neq 1/3$,
  \[
    \cab(x) = \tfrac12(-(a\bb + \aaa b) + u \pm \sqrt{v}) = \tfrac12(ab + \aaa\bb - 1 + u \pm \sqrt{v}).
  \]
  When $1/3 < x \le 1$, since $u > 1$ and $\cab(x) \le \min(ab, \aaa\bb) \le (ab+\aaa\bb)/2$, we choose the minus sign in the above formula. Therefore for $x \in (1/3, 1]$
  \[
    \ph(x) = u - \sqrt{v} - x.
  \]
  
  We can then rewrite the equation $\ph(x) = x \ph'(x)$ as
  \[
    u - \sqrt{v}-x = x(u' - (\sqrt{v})' - 1).
  \]
  After multiplying both sides by $2\sqrt{v}$, using the fact that $2\sqrt{v}(\sqrt{v})' = v'$, we can rearrange the above equation to
  \[
    2\left(u-xu'\right)\sqrt{v} = 2v - xv',
  \]
  which, by comparing with the corresponding computation in Case~2, is equivalent to
  \[
    \left(1+3x^2\right)\sqrt{v} = (1+3x^2)u - 2sx(1+3x) + \frac{4t^2x^2}{u^2}.
  \]
  
  The following claim immediately implies that $\ph(x) = x\ph'(x)$ could possibly have a solution in $(1/3, 1]$ only when $t = 1$ or $(s,t) = (1/2,0)$, which corresponds to $(a,b) \in \set{(1/2, 1/2),(0,1),(1,0)}$. However one can compute that $\ph_{1/2,1/2}(x) = x(1-x)/(1+x)$ and $\ph_{0,1}(x) = \ph_{1,0}(x) = 1-x$, and one can check directly that in neither case $\ph = x\ph'$ has a solution in $(0,1]$.

  \begin{claim} \label{claim:ge-third}
    For every $x \in (1/3,1]$, $s, t$ such that $0 \le t < 1$, $t \le s \le (1+t^2)/2$ and $(s,t) \not\in \set{(0,0), (1/2, 0)}$,
    \[
      \left(1+3x^2\right)^2v > \left((1+3x^2)u - 2sx(1+3x) + \frac{4t^2x^2}{u^2}\right)^2.
    \]
  \end{claim}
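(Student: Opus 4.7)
The plan is to mirror the approach taken in the proof of \cref{claim:le-third}. Let
\[
  q := (1+3x^2)^2 v - \left((1+3x^2)u - 2sx(1+3x) + \frac{4t^2x^2}{u^2}\right)^2
\]
denote the difference of the two sides of the claimed inequality. Substituting the explicit formula $u = -4x^2/((1+x)(1-3x))$ turns $q$ into a rational function of $x, s, t$ whose denominator is a power of $(1+x)(1-3x)$, which is nonzero on $(1/3, 1]$; after clearing denominators we obtain an equivalent polynomial inequality $P(x, s, t) > 0$. Because $v = u^2 - 2su + t^2$ is linear in $s$ and the squared expression is quadratic in $s$, $P$ is a quadratic polynomial in $s$.

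I would next verify that the coefficient of $s^2$ in $P$ is negative on $x \in (1/3, 1]$, which reduces the task to checking $P > 0$ at the two endpoints $s = t$ and $s = (1+t^2)/2$ of the admissible range of $s$. At $s = t$ with $t = 0$ we are at the excluded point $(s,t) = (0,0)$, so assume $t \in (0, 1)$; the polynomial $P(x, t, t)$ should factor with $t$ as a factor, and the remaining cofactor should be manifestly positive on $(1/3, 1] \times (0, 1)$, which I would confirm by treating it as a polynomial in $t$ and evaluating at the boundary values $t = 0$ and $t = 1$. At $s = (1+t^2)/2$ with $t = 0$ we are at the excluded point $(s,t) = (1/2, 0)$; for $t \in (0, 1)$ an analogous analysis, viewing $P(x, (1+t^2)/2, t)$ as a polynomial in $t^2$ and checking the endpoints $t^2 = 0$ and $t^2 = 1$, should give positivity.

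The main obstacle is the algebraic bookkeeping. I anticipate that, after expansion, each endpoint expression either factors into a product of terms each manifestly positive on $x \in (1/3, 1]$ or can be written as a sum of squares plus a positive remainder. The exclusions $(s,t) \in \set{(0,0), (1/2, 0)}$ should be exactly the values at which these factorizations degenerate into an equality, so their exclusion is essential for the strict inequality $P > 0$; correctly identifying and tracking these degenerations at the boundary $t = 0$ is the delicate step, and the proof will succeed so long as the endpoint polynomials in $x$ alone are positive on $(1/3, 1]$, which can in principle be certified by inspection of their coefficients or by elementary calculus.
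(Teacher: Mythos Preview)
Your proposal is correct and matches the paper's proof essentially step for step: define $q$ as the difference, note it is quadratic in $s$ with negative leading coefficient, reduce to the endpoints $s=t$ and $s=(1+t^2)/2$, and at each endpoint analyze the dependence on $t$ (linear in $t$ after factoring at $s=t$, quadratic in $t^2$ at $s=(1+t^2)/2$) by checking the boundary values $t=0$ and $t=1$. The only detail you did not anticipate is that at $s=t$ the factorization also produces a perfect square $\bigl(t(1-x)^2+4(1-t)x^2\bigr)^2$, but this only makes positivity easier; the excluded points $(0,0)$ and $(1/2,0)$ are exactly where the endpoint expressions vanish, as you correctly predicted.
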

  \begin{claimproof}[Proof of \cref{claim:ge-third}]
    After substituting $v$ according to \eqref{eqn:u-v}, the difference between the two sides of the last inequality equals
    \ifieee
    \begin{multline*}
      q := \left(1+3x^2\right)^2(u^2-2su+t^2) \\ - \left((1+3x^2)u - 2sx(1+3x) + \frac{4t^2x^2}{u^2}\right)^2.
    \end{multline*}
    \else
    \[
      q := \left(1+3x^2\right)^2(u^2-2su+t^2) - \left((1+3x^2)u - 2sx(1+3x) + \frac{4t^2x^2}{u^2}\right)^2.
    \]
    \fi
    Since $q$ is a quadratic polynomial of $s$ with a negative leading coefficient, it suffices to check $q \ge 0$ when $t = 0$ and $s \in \set{0,1/2}$, and $q > 0$ when $t \in (0,1)$ and $s \in \set{t, (1+t^2)/2}$.

    When $s = t$, we have
    \ifieee
    \begin{align*}
      q & = \left((1+3x^2)(u-t)\right)^2 \\
      & \phantom{{}={}} - \left((1+3x^2)u - 2tx(1+3x) + \frac{4t^2x^2}{u^2}\right)^2 \\
      & = \left(t(3x-1)(x+1) - \frac{4t^2x^2}{u^2}\right) \\
      & \phantom{{}={}} \cdot \left(2(1+3x^2)u - t(9x^2+2x)+\frac{4t^2x^2}{u^2}\right),
    \end{align*}
    \else
    \begin{align*}
      q & = \left((1+3x^2)(u-t)\right)^2 - \left((1+3x^2)u - 2tx(1+3x) + \frac{4t^2x^2}{u^2}\right)^2 \\
      & = \left(t(3x-1)(x+1) - \frac{4t^2x^2}{u^2}\right)\left(2(1+3x^2)u - t(9x^2+2x)+\frac{4t^2x^2}{u^2}\right),
    \end{align*}
    \fi
    which after substituting $u$ according to \eqref{eqn:u-v} can be further factorized into
    \ifieee
    \begin{multline*}
      q = t\biggl(t(1-x)^2+4(1-t)x^2\biggr)^2\biggl(\left(8x^2+24x^4\right)(1-t)\\
      +(-1+4x+10x^2-12x^3+15x^4)t\biggr)\bigg/(16x^4).
    \end{multline*}
    \else
    \[
      q = t\biggl(t(1-x)^2+4(1-t)x^2\biggr)^2\biggl(\left(8x^2+24x^4\right)(1-t)+(-1+4x+10x^2-12x^3+15x^4)t\biggr)\bigg/(16x^4).
    \]
    \fi
    One can check that both $8x^2+24x^4$ and $-1+4x+10x^2-12x^3+15x^4$ are positive for $x \in (1/3, 1]$. Thus $q = 0$ when $t = 0$, and $q > 0$ when $t \in (0,1)$.

    When $s = (1+t^2)/2$, we have
    \ifieee
    \begin{multline*}
      q = \left(1+3x^2\right)^2(u^2-(1+t^2)u+t^2) \\ - \left((1+3x^2)u - (1+t^2)x(1+3x) + t^2\frac{4x^2}{u^2}\right)^2,
    \end{multline*}
    \else
    \[
      q = \left(1+3x^2\right)^2(u^2-(1+t^2)u+t^2) - \left((1+3x^2)u - (1+t^2)x(1+3x) + t^2\frac{4x^2}{u^2}\right)^2,
    \]
    \fi
    which can be seen as a quadratic polynomial of $t^2$. We compute $q$ for $t = 0$ and $t = 1$. Note that, after substituting $u$ according to \eqref{eqn:u-v},
    \ifieee
    \[
      q = \begin{dcases}
        48(1-x)^2x^6 & \text{when }t = 0; \\
        \begin{multlined}
          (1-x)^4(-1+4x+10x^2\\-12x^3+15x^4)
        \end{multlined} & \text{when }t = 1. \\
      \end{dcases}
    \]
    \else
    \[
      q = \begin{cases}
        48(1-x)^2x^6 & \text{when }t = 0; \\
        (1-x)^4(-1+4x+10x^2-12x^3+15x^4) & \text{when }t = 1. \\
      \end{cases}
    \]
    \fi
    One can check that $-1+4x+10x^2-12x^3+15x^4 > 0$ for $x \in (1/3,1]$. Thus in each case $q \ge 0$, which implies that $q \ge 0$ for $t = 0$ and $q > 0$ for $t \in (0,1)$.
  \end{claimproof}

  This finishes the proof of the three cases in \cref{lem:phi-prop}\ref{lem:us-b}.
\end{proof}

\ifieee
\IEEEtriggeratref{10}
\bibliographystyle{IEEEtran}
\else
\bibliographystyle{plain}
\fi
\bibliography{zero_error}

\ifieee
\begin{IEEEbiographynophoto}{Samuel H.~Florin} is an undergraduate student at Massachusetts Institute of Technology currently planning on studying mathematics and computer science. He plans to graduate in 2025. His interests currently include information theory, quantum computing, and artificial intelligence.
\end{IEEEbiographynophoto}
\begin{IEEEbiographynophoto}{Matthew H.~Ho} is currently an undergraduate student at Massachusetts Institute of Technology who is studying mathematics. He plans to graduate in 2025. His research instests lie in applied mathematics and theoretical computer science.
\end{IEEEbiographynophoto}
\begin{IEEEbiographynophoto}{Zilin Jiang} is an Assistant Professor at Arizona State University. He was a postdoctoral fellow at Technion--Israel Institute of Technology between 2016 and 2018, and he was then an applied mathematics instructor at Massachusetts Institute of Technology up until the end of 2020. He graduated from Peking University in 2011 with a B.Sc.\ degree in Mathematics, and he received his Ph.D.\ in Algorithm, Combinatorics and Optimization from Carnegie Mellon University in 2016.
\end{IEEEbiographynophoto}
\vfill
\fi

\end{document}